\definecolor{fmlinks}{RGB}{48,92,214}
\def\entropy{entropy\xspace}
\def\entof#1#2{\mathsf{Ent}_{(#1,#2)}}
\def\crof#1#2{\mathsf{Core}_{(#1,#2)}}
\def\crxof#1#2{\mathsf{rdx}_{(#1,#2)}}
\def\crtof#1#2{\mathsf{rdt}_{(#1,#2)}}
\def\tree{\mathcal T}
\def\comptreeof#1{\mathsf{Ctree}(#1)}
\def\comptree{execution tree\xspace}
\def\comptrees{execution trees\xspace}
\def\N{\mathbb N}
\def\set#1{\{#1\}}
\def\Set#1{\left\{#1\right\}}
\def\tuple#1{\langle#1\rangle}
\def\intset#1#2{\set{#1,\ldots,#2}}
\def\xs{x_1,\ldots,x_k}
\renewcommand{\emptyset}{\varnothing}
\def\CCS{\textsf{CCS}\xspace}
\def\flat{flat\xspace}
\def\prochor{projectable\xspace}
\def\sequential{sequential\xspace}
\def\sproc{\sequential \proc\xspace}
\def\sprocs{\sequential \procs\xspace}
\def\seqcomp{sequential component\xspace}
\def\seqcomps{sequential components\xspace}
\def\cflat{flat\xspace}
\def\picalc{$\pi$-calculus\xspace}
\def\opsem{reduction semantics\xspace}
\def\Opsem{Reduction semantics\xspace}
\def\OpSem{Reduction Semantics\xspace}
\def\defn#1{{\itshape\bfseries\boldmath #1}}
\def\rclr#1{{\color{red}#1}}
\def\bclr#1{{\color{blue}#1}}
\def\sclr#1{{\color{magenta}#1}}
\def\resp#1{(resp.~{#1})\xspace}
\newcommand{\mydot}{\;.}
\newcommand{\qquand}{\qquad\mbox{and}\qquad}
\newcommand{\quor}{\quad\mbox{or}\quad}
\newcommand{\qomma}{\;,\;}
\def\myparagraph#1{\textbf{#1.}\xspace}
\def\IH{\mathsf{IH}}
\def\proc{process\xspace}
\def\procs{processes\xspace}
\def\fproc{\flat process\xspace}
\def\fprocs{\flat processes\xspace}
\def\eproc{endpoint process\xspace}
\def\eprocs{endpoint processes\xspace}
\def\subproc{sub-process\xspace}
\def\netcont{network context\xspace}
\def\nuparcont{$\lnewsymb \lpar$-context \xspace}
\def\rfree{race-free\xspace}
\def\rfreedom{race-freedom\xspace}
\def\controlName#1{#1}
\def\stuck{\controlName{stuck}\xspace}
\def\lfree{\controlName{deadlock-free}\xspace}
\def\lfreedom{\controlName{deadlock-freedom}\xspace}
\def\LFreedom{\controlName{Deadlock-Freedom}\xspace}
\def\dfree{\controlName{deadlock-free}\xspace}
\def\dfreedom{\controlName{deadlock-freedom}\xspace}
\def\progress{\controlName{progress}\xspace}
\DeclareRobustCommand{\btleft}{\mathrel{\mathpalette\btlr@\blacktriangleleft}}
\DeclareRobustCommand{\btright}{\mathrel{\mathpalette\btlr@\blacktriangleright}}
\newcommand{\btlr@}[2]{%
  \begingroup
  \sbox\z@{$\m@th#1\triangleright$}%
  \sbox\tw@{\resizebox{1.1\wd\z@}{1.1\ht\z@}{\raisebox{\depth}{$\m@th#1\mkern-1mu#2$}}}%
  \ht\tw@=\ht\z@ \dp\tw@=\dp\z@ \wd\tw@=\wd\z@
  \copy\tw@
  \endgroup
}
\newcommand{\varSet}{\mathcal X}
\def\cneg#1{{#1}^\lbot}
\def\lprec{\btleft}
\def\unit{\circ}
\def\free{\mathsf{free}}
\def\freeof#1{\free(#1)}
\newcommand{\lsplus}[1][]{
	\mathop{
		\vphantom{\bigoplus}
		\mathchoice
		{\vcenter{\hbox{\resizebox{\widthof{$\displaystyle\bigoplus$}}{!}{$\boxplus$}}}}
		{\vcenter{\hbox{\resizebox{\widthof{$\bigoplus$}}{!}{$\boxplus$}}}}
		{\vcenter{\hbox{\resizebox{\widthof{$\scriptstyle\oplus$}}{!}{$\boxplus$}}}}
		{\vcenter{\hbox{\resizebox{\widthof{$\scriptscriptstyle\oplus$}}{!}{$\boxplus$}}}}
	}\displaylimits
}
\def\multioplus#1#2{\bigoplus_{#1}^{#2}}
\def\lsend#1#2{\langle#1\oc#2\rangle}
\def\lrecv#1#2{(#1\wn#2)}
\DeclareRobustCommand{\lnewsymb}{%
	\mathord{\text{\usefont{X2}{cmr}{m}{n}\symbol{200}}}%
}
\DeclareRobustCommand{\lwensymb}{%
	\mathord{\text{\usefont{X2}{cmr}{m}{n}\symbol{221}}}%
}
\DeclareRobustCommand{\lyasymb}{%
	\mathord{\text{\usefont{X2}{cmr}{m}{n}\symbol{223}}}%
}
\DeclareRobustCommand{\lqusymb}{%
	\mathord{\text{\raisebox{.1em}{\rotatebox[origin=c]{180}{$\mathsf{Q}$}}}}%
}
\def\lNu#1#2{\lnewsymb{#1}.#2}
\def\lYa#1#2{\lyasymb{#1}.#2}
\def\lFa#1#2{\forall{#1}.#2}
\def\lEx#1#2{\exists{#1}.#2}
\def\lQu#1#2{\lqusymb{#1}.#2}
\def\lNup#1#2{\lNu{#1}{\!\left( #2 \right)}}
\def\lExp#1#2{\lEx{#1}{\!\left( #2 \right)}}
\def\lNus#1#2{\widetilde{\lnewsymb{#1}}.{#2}}
\def\lYas#1#2{\widetilde{\lyasymb{#1}}.{#2}}
\newcommand{\biglbra}[1][]{\bigwith\limits_{#1}}
\newcommand{\biglsel}[1][]{\bigoplus\limits_{#1}}
\def\feq{\lleq}
\def\chole{\bullet}
\newcommand{\ctx}[1][\chole]{[#1]}
\newcommand{\Ctx}[1][]{\left[#1\right]}
\newcommand{\fof}[1]{\left[\!\left[ #1 \right]\!\right]}
\newcommand{\fofp}[1]{\left[\!\left[ #1 \right]\!\right]^\namesset}
\newcommand{\ffof}[1]{\left\lfloor\!\left\lfloor#1\right\rfloor\!\right\rfloor}
\newcommand{\fofi}[2][\xs]{\partial_{#1} \fof{#2}}
\newcommand{\chCof}[1][\cdot]{\chorpol{\mathsf{Chor}{\left( {\color{black} #1}\right)}}}
\def\alphaeq{\equiv_{\alpha}}
\def\namesset{\mathcal N}
\def\procset{\mathcal P}
\def\varset{\mathcal V}
\def\labelsset{\mathcal L}
\def\lab{\ell}
\def\pnu#1{(\nu #1)}
\newcommand{\pnus}[2][k]{\pnu {#2_1\dots #2_#1}}
\def\psend#1#2{#1!\langle#2\rangle.}
\def\precv#1#2{#1?(#2).}
\def\Apsend#1#2{#1!\langle#2\rangle}
\def\Aprecv#1#2{#1?(#2)}
\def\ppar{\;|\;}
\def\ppars{\ppar\cdots\ppar}
\def\pnil{\mathsf{Nil}}
\def\plsend#1#2#3{#1 \lseq \Set{#2}_{#3}}
\def\plrecv#1#2#3{#1 \lcoseq \Set{#2}_{#3}}
\def\Plsend#1#2#3#4{\plsend{#1}{#2:#3_#2}{#2\in #4}}
\def\Plrecv#1#2#3#4{\plrecv{#1}{#2:#3_#2}{#2\in #4}}
\newcommand{\named}[3][black]{{\color{#1}{\left(#2\right)}_{#3}}}
\def\fsubst#1#2{\left[#1/#2 \right]}
\def\boundnamesof#1{\mathcal{B}_{#1}}
\def\namesof#1{\mathcal N_{#1}}
\def\freenamesof#1{\mathcal{F}_{#1}}
\def\pnamesin#1{\mathsf{pn}(#1)}
\def\steq{\equiv}
\def\strew{\Rrightarrow}
\def\strews{\strew\!\!\!\!\!\strew}
\def\steqr{{\Lleftarrow\!\!\!\!\!\Rrightarrow}}
\def\cC{\mathcal C}
\def\cN{\mathcal N}
\def\cP{\mathcal P}
\def\cK{\mathcal K}
\def\xX{\mathcal X}
\def\merge{\sqcup}
\def\redsem{\to}
\def\redsems{\twoheadrightarrow}
\def\rscomr{\mathsf{Com}}
\def\rschoir{\mathsf{Choice}}
\def\rslabr{\mathsf{Label}}
\def\rsresr{\mathsf{Res}}
\def\rsparr{\mathsf{Par}}
\def\rsstrr{\mathsf{Struc}}
\def\rsstreqr{\mathsf{Struc}^\Rrightarrow}
\def\tcol#1{\rclr{#1}}
\def\tm{\tcol{\mu}}
\def\tcom#1#2#3{\tcol{#1\tto#2:#3}}
\def\tbra#1#2{\tcol{#1:#2}}
\def\chorpol#1{\bclr{#1}}
\def\chor{choreography\xspace}
\def\chors{choreographies\xspace}
\def\Chors{Choreographies\xspace}
\def\proloc#1#2{#1\colon\!\!\!\colon #2}
\def\chC{\chorpol C}
\def\chS{\chorpol S}
\def\chCrf{\chorpol {C^{\mathsf{rf}}}}
\def\cnu#1{\chorpol{\pnu#1}}
\newcommand{\cnus}[2][k]{\chorpol{\pnu{#2_{1} \dots #2_{#1}}}}
\newcommand{\EPP}[2][]{\mathsf{EPP}_{#1}\left(#2\right)}
\def\merge{\sqcup}
\def\Merge{\bigsqcup}
\def\chOrd{\sqsupseteq}
\def\chnil{\chorpol{\boldsymbol 0}}
\def\chorfsubst#1#2#3{\left[#1/#2_{#3} \right]}
\newcommand{\pid}[1]{\mathsf{#1}}
\newcommand{\tto}{\mathbin{\boldsymbol{\rightarrow}}}
\def\pp{{\pid p}}
\def\pq{{\pid q}}
\def\pr{\pid r}
\newcommand{\comc}[5]{\chorpol{\pid{#1}.{#2} \tto\pid{#3}.{#4}:#5}}
\newcommand{\choic}[6]{
	\comc{#1}{#2}{#3}{#4}{#5}{\Set{
			\begin{array}{l}
				#6
			\end{array}
	}}
}
\newcommand{\specchoics}[7]{\chorpol{\pid{#1}.{#4}\tto\pid{#2}.{#4'} :#3
		\Set{
			\begin{array}{l|@{\;}l}
				#5: {#6} & #5\in #4
				\\
				#5: {#7} &#5\in #4'\setminus #4
			\end{array}
		}
}}
\newcommand{\Choic}[6][C]{ \chorpol{\pid{#2}.{\set{#5}}\tto\pid{#3}.{#6'} :#4
		\Set{
			\begin{array}{l|@{\;}l}
				\lab: {#1}_{\lab} 	&\lab\in #6
				\\
				\lab: {S}_{\lab} 	&\lab\in #6'\setminus #6
			\end{array}
		}
}}
\newcommand{\Choics}[6][C]{\chorpol{\pid{#2}.{#6}\tto\pid{#3}.{#6'} :#4
	\Set{
		\begin{array}{l|@{\;}l}
			#5: #1_{#5} & #5\in #6
			\\
			#5: {S}_{#5} &#5\in #6'\setminus #6
		\end{array}
	}
}}
\newcommand{\gencom}{\comc {p}x{q}yk;}
\newcommand{\genchoi}{\Choic{p}{q}{k}{\lab_i}{L}}
\newcommand{\genchois}{\Choics{p}{q}{k}{\lab}{L}}
\def\seqEval#1#2{e \downarrow_\ldia v}
\def\ccomr{\chorpol{\mathsf{Com}}}
\def\cchoir{\chorpol{\mathsf{Choice}}}
\def\clabr{\chorpol{\mathsf{Label}}}
\def\cdelir{\chorpol{\mathsf{D\mhyphen Com}}}
\def\cdelcr{\chorpol{\mathsf{D\mhyphen Choice}}}
\def\crestr{\chorpol{\mathsf{Rest}}}
\def\osto#1{\xrightarrow{#1}}
\newcommand{\proves}[1][]{\mathord{\vdash_{#1}\,}}
\def\dD{\mathcal D}
\def\BV{\mathsf{BV}}
\def\MLL{\mathsf{MLL}}
\def\MALL{\mathsf{MALL}}
\def\muMALL{\mu\MALL}
\mathchardef\mhyphen="2D
\def\wcut{\cup\set{\cutr}}
\def\wF{^{1}}
\def\XS{\mathsf{X}}
\newcommand{\sS}[1][\mathcal S]{\sclr{ #1 }}
\newcommand{\sdash}[1][\sS]{\sS[{#1}] \vdash}
\def\LL{\mathsf{LL}}
\def\iLL{\mathsf{iLL}}
\def\ChorL{\mathsf{ChorL}}
\def\PIsys{\mathsf{PiL}}
\def\PIL{\mathsf{PiL}}
\def\NL{\PIsys}
\def\init{\mathsf{Init}}
\def\comr{\mathsf{Com}}
\def\brar{\mathsf{Bra}}
\def\selr{\mathsf{Sel}}
\def\Ecomr{\mathsf{E}\mhyphen\comr}
\def\Elsendr{\mathsf{E}\mhyphen\mathsf{Choice}}
\def\Elrecvr{\mathsf{E}\mhyphen\mathsf{Label}}
\def\Caxr{\chorpol{C}\mhyphen\init}
\def\Crestr{\chorpol{C}\mhyphen\mathsf{flat}}
\def\Ccomr{\chorpol{C}\mhyphen\comr}
\def\Cchor{\chorpol{C}\mhyphen\selr}
\def\axrule{\mathsf {ax}}
\def\cutr{\mathsf {cut}}
\def\mixr{\mathsf{mix}}
\def\precur{\lprec^\unit}
\newcommand{\rrule}[1][]{\mathsf{r}^{#1}}
\def\urule{\unit}
\def\nurule{\lnewsymb}
\def\nuurule{\lnewsymb^\lunit}
\def\nqsrule{\lnewsymb\mhyphen\lyasymb}
\def\wSr{_{\mathcal{S}}}
\def\saxrule#1{\axrule\wSr}
\def\loadr#1{#1_\mathsf{load}}
\def\popr#1{#1_\mathsf{pop}}
\def\unitr#1{#1_{\lunit}}
\def\nuur{\unitr\lnewsymb}
\def\nuloadr{\loadr\lnewsymb}
\def\nupopr{\popr\lnewsymb}
\def\yaur{\unitr\lyasymb}
\def\yaloadr{\loadr\lyasymb}
\def\yapopr{\popr\lyasymb}
\def\isnusymb{\lnewsymb}
\def\isyasymb{\lyasymb}
\def\isnasymb{\nabla}
\def\isnu#1{\sclr{#1^{\isnusymb}}}
\def\isya#1{\sclr{#1^{\isyasymb}}}
\newcommand{\isna}[2][]{\sclr{#2^{\isnasymb_{#1}}}}
\def\pbras#1#2#3{\plsend{#1}{#2}{#3}}
\def\psels#1#2#3{\plrecv{#1}{#2}{#3}}
\newcommand{\multiprem}[2]{\vlhy{\left\{\vlderivation{#1}\right\}_{#2}}}
\spnewtheorem{nota}{Notation}{\bfseries}{\itshape}
\begin{document}
%%%%%%%%%%%%%%%%%%%%%%%%%%%%%%%%%%%%%%%%%%%%%%%%%%%%%%%%%%%%%%%%%%%
%%%%%%%%%%%%%%%%%%%%%%%%%%%%%%%%%%%%%%%%%%%%%%%%%%%%%%%%%%%%%%%%%%%
%%%%%%%%%%%%%%%%%%%%%%%%%%%%%%%%%%%%%%%%%%%%%%%%%%%%%%%%%%%%%%%%%%%
%%%%%%%%%%%%%%%%%%%%%%%%%%%%%%%%%%%%%%%%%%%%%%%%%%%%%%%%%%%%%%%%%%%
%
\title{Formulas as Processes, \\Deadlock-Freedom as Choreographies \\(Extended Version)}
\titlerunning{Formulas as Processes, Deadlock-Freedom as Choreographies}
% If the paper title is too long for the running head, you can set
% an abbreviated paper title here
%
\author{
	Matteo Acclavio\inst{1}\orcidID{0000-0002-0425-2825} 	\and\\
	Giulia Manara\inst{2,3}\orcidID{0009-0003-9583-1017} 	\and\\
	Fabrizio Montesi\inst{3}\orcidID{0000-0003-4666-901X}
}

\authorrunning{
	M. Acclavio et al.
}
% First names are abbreviated in the running head.
% If there are more than two authors, 'et al.' is used.
%
\institute{
%	Planet Earth
	University of Sussex, Brighton, UK 					\and
	Université Paris Cité, Paris, FR
	and
	Universitá Roma Tre, Roma, IT
	\and
	University of Southern Denmark, Odense, DK
}
\maketitle              % typeset the header of the contribution

\newcommand{\mcor}[2]{{\color{gray}#1}{\color{pzbrickred}#2}}
\newcommand{\gcor}[2]{{\color{gray}#1}{\color{magenta}#2}}

\newcommand{\mrev}[2]{{\color{gray}#1}{\color{blue}#2}}

\newcommand{\maltIntro}[2]{{\color{gray}#1}{\color{pzgreen}#2}}

%%%%%%%%%%%%%%%%%%%%%%%%%%%%%%%%%%%%%%%%%%%%%%%%%%%%%%%%%%%%%%%%%%%
%%%%%%%%%%%%%%%%%%%%%%%%%%%%%%%%%%%%%%%%%%%%%%%%%%%%%%%%%%%%%%%%%%%
\begin{abstract}
	We introduce a novel approach to studying properties of \procs in the \picalc based on a processes-as-formulas interpretation, by establishing a correspondence between specific sequent calculus derivations and computation trees in the reduction semantics of the recursion-free \picalc.
	Our method provides a simple logical characterisation of \lfreedom for the recursion- and race-free fragment of the \picalc, supporting key features such as cyclic dependencies and an independence of the name restriction and parallel operators.
	Based on this technique, we establish a strong completeness result for a nontrivial choreographic language: all deadlock-free and \rfree finite \picalc processes composed in parallel at the top level can be faithfully represented by a choreography.

	With these results, we show how the paradigm of computation-as-derivation extends the reach of logical methods for the study of concurrency, by bridging important gaps between logic, the expressiveness of the \picalc, and the expressiveness of choreographic languages.
\end{abstract}
%%%%%%%%%%%%%%%%%%%%%%%%%%%%%%%%%%%%%%%%%%%%%%%%%%%%%%%%%%%%%%%%%%%
%%%%%%%%%%%%%%%%%%%%%%%%%%%%%%%%%%%%%%%%%%%%%%%%%%%%%%%%%%%%%%%%%%%

%%%%%%%%%%%%%%%%%%%%%%%%%%%%%%%%%%%%%%%%%%%%%%%%%%%%%%%%%%%%%%%%%%%
%%%%%%%%%%%%%%%%%%%%%%%%%%%%%%%%%%%%%%%%%%%%%%%%%%%%%%%%%%%%%%%%%%%
%%%%%%%%%%%%%%%%%%%%%%%%%%%%%%%%%%%%%%%%%%%%%%%%%%%%%%%%%%%%%%%%%%%
\section{Introduction}
%%%%%%%%%%%%%%%%%%%%%%%%%%%%%%%%%%%%%%%%%%%%%%%%%%%%%%%%%%%%%%%%%%%
%%%%%%%%%%%%%%%%%%%%%%%%%%%%%%%%%%%%%%%%%%%%%%%%%%%%%%%%%%%%%%%%%%%
%%%%%%%%%%%%%%%%%%%%%%%%%%%%%%%%%%%%%%%%%%%%%%%%%%%%%%%%%%%%%%%%%%%

The Curry-Howard isomorphism is a remarkable example of the synergy between logic and programming languages, which establishes a \emph{formulas-as-types} and \emph{proofs-as-programs} {(and \emph{computation-as-reduction})} correspondences for functional programs~\cite{lof:constructive,notes:CH}.
In view of this success, an analogous \emph{proofs-as-processes} correspondence has been included in the agenda of the study of concurrent programming languages \cite{abramsky1994proofs,bellin:scot:pi}.
The main idea in this research line is that,
as types provide a high-level specification of the input/output data types for a function, propositions in linear logic correspond to \emph{session types}~\cite{hon:vas:kub} that specify the communication actions performed by processes (as in process calculi)~\cite{cai:pfe:ST}.
However, while the Curry-Howard correspondence fits functional programming naturally, it does not come without issues when applied to concurrency (we discuss the details in related work, \cref{sec:rel}).
This is because functional programming deals with the `sequential' aspect of computation (given an input, return a specific output), while most of the interesting aspects of concurrent computation are about the communication patterns used during the computation itself.
Therefore, as suggested in numerous works (e.g., \cite{miller:hereditary,miller:uniform,andreoli:focSeq,andreoli:focSeq,and:maz:PN}), the Curry-Howard correspondence may not be the right lens for the study of concurrent programs.
	
In this paper we investigate an alternative proof-theoretical approach to the study of processes, which is based on logical operators that faithfully model the fundamental operators of process calculi (like prefixing, parallel, and restriction).
Our approach is close to the \emph{computation-as-deduction} paradigm, where program executions are modelled as proof searches in a given sequent calculus; executions are therefore grounded in a logic by construction, as originally proposed by Miller in \cite{miller:hereditary}.
Specifically, we interpret formulas as processes, inference rules as rules of an operational semantics, and (possibly partial) derivations as \comptrees (snapshots of computations up to a certain point).

The approach we follow is not to be confused with the
intent of using logic as an auxiliary language to enunciate statements about computations, that is, viewing \emph{computation-as-model} as done in Hennessy-Milner logic \cite{hen:mil:80,hen:mil:90}, modal $\mu$-calculus \cite{kozen:modalMu}, Hoare logic \cite{hoare:logic}, or dynamic logics \cite{DLbook,acc:mon:per:OPDL}.
We are instead interested in directly reasoning on programs and their execution using the language of the programs itself.
This allows for an immediate transfer of properties of proofs to properties of programs, without needing intermediate structures (e.g., models) or languages (e.g., types).

%%%%%%%%%%%%%%%%%%%%%%%%%%%%%%%%%%%%%%%%%%%%%%%%%%%%%%%%%%%%%%%%%%%
%%%%%%%%%%%%%%%%%%%%%%%%%%%%%%%%%%%%%%%%%%%%%%%%%%%%%%%%%%%%%%%%%%%
\subsection{Contributions of the paper}\label{subsec:contributions}
%%%%%%%%%%%%%%%%%%%%%%%%%%%%%%%%%%%%%%%%%%%%%%%%%%%%%%%%%%%%%%%%%%%
%%%%%%%%%%%%%%%%%%%%%%%%%%%%%%%%%%%%%%%%%%%%%%%%%%%%%%%%%%%%%%%%%%%

We consider the recursion-free fragment of the \picalc{} -- as presented in \cite{CDM14,gay:hole} -- and embed it in the language of the system $\PIL$ from \cite{acc:man:NL}. $\PIL$ extends Girard's first order multiplicative and additive linear logic \cite{gir:ll} with a non-commutative and non-associative sequentiality connective ($\lprec$), and nominal quantifiers ($\lnewsymb$ and its dual $\lyasymb$) for variable scoping.

Using this embedding, we prove the following main results.
\begin{enumerate}
	\item\label{contribution:1}
	We show that the operational semantics of the \picalc is captured by the linear implication ($\limp$) in $\PIL$: if we denote by $\fof P$ the formula encoding the process $P$, then
	\begin{itemize}
		\item if $P$ is a process reducing to $P'$ by performing a communication or an external choice, then $\proves[\PIL] \fof {P'} \limp \fof P$; while
		\item if $P$ may reduce to $P_{\lab_1},\ldots, P_{\lab_n}$ by performing an internal choice, then $\proves[\PIL] \bigwith_{i=1}^n\left(\fof{ P_{\lab_i}}\right) \limp \fof P$.
	\end{itemize}
	Crucial to prove this result is our proof that the system $\PIL$ supports a substitution principle, which allows us to simulate reductions within a context.

	\item\label{contribution:2}
	We establish a computation-as-deduction correspondence, which we use to characterise two key safety properties studied for \rfree processes in terms of derivability in $\NL$:
	\emph{\lfreedom}, i.e., the property that a process can always keep executing until it eventually terminates~\cite{wadler:PaS}; and
	\emph{\progress}, i.e., the property that if a process gets stuck, it is always because of a missing interaction with an action that can be provided by the environment~\cite{DBLP:journals/mscs/CoppoDYP16}.%
	\footnote{For \progress, {in this paper} we restrict our attention to processes that do not send restricted names.}

	In particular, thanks to the structure of $\PIL$ and its operators, we can successfully detect safe processes that were previously problematic in the logical setting due to cyclic dependencies, like the \proc in \Cref{eq:intro} below.
	\begin{equation}\label{eq:intro}
		\hskip-2em
		\pnu x\pnu y
		\left(
			\psend xa
			\plsend y{\lab: \psend yb \pnil}{}
		\ppar
			\precv xa
			\plrecv y {\lab: \precv yb \pnil , \lab': \psend zc \pnil}{}
		\right)
	\end{equation}

	\item\label{contribution:3}
	We show that our approach provides an adequate logical foundation for choreographic programming, a paradigm where programs are \emph{choreographies} (coordination plans) that express the communications that a network of processes should enact~\cite{M13:phd}.%
	\footnote{
		Networks are parallel compositions of sequential processes assigned to distinct locations.
	}
	Specifically, we establish a \emph{choreographies-as-proofs} correspondence by using a sequent system, called $\ChorL$, that consists of rules derivable in $\PIL$.

	Our choreographies-as-proofs correspondence has an important consequence. An open question in theory of choreographic programming~\cite{montesi:book} is about how expressive choreographies can be:
	\begin{equation}
		\mbox{
		\emph{What are the processes that can be captured by choreographies?}
		}
	\end{equation}
	To date, there are no answers to this question for the setting of processes with unrestricted name mobility and cyclic dependencies.
	Our correspondence implies a strong completeness result: in our setting, all and only race- and deadlock-free networks can be expressed as choreographies.
	This is the first such completeness result in the case of recursion-free networks.
\end{enumerate}

%%%%%%%%%%%%%%%%%%%%%%%%%%%%%%%%%%%%%%%%%%%%%%%%%%%%%%%%%%%%%%%%%%%
% FIG roadmap
%%%%%%%%%%%%%%%%%%%%%%%%%%%%%%%%%%%%%%%%%%%%%%%%%%%%%%%%%%%%%%%%%%%
\begin{figure}[t]
	\centering
	\adjustbox{max width=\textwidth}{$
		\hskip-1em
	\begin{array}{ccccc}
%		P \mbox{ \progress} &\xLeftarrow{\quad}&
		P \mbox{ \rfree \lfree} & \xRightarrow{\qquad}&P \mbox{ {\rfree} {\lfree} {\flat} }
		\\
		\qquad\qquad\displaystyle\left\Updownarrow\vphantom{\int}\right.{\mbox{\Cref{thm:deadlock}}}
		&&
		\qquad\qquad\displaystyle\left\Updownarrow\vphantom{\int}\right.{\mbox{\begin{tabular}{c}\Cref{cor:flatChor} \\of \Cref{thm:deadlockfreeChor}\end{tabular}}}
		\\[-1.2cm]
		\proves[\PIL] \fof P
		&\xRightarrow{\Cref{lem:ChorLderiv}}&
		\proves[\ChorL] \fof P
		&
		\xLeftrightarrow{\mbox{\Cref{thm:proofsASchoreo}}}
		&
		\boxed{\mbox{\raisebox{1cm}{\begin{tabular}{c}Exists a\\\chor $\chC$ \\ such that \\$\EPP \chC \steq P$\\[-20pt]\end{tabular}}}}
	\end{array}$}
	\caption{Road map of the main technical results in this work.}
	\label{fig:thm_dependencies}
\end{figure}
%%%%%%%%%%%%%%%%%%%%%%%%%%%%%%%%%%%%%%%%%%%%%%%%%%%%%%%%%%%%%%%%%%%
%%%%%%%%%%%%%%%%%%%%%%%%%%%%%%%%%%%%%%%%%%%%%%%%%%%%%%%%%%%%%%%%%%%

%%%%%%%%%%%%%%%%%%%%%%%%%%%%%%%%%%%%%%%%%%%%%%%%%%%%%%%%%%%%%%%%%%%
%%%%%%%%%%%%%%%%%%%%%%%%%%%%%%%%%%%%%%%%%%%%%%%%%%%%%%%%%%%%%%%%%%%
\subsection{Structure of the paper}
%%%%%%%%%%%%%%%%%%%%%%%%%%%%%%%%%%%%%%%%%%%%%%%%%%%%%%%%%%%%%%%%%%%
%%%%%%%%%%%%%%%%%%%%%%%%%%%%%%%%%%%%%%%%%%%%%%%%%%%%%%%%%%%%%%%%%%%
In \Cref{sec:non} we report $\PIL$ and prove the additional technical results required for our development.
In \Cref{sec:FaP} we recall the syntax and semantics of the \picalc, introduce an alternative \opsem with the same expressiveness with respect to the property of \lfreedom, show that the \opsem of the \picalc is captured by linear implication in $\PIL$, and that each step of the \opsem of the \picalc can be seen as blocks of rules in this system.
In \Cref{sec:choreo} we define our choreographic language and provide our completeness result for \chors.
We discuss related work in \Cref{sec:rel}.
We conclude in \Cref{sec:conc}, where we discuss research directions opened by this work.
Due to space constraints, details of certain proofs are provided in the Appendix.

%%%%%%%%%%%%%%%%%%%%%%%%%%%%%%%%%%%%%%%%%%%%%%%%%%%%%%%%%%%%%%%%%%%
%%%%%%%%%%%%%%%%%%%%%%%%%%%%%%%%%%%%%%%%%%%%%%%%%%%%%%%%%%%%%%%%%%%
%%%%%%%%%%%%%%%%%%%%%%%%%%%%%%%%%%%%%%%%%%%%%%%%%%%%%%%%%%%%%%%%%%%
\section{Non-commutative logic}\label{sec:non}
%%%%%%%%%%%%%%%%%%%%%%%%%%%%%%%%%%%%%%%%%%%%%%%%%%%%%%%%%%%%%%%%%%%
%%%%%%%%%%%%%%%%%%%%%%%%%%%%%%%%%%%%%%%%%%%%%%%%%%%%%%%%%%%%%%%%%%%
%%%%%%%%%%%%%%%%%%%%%%%%%%%%%%%%%%%%%%%%%%%%%%%%%%%%%%%%%%%%%%%%%%%

In this section we recall $\PIL$ and some of its established properties~\cite{acc:man:NL}. We then prove additional results required for our development.

%%%%%%%%%%%%%%%%%%%%%%%%%%%%%%%%%%%%%%%%%%%%%%%%%%%%%%%%%%%%%%%%%%%
%%%%%%%%%%%%%%%%%%%%%%%%%%%%%%%%%%%%%%%%%%%%%%%%%%%%%%%%%%%%%%%%%%%
\subsection{The system $\PIL$}\label{sec:PiL-language}
%%%%%%%%%%%%%%%%%%%%%%%%%%%%%%%%%%%%%%%%%%%%%%%%%%%%%%%%%%%%%%%%%%%
%%%%%%%%%%%%%%%%%%%%%%%%%%%%%%%%%%%%%%%%%%%%%%%%%%%%%%%%%%%%%%%%%%%

%%%%%%%%%%%%%%%%%%%%%%%%%%%%%%%%%%%%%%%%%%%%%%%%%%%%%%%%%%%%%%%%%%%
% fig Formulas
%%%%%%%%%%%%%%%%%%%%%%%%%%%%%%%%%%%%%%%%%%%%%%%%%%%%%%%%%%%%%%%%%%%
\begin{figure}[t]
	\centering
	\adjustbox{max width=.9\textwidth}{$
		\begin{array}{c|c|c}
			\mbox{Formulas}
			&
			\mbox{De Morgan Laws}
			&
			\mbox{$\alpha$-equivalence}
		\\
			\begin{array}{l@{\;}c@{\;}llll}
				A,B &\coloneqq	&
				\lunit
				&\mbox{unit (atom)}			\\&\mid&
				\lsend xy
				&\mbox{atom}				\\&\mid&
				\lrecv xy
				&\mbox{atom}				\\&\mid&
				A\lpar B
				&\mbox{par}					\\&\mid&
				A\ltens B
				&\mbox{tensor}				\\&\mid&
				A\lprec B
				&\mbox{prec}				\\&\mid&
				A\lplus B
				&\mbox{oplus}				\\&\mid&
				A\lwith B
				&\mbox{with}				\\&\mid&
				\lFa xA
				&\mbox{for all}				\\&\mid&
				\lEx xA
				&\mbox{exists}				\\&\mid&
				\lNu xA
				&\mbox{new}					\\&\mid&
				\lYa xA
				&\mbox{ya}
			\end{array}
		&
			\begin{array}{ccc}
				\cneg \lunit		&=&	\lunit
				\\
				\cneg{(\cneg A)}	&=&	A
				\\
				\cneg{\lsend xy}	&=& \lrecv xy
				\\
				\cneg{(A \lpar B)}	&=& \cneg A \ltens \cneg B
				\\
				\cneg{(A \lprec B)}	&=& \cneg A \lprec \cneg B
				\\
				\cneg{(A \lplus B)}	&=& \cneg A \lwith \cneg B
				\\
				\cneg{(\forall x.A)}&=& \lEx x{\cneg A}
				\\
				\cneg{(\lNu \xX A)}	&=& \lYa \xX{\cneg A}
			\end{array}
		&
			\begin{array}{c}
				a 	 = 	a
				\\
				\mbox{if $a \in \set{\lunit, \lsend xy , \lrecv xy}$}
			\\\\
				A_1 \circleddot A_2 	= B_1 \circleddot B_2
				\\
				\mbox{if $A_i = B_i$}
				\\
				\mbox{and $\circleddot \in \set{\lpar,\lprec,\ltens,\lplus,\lwith}$}
			\\\\
				\lQu x A 		= 	\lQu y A \fsubst yx
				\\
				\mbox{$y$ fresh for $A$}
				\mbox{and $\lqusymb \in \set{\lnewsymb{}, \lyasymb{}, \forall, \exists}$}
			\end{array}
		\end{array}
	$}
	\caption{Formulas (with $x,y\in \varset$), and their syntactic equivalences.}
	\label{fig:Seq}
\end{figure}

%%%%%%%%%%%%%%%%%%%%%%%%%%%%%%%%%%%%%%%%%%%%%%%%%%%%%%%%%%%%%%%%%%%
%%%%%%%%%%%%%%%%%%%%%%%%%%%%%%%%%%%%%%%%%%%%%%%%%%%%%%%%%%%%%%%%%%%

The language of $\PIL$ is a first-order language containing the following.
\begin{itemize}
	\item
	Atoms generated by (i) a countable set of variables $\varset$, (ii) a binary predicate on symbols $\lsend{-}{-}$, and (iii) its dual binary predicate $\lrecv{-}{-}$;

	\item
	The multiplicative connectives for disjunction ($\lpar$) and conjunction ($\ltens$)
	and
	the additive connectives for disjunction ($\lplus$) and conjunction ($\lwith$) form \emph{multiplicative additive linear logic} \cite{gir:ll}.
	We generalize the standard binary $\lplus$ and $\lwith$, allowing them to have any positive arity (including $1$) in order to avoid dealing with associativity and commutativity when modelling choices;

	\item
	A binary non-commutative and non-associative self-dual multiplicative connective \defn{precede} ($\lprec$), whose properties reflect the ones of the prefix operator used in standard process calculi (see \CCS \cite{M80} and \picalc \cite{mil:par:wal:pi,gay:hole});

	\item
	A unit ($\unit$). We observe that its properties reflect the ones of the terminated process $\pnil$.
	Notably, it is the neutral element of the connectives that we will use to represent parallelism ($\lpar$) and sequentiality ($\lprec$), and it is derivable from no assumption (thus also neutral element of $\ltens$);

	\item
	The standard first order existential ($\exists$) and universal ($\forall$) quantifiers.

	\item
	A \emph{nominal quantifier} \defn{new} ($\lnewsymb$), and its dual \defn{ya} ($\lyasymb$). We observe that these restrict variable scope in formulas in the same way that the $\nu$ constructor in process calculi restricts the scope of names.
\end{itemize}

More precisely, we consider \defn{formulas} generated by grammar in \Cref{fig:Seq}
modulo the standard \defn{$\alpha$-equivalence} from the same figure.
{
	From now on, we assume formulas and sequents to be \defn{clean}, that is, such that each variable $x\in\varset$ occurring in them can be bound by at most a unique universal quantifier or at most a pair of dual nominal quantifiers, and, if bound, it cannot occur free .%
	\footnote{
		This can be considered as a variation of \emph{Barendregt’s convention}.
		It allows us to avoid variable renaming for universal and nominal quantifier rules in derivations, by assuming the bound variable to be the eigenvariable of the quantifier or the shared fresh name in the case of a pair of dual nominal quantifiers.
	}
}
The \defn{(linear) implication} $A\limp B$ \resp{the \defn{logical equivalence} $A\feq B$} is defined as $\cneg A \lpar B$ \resp{as $(A \limp B)\ltens (B\limp A)$}, where the \defn{negation} ($\cneg\cdot$) is defined by the \defn{de Morgan duality} in \Cref{fig:Seq}.

The set $\freeof A$ \resp{$\freeof\Gamma$} of \defn{free variables} of a formula $A$ \resp{of a sequent $\Gamma=A_1,\ldots, A_n$} is the set of atoms occurring in $A$  which are not bound by any quantifier \resp{the set $\bigcap_{i=1}^{n}\freeof{A_i}$}.
A \defn{context} is a formula containing a single occurrence of a propositional variable $\chole$ (called \defn{hole}).
We denote by $\cC\ctx[A]\coloneqq\cC\fsubst A\chole$.
A \defn{\nuparcont} is a context $\cK \ctx$ of the form $\cK \ctx = \lNu{x_1}{\dots\lNu{x_n}{(\ctx \lpar A)}}$ for a $n\in\N$.

%%%%%%%%%%%%%%%%%%%%%%%%%%%%%%%%%%%%%%%%%%%%%%%%%%%%%%%%%%%%%%%%%%%%
%%%%%%%%%%%%%%%%%%%%%%%%%%%%%%%%%%%%%%%%%%%%%%%%%%%%%%%%%%%%%%%%%%%%
%\subsection{Sequent Calculus Derivations}\label{subsec:seqSys}
%%%%%%%%%%%%%%%%%%%%%%%%%%%%%%%%%%%%%%%%%%%%%%%%%%%%%%%%%%%%%%%%%%%%
%%%%%%%%%%%%%%%%%%%%%%%%%%%%%%%%%%%%%%%%%%%%%%%%%%%%%%%%%%%%%%%%%%%%

In this work we assume the reader to be familiar with the syntax of sequent calculus (see, e.g., \cite{troelstra_schwichtenberg_2000}), but we recall here the main definitions.

%%%%%%%%%%%%%%%%%%%%%%%%%%%%%%%%%%%%%%%%%%%%%%%%%%%%%%%%%%%%%%%%%%%
\begin{nota}
	A \defn{sequent} is a set of occurrences of formulas.
	\footnote{
		In a set of occurrences of formulas, it is assumed that each formula has a unique identifier, differently from a multiset of formulas where each formula has a multiplicity.
		The former definition simplifies the process of tracing occurrences of formulas in a derivation, as we need in \Cref{sec:choreo}.
	}%
	A \defn{sequent rule} $\rrule$ with \defn{premise} sequents $\Gamma_1,\ldots,\Gamma_n$ and \defn{conclusion} $\Gamma$ is an expression of the form $\vliiinf{\rrule}{}{\Gamma}{\Gamma_1}{\cdots}{\Gamma_n}$~.
	A formula occurring in the conclusion \resp{in a premise} of a rule but in none of its premises \resp{not in its conclusion} is said \defn{principal} \resp{\defn{active}}.
	Given a set of rules $\XS$, a \defn{derivation} in $\XS$ is a non-empty tree $\dD$ of sequents, whose root is called \defn{conclusion},
	such that every sequent occurring in $\dD$ is the conclusion of a rule in $\XS$, whose children are (all and only) the premises of the rule.
	An \defn{open derivation} is a derivation whose leaves may be conclusions of no rules, in which case are called \defn{open premises}.
	We may denote a derivation \resp{an open derivation with a single open premise $\Delta$}
	with conclusion $\Gamma$ by $\vldownsmash{\vlderivation{\vlpr{\dD}{}{\vdash\Gamma}}}$
	$\left(\mbox{resp. }\vldownsmash{\vlderivation{\vlde{\dD}{}{\vdash\Gamma}{\vlhy{\vdash\Delta}}}}\right)$.
	Finally we may write $\vliiiqf{\rrule}{}{\vdash \Gamma}{\vdash \Gamma_1}{\cdots}{\vdash \Gamma_n}$ if there is an open derivation with premises $\Gamma_1,\ldots,\Gamma_n$ and conclusion $\Gamma$ made only of rules $\rrule$.
\end{nota}

A \defn{nominal variable} is an element of the form $\isna x$ with $x\in\varSet$ and $\nabla\in\set{\isnusymb,\isyasymb}$.
If $\sS$ is a set of nominal variables, we say that $x$ \defn{occurs} in $\sS$ if $\isnu x$ or $\isya x$ is an element of $\sS$.
A  \defn{(nominal) store} $\sS$ is a set of nominal variables such that each variable occurs at most once in $\sS$.
A \defn{judgement} $\sdash\Gamma$ is a pair consisting of a clean sequent $\Gamma$
and a store $\sS$.
We write judgements $\sdash\Gamma$ with
$\sS=\emptyset$
\resp{$\sS=\set{ \isna[1]{x_1},\ldots,\isna[n]{x_n}}$}
simply as
$\vdash \Gamma$ \resp{$\sdash[ {\isna[1]{x_1},\ldots,\isna[n]{x_n} }]\Gamma$, i.e. omitting parenthesis}.
We write $\sS_1,\sS_2$ to denote the union of two stores such that a same variable does not occur in both $\sS_1$ and $\sS_2$ {-- i.e., a disjoint union}.

The system $\PIsys$ is defined by the all rules in \Cref{fig:rules} except the rule $\cutr$.
We write $\proves[\PIsys]\Gamma$ to denote that the judgement $\sdash[\emptyset]\Gamma$ is derivable in $\PIsys$.

%%%%%%%%%%%%%%%%%%%%%%%%%%%%%%%%%%%%%%%%%%%%%%%%%%%%%%%%%%%%%%%%%%%
\begin{remark}
	The system $\MLL\wF=\set{\axrule,\lpar,\ltens,\exists,\forall}$ is the standard one for first order multiplicative linear logic \cite{gir:ll}. The rules $\lplus$ and $\lwith$ are generalisations of the standard ones in additive linear logic for the $n$-ary generalized connectives we consider here; thus, in proof search, the rule $\lplus$ keeps only one $A_k$ among all $A_i$ occurring in $\multioplus{i=1}{n} A_i$, and the rule $\lwith$ branches the proof search in $n$ premises. %
{%
	The rule $\mixr$ and $\unit$ are standard for multiplicative linear logic with mix in presence of units\footnote{In presence of mix the two multiplicative units collapse.}
	\cite{cricri:MIX,COCKETT1997133}, and the rule $\precur$ ensures that the unit $\lunit$ is not only the unit for the connectives $\lpar$ and $\ltens$, but also for $\lprec$.
	The rule $\lprec$ is required to capture the self-duality of the connective $\lprec$; it should be read as introducing at the same time the connective $\lprec$ and its dual (which in this case is $\lprec$ itself) -- as a general underlying pattern for multiplicative connectives, see \cite[Remark 5]{acc:IJCAR24}.

	The store is used to guarantee that each $\lnewsymb$ is linked to at most a unique $\lyasymb$ (or vice versa) in any branch of a derivation.
	If a rule $\nuur$ \resp{$\yaur$} is applied, then the nominal quantifier is not linked, reason why the rule reminds the standard universal quantifier rule.
	Otherwise, either the rule $\nuloadr$ \resp{$\yaloadr$} loads a nominal variable in the store, or a rule $\nupopr$ \resp{$\yapopr$} uses a nominal variable (of dual type) occurring in the store as a witness variable.
	Note that in a derivation with the conclusion a judgement with empty store any
	$\nupopr$ \resp{$\yapopr$} is uniquely linked to a $\nuloadr$ \resp{$\yaloadr$} below it.
}
\end{remark}
%%%%%%%%%%%%%%%%%%%%%%%%%%%%%%%%%%%%%%%%%%%%%%%%%%%%%%%%%%%%%%%%%%%

%%%%%%%%%%%%%%%%%%%%%%%%%%%%%%%%%%%%%%%%%%%%%%%%%%%%%%%%%%%%%%%%%%%
% fig Formulas and Sequent rules
%%%%%%%%%%%%%%%%%%%%%%%%%%%%%%%%%%%%%%%%%%%%%%%%%%%%%%%%%%%%%%%%%%%
\begin{figure}[t]
	\centering
	\adjustbox{max width=\textwidth}{$\begin{array}{c}
		\vlinf{\axrule}{}{\sdash \lsend xy, \lrecv xy}{}
		\qquad
		\vlinf{\lpar}{}{\sdash \Gamma, A\lpar B}{\sdash \Gamma, A, B}
		\qquad
		\vliinf{\ltens}{}{\sdash[\sS_1 , \sS_2] \Gamma, A\ltens B,\Delta}{\sdash[\sS_1] \Gamma, A}{\sdash[\sS_2] B, \Delta}
		\qquad
		\vlinf{\lunit}{}{\sdash \lunit}{}
		\qquad
		\vliinf{\mixr}{}{\sdash[\sS_1 , \sS_2] \Gamma, \Delta}{\sdash[\sS_1] \Gamma}{\sdash[\sS_2] \Delta}
	\\[15pt]
		\begin{array}{c}
			\vlinf{\lplus}{
			% \text{\scriptsize{for a $i\in\intset1n$}}
			}{\sdash\Gamma, \bigoplus_{i=1}^n A_i}{ \sdash \Gamma, A_k}
			\\
			\text{for a $k\in\intset1n$}
		\end{array}
		\qquad
		\vliiinf{\lwith}{}{\sdash \Gamma,\bigwith_{i=1}^n A_i }{\sdash \Gamma,A_1}{\cdots}{\sdash \Gamma, A_n}
		\qquad
		\vlinf{\forall}{\dagger}{\sdash \Gamma, \lFa x{A}}{\sdash \Gamma, A}
		\qquad
		\vlinf{\exists}{}{\sdash \Gamma, \lEx x{A}}{\sdash \Gamma, A\fsubst yx}
	\\[15pt]
		\vliinf{\lprec}{}{
			\sdash[\sS_1 , \sS_2] \Gamma,\Delta, A\lprec B ,C \lprec D
		}{ \sdash[\sS_1] \Gamma, A,C}{\sdash[\sS_2] \Delta, B,D}
		\qquad
		\vliinf{\precur}{}{
			\sdash[\sS_1 , \sS_2] \Gamma,\Delta, A\lprec B
		}{ \sdash[\sS_1] \Gamma, A}{\sdash[\sS_2] \Delta, B}
		\qquad
		\boxed{\vliinf{\cutr}{}{\sdash[\sS_1,\sS_2] \Gamma, \Delta}{\sdash[\sS_1] \Gamma, A}{\sdash[\sS_2] \cneg A, \Delta}}
	\\[15pt]
		\vlinf{\nuur}{\dagger}{\sdash \Gamma, \lNu x A}{\sdash[\sS] \Gamma, A}
		\qquad
		\vlinf{\nuloadr}{\dagger}{\sdash \Gamma, \lNu xA}{\sdash[\sS , \isnu x] \Gamma , A}
		\qquad
		\vlinf{\nupopr}{}{\sdash[\sS , \isnu y] \Gamma, \lYa xA}{\sdash \Gamma,A\fsubst yx}
	\\[15pt]
		\vlinf{\yaur}{\dagger}{\sdash \Gamma, \lYa x A}{\sdash[\sS] \Gamma, A}
		\qquad
		\vlinf{\yaloadr}{\dagger}{\sdash \Gamma,\lYa xA}{\sdash[\sS , \isya x] \Gamma , A}
		\qquad
		\vlinf{\yapopr}{}{\sdash[\sS , \isya y] \Gamma, \lNu xA}{\sdash \Gamma, A\fsubst yx}
	\end{array}$}
	\caption{
		Sequent calculus rules, with $\dagger\coloneqq x\notin \freeof{\Gamma}$.
	}
	\label{fig:rules}
\end{figure}
%%%%%%%%%%%%%%%%%%%%%%%%%%%%%%%%%%%%%%%%%%%%%%%%%%%%%%%%%%%%%%%%%%%
%%%%%%%%%%%%%%%%%%%%%%%%%%%%%%%%%%%%%%%%%%%%%%%%%%%%%%%%%%%%%%%%%%%

%%%%%%%%%%%%%%%%%%%%%%%%%%%%%%%%%%%%%%%%%%%%%%%%%%%%%%%%%%%%%%%%%%%
%%%%%%%%%%%%%%%%%%%%%%%%%%%%%%%%%%%%%%%%%%%%%%%%%%%%%%%%%%%%%%%%%%%
\subsection{Proof Theoretical Properties of $\PIL$}\label{sec:PT}
%%%%%%%%%%%%%%%%%%%%%%%%%%%%%%%%%%%%%%%%%%%%%%%%%%%%%%%%%%%%%%%%%%%
%%%%%%%%%%%%%%%%%%%%%%%%%%%%%%%%%%%%%%%%%%%%%%%%%%%%%%%%%%%%%%%%%%%

We now recall some basic proof-theoretical properties of the system $\PIL$ and then prove additional results (\Cref{thm:der}) that will be important for the main technical results in this paper.

In $\PIL$ we can prove that atomic axioms are sufficient to prove that the implication $A\limp A$ holds for any formula $A$.
Moreover, the $\cutr$-rule is admissible in these systems, allowing us to conclude the transitivity of the linear implication, as well as the sub-formula property for all the rules of the systems.
%%%%%%%%%%%%%%%%%%%%%%%%%%%%%%%%%%%%%%%%%%%%%%%%%%%%%%%%%%%%%%%%%%%
\begin{theorem}[\cite{acc:man:NL}]\label{thm:cutelim}
	Let $\Gamma$ be a non-empty sequent in $\PIsys$.
	Then
	\begin{enumerate}
		\item\label{cutelim:1}
		$\proves[\PIL] \cneg A, A$ for any formula $A$;

		\item\label{cutelim:2}
		if $\proves[\PIL\wcut]\Gamma$, then $\proves[\PIL]\Gamma$;

		\item\label{cutelim:3}
		if $\proves[\PIL]A\limp B$ and $\proves[\PIL]B\limp C$, then $\proves[\PIL]A\limp C$.
	\end{enumerate}
\end{theorem}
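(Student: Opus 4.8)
The plan is to establish the three items in order, exploiting the fact that item~\ref{cutelim:3} is an easy consequence of items~\ref{cutelim:1} and~\ref{cutelim:2}, while item~\ref{cutelim:2} carries essentially all the difficulty.

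For item~\ref{cutelim:1} I would argue by structural induction on $A$. In the base cases, when $A\in\set{\lsend xy,\lrecv xy}$ the sequent $\cneg A,A$ is exactly an instance of $\axrule$, and when $A=\lunit$ we have $\cneg\lunit=\lunit$ and derive $\vdash\lunit,\lunit$ by applying $\mixr$ to two instances of the $\lunit$ rule. For the inductive step each connective is discharged against its de Morgan dual: for $A\ltens B$ (dually $A\lpar B$) I combine the two identities $\cneg A,A$ and $\cneg B,B$ with one $\ltens$ and one $\lpar$; for the additives I use the $\lwith$ rule to branch into $n$ premises and, in the $k$-th branch, the $\lplus$ rule to select the $k$-th summand before appealing to the induction hypothesis; for $\lprec$ I exploit its self-duality and introduce $A\lprec B$ and $\cneg A\lprec\cneg B$ simultaneously with a single $\lprec$ rule whose two premises are the smaller identities; for $\forall/\exists$ I instantiate the existential witness to the bound variable itself, which the cleanness convention licenses. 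The only genuinely new case is the nominal quantifier $\lNu xA$ (dually $\lYa x\cneg A$): here I would first apply $\yaloadr$ to push $\isya x$ into the store, then $\yapopr$ using that very $\isya x$ as the witness for $\lNu xA$, which reduces the goal to $\vdash\cneg A,A$ and closes it by the induction hypothesis.

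For item~\ref{cutelim:2} I would give a cut-elimination procedure in the usual shape: define local reductions that either erase a cut (when one premise is an axiom or a unit/$\mixr$ leaf), permute a cut upward past a rule in which the cut formula is not principal (the commutative cases), or fire a principal reduction when the cut formula is principal on both sides, replacing it by cuts on its immediate subformulas. Termination would follow from a lexicographic measure on (cut-formula size, cut-rank/height), tuned so that the additive, multiplicative and quantifier key cases all strictly decrease it. I expect the main obstacle to lie in two families of key cases. First, the self-dual, non-commutative and non-associative connective $\lprec$: a cut between $A\lprec B$ and $\cneg A\lprec\cneg B$ must be split into separate cuts on $A\,/\,\cneg A$ and $B\,/\,\cneg B$ while preserving their sequential placement, and the variants where one side is introduced by $\precur$ must be handled too. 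Second, and hardest, the nominal quantifiers together with the store discipline: the reductions must maintain the invariant that every $\nupopr$ stays linked to the $\nuloadr$ below it (and dually for $\lyasymb$), so that naive cut-permutation cannot break a load/pop pairing; the principal case cuts $\lNu xA$ against $\lYa x\cneg A$ and, after identifying the shared witness, continues as a smaller cut on $A$. Designing the reduction steps and the termination measure so that they are compatible with the store bookkeeping is where the real work sits, and this is the part I would lean on \cite{acc:man:NL} for.

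Finally, item~\ref{cutelim:3} follows by combining the previous two. Unfolding the definition, $\proves[\PIL]A\limp B$ and $\proves[\PIL]B\limp C$ read $\proves[\PIL]\cneg A\lpar B$ and $\proves[\PIL]\cneg B\lpar C$; using invertibility of $\lpar$ these give $\proves[\PIL]\cneg A,B$ and $\proves[\PIL]\cneg B,C$. A single $\cutr$ on the dual pair $B,\cneg B$ yields $\proves[\PIL\wcut]\cneg A,C$, and item~\ref{cutelim:2} turns this into $\proves[\PIL]\cneg A,C$; one application of the $\lpar$ rule gives $\proves[\PIL]\cneg A\lpar C$, that is $\proves[\PIL]A\limp C$.
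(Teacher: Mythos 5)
The paper does not prove this theorem: it is recalled verbatim from \cite{acc:man:NL}, so there is no in-paper argument to compare your attempt against; I can only judge it on its own terms, where it is correct in structure. Item~\ref{cutelim:1}: your induction is right, including the one non-standard case — discharging $\lNu x{A}$ against $\lYa x{\cneg A}$ by $\yaloadr$ followed by $\yapopr$ with the loaded variable as witness is exactly how dual nominal pairs interact in this system, and it is the same block the paper itself uses (abbreviated $\nqsrule$) in the appendix proof of \Cref{thm:der}; likewise, using the single $\lprec$ rule to introduce $A\lprec B$ and $\cneg A\lprec\cneg B$ simultaneously is the intended reading of that rule. Item~\ref{cutelim:3}: correct, and your appeal to invertibility of $\lpar$ is not circular — it can be made explicit by cutting $\vdash\cneg A\lpar B$ against $\vdash \cneg A, A\ltens\cneg B, B$ (built with one $\ltens$ from two instances of item~\ref{cutelim:1}) and then applying item~\ref{cutelim:2}, which you already have at that point. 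The only shortfall is item~\ref{cutelim:2}, which in your write-up is a strategy rather than a proof: the commutative/principal-case skeleton and the lexicographic termination measure are standard, and you correctly locate the two genuinely delicate points (splitting a cut on $A\lprec B$ versus $\cneg A\lprec\cneg B$, including the $\precur$ variants, and preserving the pairing between each $\nupopr$/$\yapopr$ and the load rule below it while permuting cuts), but you explicitly delegate exactly that work to \cite{acc:man:NL}. Since the paper also imports the whole theorem from that reference, this deferral is consistent with the paper's treatment — just be aware that it is where essentially all of the mathematical content of the theorem resides, so as a standalone proof the middle item remains incomplete.
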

%%%%%%%%%%%%%%%%%%%%%%%%%%%%%%%%%%%%%%%%%%%%%%%%%%%%%%%%%%%%%%%%%%%

%%%%%%%%%%%%%%%%%%%%%%%%%%%%%%%%%%%%%%%%%%%%%%%%%%%%%%%%%%%%%%%%%%%
\begin{proposition}[\cite{acc:man:NL}]\label{prop:feq}
	The following logical equivalences are derivable in $\PIL$ (for any $\sigma$ permutation over $\intset1n$).
	\begin{equation}\label{eq:logEqs}
		\adjustbox{max width=.8\textwidth}{$\begin{array}{c|c}
				\begin{array}{rcl}
					\left(A\lpar \unit\right) 			&\feq& A
					\\
					\left(A\lprec\lunit \right) 		&\feq& A
					\\
					(A\lpar B) \lpar C		&\feq& A\lpar (B \lpar C	)
					\\
					A\lpar B				&\feq& B\lpar A
					\\
					\left(\bigoplus_{i=1}^n A_i\right) &\feq& \left(\bigoplus_{i=1}^n A_{\sigma(i)}\right)
					\\
					\left(\bigwith_{i=1}^n A_i\right) &\feq& \left(\bigwith_{i=1}^n A_{\sigma(i)}\right)
				\end{array}
				\quad&\quad
				\begin{array}{rcl}
					(\lNu x \lNu yA )	& \feq& (\lNu y \lNu xA)
					\\
					(\bigwith_{i=1}^n \lNu xA_i )	& \feq& (\lNu x \bigwith_{i=1}^n A_i)
					\\
					(\bigoplus_{i=1}^n \lNu A_i ) 	&\feq&  (\lNu x \bigoplus_{i=1}^n A_i)
					\\[5pt]\hline\\[-10pt]
					\lNu x{(A\lpar D)}		&\feq&  (\lNu x A) \lpar D
					\\
					\lNu x D &\feq& D
					\\
					\multicolumn{3}{c}{\mbox{if $x \notin \freeof D$}}
				\end{array}
			\end{array}$}
	\end{equation}
	Moreover, $\proves[\NL]\left(\bigwith_{i=1}^n(A_i \lpar B)\right) \limp \left((\bigwith_{i=1}^n A_i) \lpar B\right)$.
\end{proposition}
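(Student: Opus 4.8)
The plan is to reduce each claimed equivalence $A \feq B$ to its two defining implications $A \limp B$ and $B \limp A$, and to reduce each implication $C \limp D$, which unfolds to the formula $\cneg C \lpar D$, to the single sequent $\vdash \cneg C, D$ reached after one $\lpar$ rule. Every such sequent is then closed by a short derivation, built from the conclusion upward, whose leaves are instances of the generalized identity $\proves[\PIL] \cneg E, E$ (\Cref{thm:cutelim}.\ref{cutelim:1}); it is precisely this generalized axiom that lets us close branches on arbitrary compound subformulas rather than on atoms only. In this way the whole proposition becomes a family of small derivations sharing one skeleton: decompose the outermost connective on each side until a formula and its dual meet, then apply identity.

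For the multiplicative and additive laws of the left column this skeleton is immediate. The unit law $(A \lpar \unit) \feq A$ is obtained by joining an identity $\vdash \cneg A, A$ with the unit axiom through $\mixr$ in one direction, and by splitting a $\ltens$ against the unit axiom in the other; the law $(A \lprec \unit) \feq A$ is the same derivation with $\precur$ in place of $\mixr$, which is exactly the rule encoding that $\unit$ is neutral for $\lprec$ as well. Associativity and commutativity of $\lpar$ reduce to identities after the dual $\ltens$ is split, the comma being already associative and commutative since a sequent is a \emph{set} of occurrences. For the two permutation laws $\bigoplus_i A_i \feq \bigoplus_i A_{\sigma(i)}$ and $\bigwith_i A_i \feq \bigwith_i A_{\sigma(i)}$ the recipe is to first expose the conjunctive connective with the branching rule $\lwith$, producing one premise per index, and then to select the summand of the matching index with $\lplus$ before closing with identity; because the index range is a set, the renaming along $\sigma$ is transparent.

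The nominal laws of the right column are where the store discipline of $\PIL$ is needed. The vacuous case $\lNu{x}{D} \feq D$ (with $x \notin \freeof{D}$) uses only the unlinked rules $\nuur$ and $\yaur$, whose side condition is the hypothesis $x \notin \freeof{D}$, so each direction collapses to an identity. The commutations $\lNu{x}{\lNu{y}{A}} \feq \lNu{y}{\lNu{x}{A}}$, $\bigwith_i \lNu{x}{A_i} \feq \lNu{x}{\bigwith_i A_i}$, $\bigoplus_i \lNu{x}{A_i} \feq \lNu{x}{\bigoplus_i A_i}$, and the scope extrusion $\lNu{x}{(A \lpar D)} \feq (\lNu{x}{A}) \lpar D$ all follow one pattern: load the nominal variable $\isnu{x}$ coming from the $\lnewsymb$-formula on one side with $\nuloadr$, and later consume it with $\nupopr$ against the dual $\lyasymb$-formula that the negation produces on the other side (symmetrically, $\yaloadr$ and $\yapopr$ for the $\lyasymb$ side). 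In the two branching cases this load/pop pair is simply interleaved with a $\lwith$ step, whose premises all copy the current store, and a matching $\lplus$ selection. Finally, the ``Moreover'' implication unfolds to $\vdash \bigoplus_i (\cneg{A_i} \ltens \cneg B), \bigwith_i A_i, B$: I would apply $\lwith$ first, copying $\bigoplus_i (\cneg{A_i} \ltens \cneg B)$ and $B$ into each of the $n$ branches, and in branch $j$ select the summand $\cneg{A_j} \ltens \cneg B$ with $\lplus$, split it with $\ltens$ into $\vdash \cneg{A_j}, A_j$ and $\vdash \cneg B, B$, and close both with identity. The crucial ordering is that $\lwith$ must precede $\lplus$, so that the additive choice can be made consistently inside each branch.

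The main obstacle is the bookkeeping of the store in the nominal cases, scope extrusion most of all. There one must check that the $\dagger$ side conditions of the load rules hold, which is guaranteed by the cleanliness (Barendregt-style) convention; that the side formula $D$ with $x \notin \freeof{D}$ is carried across the $\lpar$ untouched while the quantifier is pulled out; and, above all, that every $\nuloadr$ (resp. $\yaloadr$) is matched by exactly one dual $\nupopr$ (resp. $\yapopr$) in each branch, so that the final judgement has empty store, as demanded by $\proves[\PIL]$. Once the load/pop pairing and the eigenvariable handling are pinned down for scope extrusion, the remaining nominal equivalences are obtained by replaying the same threading pattern.
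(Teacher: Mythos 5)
Your derivations are correct, but there is nothing in this paper to compare them against: Proposition~\ref{prop:feq} is recalled from~\cite{acc:man:NL} and stated without proof, so the only check possible is against the rules of \Cref{fig:rules} — and your proof passes it. The reduction of each $\feq$ to two sequents closed by the generalized identity (\Cref{thm:cutelim}.\ref{cutelim:1}) is the natural route; the unit laws work exactly as you say (with the caveat that for $(A\lprec\lunit)\feq A$ the rule $\precur$ replaces \emph{both} $\mixr$ and $\ltens$, since $\lprec$ is self-dual and its negation is again a $\lprec$-formula, not a tensor); the permutation laws need precisely your $\lwith$-before-$\lplus$ ordering; and the nominal laws need precisely the load/pop threading you describe, with the $\dagger$ side conditions discharged by cleanliness and by the hypothesis $x\notin\freeof D$. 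Two small bookkeeping points you leave implicit but which do work out: when a $\ltens$ is applied under a non-empty store (as in scope extrusion), the loaded nominal variable must be routed to the premise containing the dual quantifier to be popped, which the rule's store-splitting permits; and full load/pop pairing is sufficient but not strictly necessary, since the axioms of $\PIL$ admit an arbitrary store in their conclusion — only the root judgement must have empty store.
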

%%%%%%%%%%%%%%%%%%%%%%%%%%%%%%%%%%%%%%%%%%%%%%%%%%%%%%%%%%%%%%%%%%%

%%%%%%%%%%%%%%%%%%%%%%%%%%%%%%%%%%%%%%%%%%%%%%%%%%%%%%%%%%%%%%%%%%%%
%%%%%%%%%%%%%%%%%%%%%%%%%%%%%%%%%%%%%%%%%%%%%%%%%%%%%%%%%%%%%%%%%%%%
%\subsection{A Spoonful of (Syntactic) Sugar Helps Sequent Calculi go Deep}\label{sec:deep}
%%%%%%%%%%%%%%%%%%%%%%%%%%%%%%%%%%%%%%%%%%%%%%%%%%%%%%%%%%%%%%%%%%%%
%%%%%%%%%%%%%%%%%%%%%%%%%%%%%%%%%%%%%%%%%%%%%%%%%%%%%%%%%%%%%%%%%%%%

In addition to these properties, our development requires some new ones showing that implication is preserved in different contexts.
This is necessary because some rules in the operational semantics of the \picalc enables rewriting of deeply nested subterms.
For this reason, we are required to establish properties similar to those for proving \emph{subject reduction} in $\lambda$-calculus.
That is, we prove that in $\PIL$ we can still reproduce the application of inference rules inside contexts preserving soundness and completeness.
These necessary properties are collected in the next theorem.
%

%%%%%%%%%%%%%%%%%%%%%%%%%%%%%%%%%%%%%%%%%%%%%%%%%%%%%%%%%%%%%%%%%%%
\begin{restatable}{theorem}{contextDers}\label{thm:der}
	For any context $\cC\ctx$ and \nuparcont $\cK\ctx$ we have:
	\begin{enumerate}
		\item\label{der:2} if $\proves[\PIL] A\limp B$, then $\proves[\PIL] \cC\ctx[A]\limp \cC\ctx[B]$;

		\item\label{der:3} if
		$ \proves[\PIL]A_i \limp B$ for $i\in\intset1n$,
		then
		$\proves[\PIL] \bigwith_{i=1}^n \cK\Ctx[A_i]\limp \cK\ctx[B] $;

		\item\label{cor:deepImp} if
		$ \proves[\PIL] A \limp A'$ and
		$ \proves[\PIL] \cC\ctx[A']\limp B$,
		then
		$\proves[\PIL] \cC\ctx[A]\limp B$.
	\end{enumerate}
\end{restatable}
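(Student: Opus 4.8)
The plan is to establish part~\ref{der:2} first, since both part~\ref{der:3} and part~\ref{cor:deepImp} reduce to it combined with transitivity of $\limp$ (\Cref{thm:cutelim}.\ref{cutelim:3}) and identity ($\proves[\PIL]\cneg A, A$, \Cref{thm:cutelim}.\ref{cutelim:1}). For part~\ref{der:2} I would proceed by induction on the structure of the context $\cC\ctx$. The base case $\cC\ctx=\chole$ is exactly the hypothesis $\proves[\PIL]A\limp B$. Each inductive step is a \emph{functoriality} lemma for the outermost connective or quantifier of $\cC\ctx$, the non-hole argument being kept fixed via the identity derivation. For the multiplicatives and additives these are the standard derivations: to lift $\proves[\PIL]A'\limp B'$ to $\proves[\PIL](A'\lpar D)\limp(B'\lpar D)$, one passes to the sequents $\cneg{A'}, B'$ and $\cneg D, D$, combines them with a $\ltens$ rule into $\cneg{A'}\ltens\cneg D, B', D$, and collects the conclusion with two $\lpar$ rules; the cases for $\ltens$, $\bigoplus$ and $\bigwith$ are analogous, and $\lprec$ is handled directly by its self-dual rule (which introduces $\cneg{A'}\lprec\cneg{B'}$ and $A'\lprec B'$ simultaneously from the two premises). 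The first-order quantifier cases use the $\forall$/$\exists$ rules, the side condition $x\notin\freeof\Gamma$ being guaranteed by the clean (Barendregt) convention.

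The delicate cases, and what I expect to be the main obstacle, are the two nominal quantifiers, since here the derivation must respect the store discipline. To lift $\proves[\PIL]A\limp B$ to $\proves[\PIL]\lNu x A\limp\lNu x B$, I would unfold the goal to the sequent $\cneg{(\lNu x A)}\lpar\lNu x B=(\lYa x{\cneg A})\lpar(\lNu x B)$, apply $\lpar$, and then pair the two dual nominal quantifiers sharing $x$ by load/pop: the rule $\nuloadr$ loads the nominal variable $\isnu x$ into the store while peeling off $\lNu x B$, and then $\nupopr$ consumes that variable while peeling off $\lYa x{\cneg A}$ (the substitution $\fsubst x x$ being the identity), leaving the open premise $\cneg A, B$, which is interderivable with the hypothesis $A\limp B$ by invertibility of $\lpar$. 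The $\lyasymb$ case is perfectly symmetric, using $\yaloadr$ and $\yapopr$. The bookkeeping to verify is that the $\dagger$ side conditions hold (they do, since $x$ is bound and hence not free in the remaining formula) and that the store is empty at the root, so that the pop is uniquely linked to the load; this is the technical heart of the proof.

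For part~\ref{der:3}, the point is that a \nuparcont $\cK\ctx=\lNu{x_1}{\dots\lNu{x_m}{(\ctx\lpar C)}}$ (renaming to avoid clash with the arity $n$) makes the external choice $\bigwith$ distribute inward. Using the equivalence $\bigwith_{i}\lNu x A_i\feq\lNu x\bigwith_i A_i$ of \Cref{prop:feq}, applied underneath the nominal prefix by part~\ref{der:2}, I would first derive $\proves[\PIL]\bigwith_{i=1}^n\cK\ctx[A_i]\limp\lNu{x_1}{\dots\lNu{x_m}{\bigwith_{i=1}^n(A_i\lpar C)}}$, and then apply the distributivity $\proves[\PIL]\bigwith_{i=1}^n(A_i\lpar C)\limp(\bigwith_{i=1}^n A_i)\lpar C$ (the ``moreover'' of \Cref{prop:feq}, again lifted under the prefix by part~\ref{der:2}) to reach $\cK\ctx[\bigwith_{i=1}^n A_i]$. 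Since $\proves[\PIL]\bigwith_{i=1}^n A_i\limp B$ follows from any single hypothesis $\proves[\PIL]A_j\limp B$ by composing with the projection $\bigwith_{i=1}^n A_i\limp A_j$, part~\ref{der:2} gives $\proves[\PIL]\cK\ctx[\bigwith_{i=1}^n A_i]\limp\cK\ctx[B]$, and transitivity chains everything into the goal. Finally, part~\ref{cor:deepImp} is immediate: part~\ref{der:2} applied to $\proves[\PIL]A\limp A'$ yields $\proves[\PIL]\cC\ctx[A]\limp\cC\ctx[A']$, which composes with $\proves[\PIL]\cC\ctx[A']\limp B$ by transitivity to give $\proves[\PIL]\cC\ctx[A]\limp B$.
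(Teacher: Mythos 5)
Your proof is correct. For parts~\ref{der:2} and~\ref{cor:deepImp} it follows the paper's own argument essentially verbatim: the same induction on the structure of $\cC\ctx$, with the same treatment of the delicate nominal case (your explicit $\nuloadr$/$\nupopr$ pairing on the dual quantifiers is exactly what the paper abbreviates as a $\nqsrule$-step), and the same reduction of part~\ref{cor:deepImp} to part~\ref{der:2} plus transitivity (\Cref{thm:cutelim}.\ref{cutelim:3}). Where you genuinely depart from the paper is part~\ref{der:3}. The paper handles it directly, by a short cut-free derivation: it unfolds the goal into $\left(\bigoplus_{i=1}^n(\cneg{A_i}\ltens\cneg{C})\right)\lpar(B\lpar C)$ (preceded by a $\lpar$ and load/pop block when $\cK\ctx$ carries nominal quantifiers), applies $\lplus$ to select a single disjunct $\cneg{A_i}\ltens\cneg{C}$, and closes the two premises of the ensuing $\ltens$ with the hypothesis $\vdash\cneg{A_i},B$ and the identity $\vdash\cneg{C},C$ from \Cref{thm:cutelim}.\ref{cutelim:1}. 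You instead factor part~\ref{der:3} through part~\ref{der:2}: pushing the outer $\lwith$ under the nominal prefix via the shift equivalence of \Cref{prop:feq}, applying the distributivity implication $\bigwith_{i=1}^n(A_i\lpar C)\limp\left(\bigwith_{i=1}^n A_i\right)\lpar C$, and then composing with a projection and another use of part~\ref{der:2}. Both routes are sound, and both in fact consume only one of the $n$ hypotheses (the paper's $\lplus$ selects one branch; your projection selects one $A_j$), so the statement already holds under the weaker assumption that a single $A_j\limp B$ is provable. The trade-off is that the paper's argument is elementary and yields a cut-free derivation outright, whereas yours is more modular but leans repeatedly on transitivity, i.e.\ on cut admissibility -- harmless here, since \Cref{thm:cutelim} is already in place, but your derivation becomes cut-free only after running the cut-elimination procedure.
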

\begin{proof}
	\Cref{der:2} and \Cref{der:3} are proven by induction on the structure of the contexts.
	%	Details in \Cref{app:proofs}
%
	To prove \Cref{cor:deepImp} we use \Cref{thm:cutelim}.\ref{cutelim:3} since if $ \proves[\PIL] A \limp A'$, then by \ref{der:2} also $ \proves[\PIL] \cC\ctx[A] \limp \cC\ctx[A']$.
	See the full proof in \Cref{app:proofs}.
\end{proof}
%%%%%%%%%%%%%%%%%%%%%%%%%%%%%%%%%%%%%%%%%%%%%%%%%%%%%%%%%%%%%%%%%%%

%%%%%%%%%%%%%%%%%%%%%%%%%%%%%%%%%%%%%%%%%%%%%%%%%%%%%%%%%%%%%%%%%%
\begin{remark}
	With the same motivation, one might argue that a more natural way to establish a correspondence between proof search and execution of processes could be \emph{deep inference}~\cite{gug:SIS,gug:gun:par:10}, since that would allow for applying rules to subformulas.
	Here we chose to follow the more standard approach of sequent calculi, where rules can manipulate only top-level term-constructors (connectives).
\end{remark}
%%%%%%%%%%%%%%%%%%%%%%%%%%%%%%%%%%%%%%%%%%%%%%%%%%%%%%%%%%%%%%%%%%

%%%%%%%%%%%%%%%%%%%%%%%%%%%%%%%%%%%%%%%%%%%%%%%%%%%%%%%%%%%%%%%%%%%
\begin{remark}\label{rem:cutelim}
	In the proofs-as-processes interpretation, $\cutr$ is the linchpin that triggers the rewriting simulating the \opsem (cut elimination).
	In our work, as in general in the study of processes-as-formulas, the $\cutr$-rule is freed from being the keystone of the system.
	Instead, the admissibility of the $\cutr$-rule in the computation-as-deduction approach guarantees the existence of canonical models~\cite{miller:pi,miller:92,hod:mil:94:LP}.
	In particular, we use this property to tame the syntactic bureaucracy of the \opsem due to rules $\rsparr$, $\rsresr$, and $\rsstreqr$, as well as to ensure the transitivity of logical implication (required to compose reduction steps).
\end{remark}
%%%%%%%%%%%%%%%%%%%%%%%%%%%%%%%%%%%%%%%%%%%%%%%%%%%%%%%%%%%%%%%%%%%

%%%%%%%%%%%%%%%%%%%%%%%%%%%%%%%%%%%%%%%%%%%%%%%%%%%%%%%%%%%%%%%%%%%
%%%%%%%%%%%%%%%%%%%%%%%%%%%%%%%%%%%%%%%%%%%%%%%%%%%%%%%%%%%%%%%%%%%
%%%%%%%%%%%%%%%%%%%%%%%%%%%%%%%%%%%%%%%%%%%%%%%%%%%%%%%%%%%%%%%%%%%
\section{Embedding the \picalc in $\PIL$}\label{sec:FaP}
%%%%%%%%%%%%%%%%%%%%%%%%%%%%%%%%%%%%%%%%%%%%%%%%%%%%%%%%%%%%%%%%%%%
%%%%%%%%%%%%%%%%%%%%%%%%%%%%%%%%%%%%%%%%%%%%%%%%%%%%%%%%%%%%%%%%%%%
%%%%%%%%%%%%%%%%%%%%%%%%%%%%%%%%%%%%%%%%%%%%%%%%%%%%%%%%%%%%%%%%%%%

In this section we provide an interpretation of \picalc processes as formulas in $\PIL$, showing also that each successful execution of a process corresponds to a branch in a correct derivation in $\PIL$.

We start by recalling the definition of the \picalc and its operational semantics. Our presentation has explicit primitives for communicating choices, as usual in the literature of session types~\cite{vasco:pi,wadler:PaS,kok:mon:per:better}.
We then present an alternative semantics in which we use structural precongruence instead of structural equivalence (this is a standard simplification~\cite{CM20,koba:lock}, which does not affect reasoning about deadlock-freedom, progress, or races).
We then provide a translation of \procs $P$ into formula $\fof P$ in $\PIL$ and characterise \dfreedom for $P$ in terms of provability of $\fof P$ in $\PIL$.

%%%%%%%%%%%%%%%%%%%%%%%%%%%%%%%%%%%%%%%%%%%%%%%%%%%%%%%%%%%%%%%%%%%
%%%%%%%%%%%%%%%%%%%%%%%%%%%%%%%%%%%%%%%%%%%%%%%%%%%%%%%%%%%%%%%%%%%
\subsection{The \picalc and its reduction semantics}
%%%%%%%%%%%%%%%%%%%%%%%%%%%%%%%%%%%%%%%%%%%%%%%%%%%%%%%%%%%%%%%%%%%
%%%%%%%%%%%%%%%%%%%%%%%%%%%%%%%%%%%%%%%%%%%%%%%%%%%%%%%%%%%%%%%%%%%

%%%%%%%%%%%%%%%%%%%%%%%%%%%%%%%%%%%%%%%%%%%%%%%%%%%%%%%%%%%%%%%%%%%
% Figure Terms e Semantics
%%%%%%%%%%%%%%%%%%%%%%%%%%%%%%%%%%%%%%%%%%%%%%%%%%%%%%%%%%%%%%%%%%%
\begin{figure}[t]
	\centering
	\adjustbox{max width=\textwidth}{$
		\begin{array}{l@{\;}c@{\;}l@{\;}|@{\;}l@{\;}|@{\;}l@{\;}|@{\;}l}
			\multicolumn{4}{c|@{\;}}{\mbox{Processes}}&\mbox{Free names}&\mbox{bound names}\\
			P,Q,R
			&\coloneqq	&
			\pnil  			&\mbox{nil}						&\emptyset	&\emptyset
			\\&|&
			\psend xy P 	&\mbox{send ($y$ on $x$)}		&\freenamesof P\setminus \set{x,y}&\boundnamesof P
			\\&|&
			\precv xy P	&\mbox{receive ($y$ on $x$)}	&\freenamesof P\setminus\set x & \boundnamesof P\cup \set y
			\\&|&
			P \ppar Q		&\mbox{parallel}				&\freenamesof P\cup \freenamesof Q&\boundnamesof P\cup \boundnamesof Q
			\\&|&
			\pnu x P		&\mbox{nu}						&\freenamesof P\setminus\set{x}&\boundnamesof P\cup\set{x}
			\\&|&
			\plsend x{\lab : P_\lab}{\lab\in L}
			&\mbox{label send (on $x$)}				&\bigcup_{\lab \in L}\freenamesof {P_\lab}&\bigcup_{\lab \in L}\boundnamesof {P_\lab}
			\\&|&
			\plrecv x{\lab : P_\lab}{\lab\in L}
			&
			\mbox{label receive (on $x$)}
			&
			\bigcup_{\lab \in L}\freenamesof {P_\lab}&\bigcup_{\lab \in L}\boundnamesof {P_\lab}
			%		\\&|&
			%		X				&\mbox{call}					&\emptyset&\emptyset
			%		\\&|&
			%		\rec[X] R		&\mbox{recursion}				&\freenamesof R&\boundnamesof R
		\end{array}
		$}
	\caption{Syntax for \picalc \procs with $x,y\in\namesset$ and $L \subset\labelsset$, and their sets of free and bound names.}
	\label{fig:terms}
\end{figure}
%%%%%%%%%%%%%%%%%%%%%%%%%%%%%%%%%%%%%%%%%%%%%%%%%%%%%%%%%%%%%%%%%%%
%%%%%%%%%%%%%%%%%%%%%%%%%%%%%%%%%%%%%%%%%%%%%%%%%%%%%%%%%%%%%%%%%%%

%%%%%%%%%%%%%%%%%%%%%%%%%%%%%%%%%%%%%%%%%%%%%%%%%%%%%%%%%%%%%%%%%
%Fig alpha conversion
%%%%%%%%%%%%%%%%%%%%%%%%%%%%%%%%%%%%%%%%%%%%%%%%%%%%%%%%%%%%%%%%%
\begin{figure}[t]
	\adjustbox{max width=\textwidth}{$\begin{array}{c|c}
			\mbox{$\alpha$-equivalence}
			&
			\mbox{Structural equivalence generators}
			\\
			\begin{array}{r@{\;}c@{\;}ll}
				\pnil & \alphaeq & \pnil & \\
				\precv xy P &			\alphaeq 		& \precv xz P \fsubst zy
				&
				\mbox{$z$ fresh for $P$}
				\\
				\psend xy P &			\alphaeq		& \psend xy Q
				& \mbox{if $P \alphaeq Q$}
				\\
				P \ppar Q 		&		\alphaeq									&
				R \ppar S &
				\mbox{if $P \alphaeq R$ and $Q \alphaeq S$}
				\\
				\pnu x P	 &			\alphaeq 		& \pnu u P \fsubst ux
				&
				\mbox{$u$ fresh for $P$}
				\\
				\plsend x{\lab:P_\lab}{\lab\in L} 		&	\alphaeq		&
				\plsend x{\lab:Q_\lab}{\lab\in L}						&
				\mbox{if $P_{\lab} \alphaeq Q_{\lab}$ for all $\lab \in L$}
				\\
				\plrecv x{\lab:P_\lab}{\lab\in L}			& 	\alphaeq		&
				\plrecv x{\lab:Q_\lab}{\lab\in L} &
				\mbox{if $P_{\lab} \alphaeq Q_{\lab}$ for all $\lab \in L$}
				\\
			\end{array}
			&
			\begin{array}{rcl}
				P \ppar Q 			&\steqr& Q \ppar P
				\\
				(P \ppar Q) \ppar R &\steqr& P \ppar (Q \ppar R)
				\\
				\pnu x \pnu y P 	&\steqr&  \pnu y \pnu x P
				\\\hline
				P \ppar \pnil 		&\strew& P
				\\\hline\hline
				\pnu x S	 		&\strew& S
				\\
				\pnu x P \ppar S	&\strew& \pnu x (P \ppar S)
				\\
				\multicolumn{3}{c}{\mbox{ with $x\notin\freeof S$}}
			\end{array}
		\end{array}$}
	\caption{
		The standard $\alpha$-equivalence,
		and relations generating of the structural equivalence ($\steq$) \picalc \procs, where $A\steqr B$ stands for $A\strew B$ and $B\strew A$.
		}
	\label{fig:equivs}
\end{figure}
%%%%%%%%%%%%%%%%%%%%%%%%%%%%%%%%%%%%%%%%%%%%%%%%%%%%%%%%%%%%%%%%%
%%%%%%%%%%%%%%%%%%%%%%%%%%%%%%%%%%%%%%%%%%%%%%%%%%%%%%%%%%%%%%%%%

The set of \picalc{} \defn{\procs} is generated by the grammar in \Cref{fig:terms}, which uses a fixed countable set of \defn{(channel) names} $\namesset =\set{x,y,\ldots}$ and
a finite set of \defn{labels} $\labelsset$.
We may denote by $\pnus x$ a generic sequence $\pnu{x_1}\cdots\pnu{x_n}$ of $\nu$-constructors of length $n>0$, and
we may simply write $\plsend x{\lab:P_\lab}{}$ \resp{$\plrecv x{\lab:P_\lab}{}$} as a shortcut for
$\plsend x{\lab: P_\lab}{\lab\in L}$ \resp{$\plrecv x{\lab: P_\lab}{\lab\in L}$} whenever $L=\set\lab$.
A \proc is \defn{\sequential} if it contains no parallel ($\ppar$) or restrictions ($\nu$), it is \defn{\flat}\footnote{Sometimes referred to as  \emph{non hierarchical} in the literature.}  if of the form $P=\pnus x \left(P_1\ppar\cdots\ppar P_n\right)$ for some \sprocs $P_1,\ldots, P_n$ (also called \seqcomps of $P$).
We use the common notation $P\fsubst xy$ for substitution (recalled in \Cref{fig:subst} of \Cref{app:proofs}).

%%%%%%%%%%%%%%%%%%%%%%%%%%%%%%%%%%%%%%%%%%%%%%%%%%%%%%%%%%%%%%%%
The set $\freenamesof P$ of \defn{free names} and the set $\boundnamesof P$ of \defn{bound names} in a process $P$ are defined in \Cref{fig:terms}.
The set of \defn{names} in $P$ is denoted $\namesof P$ and a name $x$ is \defn{fresh} in $P$ if $x\notin \namesof P$.
A \defn{context} is a \proc $\cP\ctx$ containing a single occurrence of a special free name $\chole$ called \defn{hole} such that $\cP\ctx[P]\coloneqq\cP\fsubst P\chole$ is a \proc.
A \defn{\netcont} is a context of the form $\cN\ctx=\pnus x\left(\ctx\ppar P_1 \ppar \cdots \ppar P_n\right)$.

{The} \defn{$\alpha$-equivalence} ($\alphaeq$) is recalled in \Cref{fig:equivs}.
To improve the presentation of the technical results, we assume \procs written in an \defn{unambiguous} form, that is, in such a way each bound variable $x\in \boundnamesof{P}$ is bound by a unique $\nu$-constructor and do not occur free in $P$.
In the same figure we provide the relation $\strew$, whose reflexive and transitive closure is denoted $\strews$, and we define the standard \defn{(structural) equivalence} ($\steq$) as the equivalence relation generated by the union of $\strew$ and $\alphaeq$.
%%%%%%%%%%%%%%%%%%%%%%%%%%%%%%%%%%%%%%%%%%%%%%%%%%%%%%%%%%%%%%%%

%%%%%%%%%%%%%%%%%%%%%%%%%%%%%%%%%%%%%%%%%%%%%%%%%%%%%%%%%%%%%%%%%%%
% Figure pi \OpSem
%%%%%%%%%%%%%%%%%%%%%%%%%%%%%%%%%%%%%%%%%%%%%%%%%%%%%%%%%%%%%%%%%%%
\begin{figure}[t]
	\adjustbox{max width=\textwidth}{$\begin{array}{c}
			\begin{array}{c|c}
				\begin{array}{r@{:\;}r@{\;\redsem\;}ll}
					\rscomr														&
					\psend xa P \ppar \precv xb Q
					&
					P \ppar Q \fsubst ab
					&
					\\
					\rschoir														&
					\plsend x{\lab: P_\lab}{\lab\in L}							&
					\plsend x{\lab_k:P_{\lab_k}}{}									&
					\mbox{if }\lab_k\in L
					\\
					\rslabr														&
					\plsend x{\lab_k:P_{\lab_k}}{}
					\ppar
					\plrecv x{\lab:Q_{\lab}}{\lab\in L}
					&
					P_{\lab_k} \ppar Q_{\lab_k}
					&
					\mbox{if } \lab_k\in L
				\end{array}
				&
				\begin{array}{r@{\;:\;}r@{\;\redsem\;}l@{\quad\mbox{if}\quad}l}
					\rsresr	& \pnu x P	&\pnu x P'	& P\redsem P'
					\\
					\rsparr	& P\ppar Q	& P'\ppar Q	& P\redsem P'
				\end{array}
			\end{array}
			\\[5pt]\hline\\[-5pt]
			\rsstrr	:\quad P \redsem Q 	\quad\mbox{if}\quad P\steq P' \redsem Q'\steq Q
		\end{array}$}
	\caption{\Opsem for the \picalc.}
	\label{fig:redsem}
\end{figure}
%%%%%%%%%%%%%%%%%%%%%%%%%%%%%%%%%%%%%%%%%%%%%%%%%%%%%%%%%%%%%%%%%%%
%%%%%%%%%%%%%%%%%%%%%%%%%%%%%%%%%%%%%%%%%%%%%%%%%%%%%%%%%%%%%%%%%%%

%%%%%%%%%%%%%%%%%%%%%%%%%%%%%%%%%%%%%%%%%%%%%%%%%%%%%%%%%%%%%%%%
The \defn{\opsem} for \procs is defined by the relation $\redsem$ over processes induced by the rules in \Cref{fig:redsem}.
As standard, we denote by $\redsems$ the reflexive and transitive closure of $\redsem$.
As in~\cite{CDM14}, to allow for nondeterminism, the syntax of processes contains a construct $\plsend{x}{\lab:P_\lab}{\lab\in L}$ allowing for different options rather than the typical $\plsend{x}{P:\lab}{}$.
Thus, the corresponding rule $\rschoir$ for choosing among the available options induces a branching in the computation tree of the process.
We say that a \proc $P$ is \defn{\stuck} if $P \not \steq \pnil$ and there is no $P'$ such that
$P \redsem P'$.
A \proc P is called \defn{\lfree} if there is no \stuck \proc $P'$ such that $P \redsems P'$.
Also, a \proc $P$ has \defn{\progress}%
\footnote{
	See \Cref{subsec:contributions} for the precise intended meaning of the term \emph{progress} in this paper.
}
if it is \lfree or $P\ppar Q$ is \lfree for a \stuck \proc $Q$.

%%%%%%%%%%%%%%%%%%%%%%%%%%%%%%%%%%%%%%%%%%%%%%%%%%%%%%%%%%%%%%%%%%%%
\begin{remark}
	Intuitively, \lfreedom means that there is always a part of a process that can reduce~\cite{koba:lock,montesi:book}.
	Progress for processes, instead, was introduced in~\cite{DBLP:journals/mscs/CoppoDYP16} to characterise processes that get stuck merely because they lack a communicating partner that could be provided by the environment.

	For example, the process $P=\pnu x (\psend xa \pnil\ppar \precv xb\pnil \ppar\psend yc \pnil)$ is not \dfree because it reduces (via $\comr$) to the stuck \proc $\pnu x (\pnil \ppar \psend yc\pnil )\steq \precv yd\pnil$, but this later has progress since $\psend yd\pnil\ppar \precv yd\pnil$ is \dfree.
\end{remark}
%%%%%%%%%%%%%%%%%%%%%%%%%%%%%%%%%%%%%%%%%%%%%%%%%%%%%%%%%%%%%%%%%%%%

A \proc $P$ has a \defn{race condition} if there is a \netcont $\cN\ctx$ such that $P$ is structurally equivalent to a term of the following shape.
$$\begin{array}{c@{\qquad}c}
	\cN\ctx[\psend xy R \ppar \psend xz Q]
	&
	\cN\ctx[\precv xy R \ppar \precv xz Q]
	\\
	\cN\ctx[\Plsend x\lab P L \ppar \Plsend x\lab P {L'} ]
	&
	\cN\ctx[\Plrecv x\lab P L \ppar \Plrecv x\lab P {L'} ]
\end{array}$$
A \proc $P$ is \defn{\rfree} if there is no $P'$ with a race condition such that $P\redsems P'$.

%%%%%%%%%%%%%%%%%%%%%%%%%%%%%%%%%%%%%%%%%%%%%%%%%%%%%%%%%%%%%%%%%%%%
\begin{remark}\label{rem:determinismRfree}
	Race conditions identify in a syntactic way the semantic property of a \proc \emph{potentially} having nondeterministic executions because of concurrent actions on a same channel.
	For example, $P= \psend xa\pnil \ppar \precv xb \pnil \ppar \precv xc\pnil$ has a race condition, and it can reduce either to
	$P_b=\precv xb \pnil$ or to $P_c=\precv xc \pnil$ according to the way the reduction rule $\comr$ is applied.
	We specify `potentially' because, for example, the process $Q= \pnu x\left( \precv xb \pnil \ppar \precv xc\pnil\right)$ has a race but cannot reduce.
	In fact, in the execution of a \rfree \proc, rules $\rscomr$ and $\rslabr$ are applied deterministically.
	That is, the same send \resp{selection} is synchronised via a $\rscomr$ \resp{a $\rslabr$} with the same receive \resp{branching}, and vice versa, in any possible (branch of an) execution.
\end{remark}
%%%%%%%%%%%%%%%%%%%%%%%%%%%%%%%%%%%%%%%%%%%%%%%%%%%%%%%%%%%%%%%%%%%%

%%%%%%%%%%%%%%%%%%%%%%%%%%%%%%%%%%%%%%%%%%%%%%%%%%%%%%%%%%%%%%%%%%%
%%%%%%%%%%%%%%%%%%%%%%%%%%%%%%%%%%%%%%%%%%%%%%%%%%%%%%%%%%%%%%%%%%%
\subsection{A Simpler Equivalent Presentation of the Reduction Semantics}
%%%%%%%%%%%%%%%%%%%%%%%%%%%%%%%%%%%%%%%%%%%%%%%%%%%%%%%%%%%%%%%%%%%
%%%%%%%%%%%%%%%%%%%%%%%%%%%%%%%%%%%%%%%%%%%%%%%%%%%%%%%%%%%%%%%%%%%
% \mcor{
To simplify the presentation of the new methodologies we use in our new framework,
we replace the
structural equivalence $\steq$ with the \emph{precongruence} $\strews$ (as in~\cite{koba:lock,CM20}).
In particular, such a precongruence orients the direction of scope extrusion (by extending the scope of the binder as much as possible), but also rules out those rewritings that may add superfluous information such as $P\strew (P \ppar \pnil)$ or $P\strew (\pnu x P)$ for a $x\notin\namesof P$.
Thus in the \opsem we consider in this paper we employ the following rule instead of the standard $\rsstrr$ (see \Cref{fig:redsem}):
\begin{equation}\label{eq:prestreq}
	\begin{array}{cccccl}
		\rsstreqr	: P \redsem Q 	 &
		\mbox{if } 	&
		P\strews P'	&
		=			&
		\cP\ctx[S]\redsem \cP\ctx[S']&=Q
		\\&
		\mbox{with}	&
		P\neq P' 	&
		\mbox{and}	&
		S \redsem S' &\mbox{ not via }\rsstreqr
		%		\\[-20pt]
	\end{array}
\end{equation}

%%%%%%%%%%%%%%%%%%%%%%%%%%%%%%%%%%%%%%%%%%%%%%%%%%%%%%%%%%%%%%%%%%%%
\begin{remark}\label{rem:semantics}
	The \opsem using the rule $\rsstreqr$ instead of $\rsstrr$ is weaker because the set of processes reachable via a step of $\rsstreqr$ is strictly contained in the set of processes reachable via $\rsstrr$.
	By means of example, consider the process $\psend xy \pnil \ppar \precv xz \pnil$ which reduces to
	%	\mrev{$\pnil\ppar\pnil$, but cannot reduce to $\pnil$ as if using $\rsstrr$.}{
		both $\pnil\ppar\pnil$ and $\pnil$ using $\rsstrr$, but can only reduce to $\pnil \ppar \pnil$ using $\rsstreqr$.
		%	}

	However, it is immediate to show that if $P\redsem P'$ via $\rsstrr$, then there is a $Q\steq P'$ such that $P\redsem Q$ via $\rsstreqr$.
	Therefore, the standard \opsem (containing the rule $\rsstrr$) is as informative as the one we consider here (where we use the rule $\rsstreqr$ instead) for the study of \lfreedom and for the definition of the race condition.
\end{remark}
%%%%%%%%%%%%%%%%%%%%%%%%%%%%%%%%%%%%%%%%%%%%%%%%%%%%%%%%%%%%%%%%%%%%

In the definition of the rules of the \opsem,
the rules $\rscomr$, $\rschoir$ and $\rslabr$ are, in some sense, performing `meaningful' transformation on processes,
while rules $\rsresr$, $\rsparr$ and $\rsstreqr$ deal with the syntactic bureaucracy of rewriting modulo the structural equivalence.
In the proofs in the next sections we need to be able to identify in each reduction step $P\redsem P'$ the \subproc $\crxof{P}{P'}$ of $P$ (called \emph{core-redex}) which is irreversibly transformed to the \proc $\crtof{P}{P'}$ (called \emph{core-reductum}), as well as to measure the amount of syntactical manipulations we need to `reach' such a \subproc to apply a reduction step (which we call \emph{entropy}). We make these concepts precise in the next definitions and exemplify them in \Cref{fig:exEntropy}.
%%%%%%%%%%%%%%%%%%%%%%%%%%%%%%%%%%%%%%%%%%%%%%%%%%%%%%%%%%%%%%%%%%%
\begin{definition}
	Let $P$ and $P'$ \procs such that $P\redsem P'$.
	The \defn{core} $\crof P{P'}=(\crxof{P}{P'},\crtof{P}{P'})$ and
	the \defn{entropy} $\entof{P}{P'}\in\N$
	of $P\redsem P'$ are defined as:
	\begin{itemize}
		\item
		if $P\redsem P'$ via $\rscomr$, $\rslabr$ or $\rschoir$, then
		$\entof{P}{P'}=1$ and
		$\crof{P}{P'}=(P,P')$;

		\item
		if $P\redsem P'$ via $\rsparr$ \resp{$\rsresr$},
		then there are \procs $Q$ and $Q'$ such that $Q\redsem Q'$ and a context $\cP\ctx$ of the form $\chole\ppar R$ \resp{of the form $\pnu x (\chole)$} such that $P=\cP\ctx[Q]$ and $P'=\ctx[Q']$ by definition of the reduction step.
		Then
		$\entof{P}{P'}=2\entof{Q}{Q'}$
		and
		$\crof{P}{P'}=\crof{Q}{Q'}$;

		\item
		if $P\redsem P'$ via $\rsstreqr$, then there are \procs $Q$ and $Q'$ such that $P \strews Q \redsem Q'\strews P'$
		with
		$P \strews Q$ and
		$Q' \strews P'$.
		Then
		$\entof{P}{P'}=3\entof{Q}{Q'}$
		and
		$\crof{P}{P'}=\crof{Q}{Q'}$.
	\end{itemize}
	The \defn{core-reduction} of $P\redsem P'$ is the rule used to reduce $\crxof{P}{P'}$ to $\crtof{P}{P'}$.
\end{definition}
%%%%%%%%%%%%%%%%%%%%%%%%%%%%%%%%%%%%%%%%%%%%%%%%%%%%%%%%%%%%%%%%%%%

%%%%%%%%%%%%%%%%%%%%%%%%%%%%%%%%%%%%%%%%%%%%%%%%%%%%%%%%%%%%%%%%%%%
% FIG entropy
%%%%%%%%%%%%%%%%%%%%%%%%%%%%%%%%%%%%%%%%%%%%%%%%%%%%%%%%%%%%%%%%%%%
\begin{figure}[t]
	\adjustbox{max width=\columnwidth}{$
		\begin{array}{c@{\;}|@{\;}c@{\;}|@{\;}c@{\;}|@{\;}c@{\;}|@{\;}c}
			S & S' & \crxof{S}{S'} &\crtof{S}{S'}&\entof{S}{S'}
			\\\hline
			(\psend xa P \ppar \precv xy Q) \ppar R
			&
			(P \ppar Q\fsubst ay) \ppar R
			&
			\psend xa P \ppar \precv xy Q
			&
			P \ppar Q\fsubst ay
			&
			2
			\\\hline
			\pnu a \left(\psend ba P \right) \ppar \precv bc R
			&
			\pnu a \left(  P \ppar R \fsubst ac \right)
			&
			\psend ba P  \ppar \precv bc R
			&
			\psend ba P  \ppar \precv bc R
			&
			6
		\end{array}
		$}
	\caption{Examples of \procs $S$ and $S'$ such that $S\redsem S'$, and the core-redex, core-reductum, and entropy of the rewriting step.}
	\label{fig:exEntropy}
\end{figure}
%%%%%%%%%%%%%%%%%%%%%%%%%%%%%%%%%%%%%%%%%%%%%%%%%%%%%%%%%%%%%%%%%%%
%%%%%%%%%%%%%%%%%%%%%%%%%%%%%%%%%%%%%%%%%%%%%%%%%%%%%%%%%%%%%%%%%%%

%%%%%%%%%%%%%%%%%%%%%%%%%%%%%%%%%%%%%%%%%%%%%%%%%%%%%%%%%%%%%%%%%%%
\begin{definition}
	A \defn{\comptree} of a \proc $P$ is a tree of processes $\comptreeof{P}$ with root $P$, such that a \proc $Q'$ is a child of $Q$ if $Q\redsem Q'$, and such that:
	\begin{itemize}
		\item
		if the core-reduction of $Q\redsem Q'$ is a $\rscomr$ or a $\rslabr$, then $Q'$ is the unique child of $Q$;
		\item
		if the core-reduction of $Q\redsem Q'$ is a $\rschoir$, then the set $\set{Q_1,\ldots, Q_n}\ni Q'$ of children of $Q$ is such that the core-reduction of $Q\redsem Q_i$ is a $\rschoir$
		and
		$\crxof{Q}{Q_i}=\crxof{Q}{Q'}$ for all $i\in\intset1n$.
	\end{itemize}
	It is \defn{maximal} if each leaf of the tree is a process $R\steq \pnil$ or is \stuck.
\end{definition}
%%%%%%%%%%%%%%%%%%%%%%%%%%%%%%%%%%%%%%%%%%%%%%%%%%%%%%%%%%%%%%%%%%%

We conclude this subsection with this result, which, together with \Cref{rem:semantics}, allows us to consider each maximal \comptree as a witness of \lfreedom for \rfree processes.
%%%%%%%%%%%%%%%%%%%%%%%%%%%%%%%%%%%%%%%%%%%%%%%%%%%%%%%%%%%%%%%%%%%
\begin{restatable}{lemma}{maxcomptree}\label{lem:completeTree}
	Let $P$ be a \proc. If $P$ is \lfree, then each \comptree with root $P$ can be extended to a maximal \comptree whose leaves are \procs structurally equivalent to $\pnil$.
\end{restatable}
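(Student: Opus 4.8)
The plan is to combine a termination argument for the recursion-free reduction relation with the observation that \lfreedom forbids reachable stuck states. First I would fix the syntactic-size measure $|P|\in\N$ counting the nodes of $P$ (so $|\pnil|=1$, each prefix adds $1$, each parallel or restriction adds $1$, and a labelled choice adds $1$ plus the sizes of its branches), and prove the key estimate that $|P'|<|P|$ whenever $P\redsem P'$. This is shown by induction on the derivation of the reduction step: the core-reductions $\rscomr$ and $\rslabr$ each consume a matching prefix pair and so drop $|P|$ by at least $2$; a $\rschoir$ resolving a genuine (at least binary) internal choice discards all but one branch and hence drops $|P|$ by at least $1$; the context rules $\rsparr$ and $\rsresr$ add the same constant on both sides of the hole and inherit the strict decrease from the core step; and every generator of $\strews$ is non-increasing on $|\cdot|$ (garbage-collecting $\pnil$ or a vacuous $\nu$ strictly decreases it, scope extrusion preserves it), so threading $\strews$ before and after the core step in the $\rsstreqr$ case still yields $|P|>|P'|$. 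Since $|\cdot|$ is $\N$-valued, every $\redsem$-sequence out of $P$ is finite and, crucially, every edge of any \comptree strictly decreases size.

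With this in place I would prove the statement by induction on $|P|$. Observe first that each node $Q$ of a \comptree $\mathcal T$ rooted at $P$ is reachable, i.e.\ $P\redsems Q$ along the path from the root, so $Q$ cannot be \stuck (otherwise $P$ would reach a stuck \proc, contradicting \lfreedom), and $Q$ is itself \lfree (anything reachable from $Q$ is reachable from $P$). If $P\steq\pnil$ then $P$ does not reduce and $\mathcal T$ is the single node $P$, already maximal with its sole leaf $\steq\pnil$. Otherwise $P$ is neither $\steq\pnil$ nor \stuck, hence it reduces. If $P$ is a leaf of $\mathcal T$ I first attach the children prescribed by the definition of \comptree (a single child for a $\rscomr$ or $\rslabr$ core-reduction, the whole family of label-children for a $\rschoir$ core-reduction); in either case $P$ now has children $Q_1,\dots,Q_m$ with $|Q_i|<|P|$ by the size estimate, each \lfree, and the subtree rooted at $Q_i$ is a \comptree on a strictly smaller \proc, to which the induction hypothesis applies. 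Splicing the resulting maximal subtrees produces a maximal extension of $\mathcal T$, and by the reachability observation each of its leaves is $\steq\pnil$, as required.

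I expect the main obstacle to lie in the size estimate, and within it in the treatment of the internal choice. The delicate point is the degenerate one-branch choice, where a literal reading of $\rschoir$ maps a term to itself and would break the strict decrease; I would handle it by noting that such a committed choice is never a proper child in a \comptree (a node is not its own child) and can only make progress by synchronising through $\rslabr$, so no size-preserving loop ever occurs inside a tree. The remaining care is the bookkeeping needed to push the size estimate through the structural precongruence $\strews$ in the $\rsstreqr$ rule, so that the bound survives scope extrusion and the elimination of terminated components and vacuous restrictions.
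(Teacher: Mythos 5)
Your proof follows essentially the same route as the paper's: graft children onto every leaf that is not structurally equivalent to $\pnil$ (such a leaf is reachable from $P$, hence not \stuck by \lfreedom, and the core-reduction of the chosen step dictates whether one child or a whole family of $\rschoir$-children is attached), then conclude by a termination argument. The only substantive difference is that the paper disposes of termination in one sentence -- ``no branch of the \comptree of a \proc (without recursion) can be infinite'' -- whereas you try to prove it with an explicit size measure, and that is exactly where your proposal runs into trouble.

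The trouble is the one you flagged, but your patch does not hold up. For a committed singleton choice $Q=\plsend x{\lab: \pnil}{}$, the rule $\rschoir$ really does yield $Q\redsem Q$, so reduction edges need not decrease $|\cdot|$. Your fix -- ``such a committed choice is never a proper child in a \comptree (a node is not its own child)'' -- misreads the definition: children of a node are determined by the reduction relation on \emph{process terms}, not by node identity, so a node labelled $Q$ may perfectly well have a child labelled $Q$, and the given tree you must extend is allowed to contain such edges. At any such edge your induction on $|P|$ cannot be applied to the subtree rooted at the child, since it has the same size as its parent. To repair this you would need (i) to treat the finitely many pre-existing edges of the given tree separately (say, by a secondary induction on its height), and (ii) when extending a leaf, to graft only non-idle reductions, which requires arguing that a non-\stuck leaf not equivalent to $\pnil$ always admits a step whose core is $\rscomr$, $\rslabr$, or a choice over at least two labels. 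Point (ii) is genuinely delicate: a process such as $\plsend x{\lab: \pnil}{}$ in isolation admits \emph{only} the idle self-step, yet it is \lfree under the paper's literal definitions, so some convention excluding idle $\rschoir$ steps is needed. To be fair, the paper's own one-sentence termination claim silently glosses over exactly the same degenerate case, so your measure-based argument, once these points are settled, would be more informative than the paper's proof; but as written your strict-decrease lemma is false, and the justification offered for the patch is incorrect.
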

%%%%%%%%%%%%%%%%%%%%%%%%%%%%%%%%%%%%%%%%%%%%%%%%%%%%%%%%%%%%%%%%%%%

%%%%%%%%%%%%%%%%%%%%%%%%%%%%%%%%%%%%%%%%%%%%%%%%%%%%%%%%%%%%%%%%%%%
%%%%%%%%%%%%%%%%%%%%%%%%%%%%%%%%%%%%%%%%%%%%%%%%%%%%%%%%%%%%%%%%%%%
\subsection{Translating Processes into Formulas}
%%%%%%%%%%%%%%%%%%%%%%%%%%%%%%%%%%%%%%%%%%%%%%%%%%%%%%%%%%%%%%%%%%%
%%%%%%%%%%%%%%%%%%%%%%%%%%%%%%%%%%%%%%%%%%%%%%%%%%%%%%%%%%%%%%%%%%%

We define a translation of \picalc processes into $\PIL$ formulas.
%%%%%%%%%%%%%%%%%%%%%%%%%%%%%%%%%%%%%%%%%%%%%%%%%%%%%%%%%%%%%%%%%%%
\begin{definition}[Processes as Formulas]
	We associate to each \picalc \proc $P$ a formula $\fof P$ inductively defined as follows.
	\begin{equation}\label{eq:translation}
		\adjustbox{max width=.9\textwidth}{$\begin{array}{c}
			\begin{array}{r@{\;=\;}l@{\qquad}r@{\;=\;}l@{\qquad}r@{\;=\;}l}
				\fof{\pnil}		& \unit
				&
				\fof{P\ppar Q}	& \fof P \lpar \fof Q
				&
				\fof{\pnu x(P)}	& \lNu x {\fof P}
				\\[5pt]
				\fof\chole 		& \chole
				&
				\fof{\psend xyP}& \lsend xy\lprec \fof P
				&
				\fof{\precv xyP}& \lExp{y}{\lrecv xy\lprec \fof P}
			\end{array}
			\\[15pt]
			\fof{\plsend x {\lab:P_\lab}{\lab\in L}}
			=
			\biglbra[\ell\in L]{\left(\lsend{x}\ell\lprec\fof{P_{\ell}} \right)}
			\qquad
			\fof{\plrecv x {\lab:P_\lab}{\lab\in L}}
			=
			\biglsel[\ell\in L]{\left(\lrecv{x}{\ell} \lprec \fof{P_{\ell}}\right)}
		\end{array}$}
	\end{equation}
\end{definition}
%%%%%%%%%%%%%%%%%%%%%%%%%%%%%%%%%%%%%%%%%%%%%%%%%%%%%%%%%%%%%%%%%%%

{Note that assuming $P$ unambiguous, the translation is a clean formula.}

%%%%%%%%%%%%%%%%%%%%%%%%%%%%%%%%%%%%%%%%%%%%%%%%%%%%%%%%%%%%%%%%%%%
\begin{remark}
	The reader familiar with session types could be curious about the choice of representing by a $\lwith$-formula a \proc of the form $\plsend x {\lab:P_\lab}{\lab\in L}$ (whose session type is a $\lplus$-type)
	and, dually, by a $\lplus$-formula a process $\plrecv x {\lab:P_\lab}{\lab\in L}$ (whose session type is a $\lwith$-type).
	This is only an apparent contradiction because \emph{our formulas are not types}. Rather, they encode processes whose executions are then derivations in the $\PIL$ system.
	Under this new interpretation, {during proof search} the rule for $\lwith$ gives exactly the expected branching of possible executions of terms like $\plsend x {\lab:P_\lab}{\lab\in L}$, corresponding to rule $\rschoir$ in the \opsem.
	Rule $\rslabr$ can then be applied `afterwards' (above in the derivation) to select the appropriate branch at the receiver, discarding all the others. Thus, in the formulas-as-processes, receiving a label corresponds to $\oplus$.

	For the same reason, parallel composition is represented by $\lpar$ (as in \cite{miller:pi,asc:gen:par}), while in most works using propositions as session types it is represented by $\cutr$ and $\ltens$.
	We will come back to this aspect in \Cref{sec:rel}.
\end{remark}
%%%%%%%%%%%%%%%%%%%%%%%%%%%%%%%%%%%%%%%%%%%%%%%%%%%%%%%%%%%%%%%%%%%

%%%%%%%%%%%%%%%%%%%%%%%%%%%%%%%%%%%%%%%%%%%%%%%%%%%%%%%%%%%%%%%%%%%%
\begin{proposition}\label{prop:steqISleqiv}
	Let $P_1$ and $P_2$ \procs.
	If $P_1\strew P_2$ then $\fof{P_2}\limp \fof{P_1}$.
\end{proposition}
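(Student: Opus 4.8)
The plan is to exploit that $\strew$ is the contextual closure of the base generators of \Cref{fig:equivs}: a single step $P_1 \strew P_2$ amounts to choosing a \proc context $\cP\ctx$ and an instance $Q_1 \to Q_2$ of one of those generators with $P_1 = \cP\ctx[Q_1]$ and $P_2 = \cP\ctx[Q_2]$. Since the translation $\fof{\cdot}$ is defined homomorphically on the term constructors and sends the hole to the hole ($\fof\chole = \chole$), it carries $\cP\ctx$ to a \emph{formula} context $\cC\ctx \coloneqq \fof{\cP\ctx}$ satisfying $\fof{\cP\ctx[Q]} = \cC\ctx[\fof Q]$ for every $Q$. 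Thus it suffices to (i) prove the claim for each generator taken at top level and (ii) lift it through $\cC\ctx$ using item~\ref{der:2} of \Cref{thm:der}.

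For (i), each generator matches one of the logical equivalences of \Cref{prop:feq}: commutativity and associativity of parallel correspond to $A\lpar B \feq B\lpar A$ and $(A\lpar B)\lpar C \feq A\lpar(B\lpar C)$; the swap $\pnu x\pnu y P \strew \pnu y\pnu x P$ to $\lNu x{\lNu y A} \feq \lNu y{\lNu x A}$; and the garbage-collection rules $P\ppar\pnil \strew P$, $\pnu x S\strew S$ and $\pnu x P \ppar S \strew \pnu x(P\ppar S)$ respectively to $A\lpar\unit \feq A$, $\lNu x D \feq D$ (for $x\notin\freeof D$) and $\lNu x{(A\lpar D)}\feq(\lNu x A)\lpar D$ (for $x\notin\freeof D$). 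In each case $\fof{Q_1}$ and $\fof{Q_2}$ instantiate the two sides of such an equivalence. Recalling that $A\feq B$ abbreviates $(A\limp B)\ltens(B\limp A)$, from $\proves[\PIL]\fof{Q_2}\feq\fof{Q_1}$ we extract the single direction $\proves[\PIL]\fof{Q_2}\limp\fof{Q_1}$: by cut-elimination and the sub-formula property (\Cref{thm:cutelim}) a cut-free proof of the one-formula sequent $\vdash\fof{Q_2}\feq\fof{Q_1}$ must introduce its outermost $\ltens$ last, hence splits into closed proofs of both implications. Step (ii) is then immediate: applying item~\ref{der:2} of \Cref{thm:der} to $\proves[\PIL]\fof{Q_2}\limp\fof{Q_1}$ yields $\proves[\PIL]\cC\ctx[\fof{Q_2}]\limp\cC\ctx[\fof{Q_1}]$, that is $\proves[\PIL]\fof{P_2}\limp\fof{P_1}$.

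The genuinely delicate points — and where I expect the real work to lie — are the two bookkeeping lemmas underpinning this argument. First, one must check that $\fof{\cdot}$ really produces a valid formula context from $\cP\ctx$ even when the hole sits beneath input prefixes or $\nu$-binders (so that the context $\lExp y{\lrecv xy\lprec\chole}$ and the like are well formed, item~\ref{der:2} of \Cref{thm:der} applies, and cleanness is preserved). Second, the side conditions $x\notin\freeof S$ attached to the two $\nu$-generators must transfer to the logical side as $x\notin\freeof{\fof S}$; this follows from a routine induction showing that the free variables of $\fof S$ coincide with the free names of $S$. Once these correspondences are in place, the statement reduces entirely to direct appeals to \Cref{prop:feq} and \Cref{thm:der}.
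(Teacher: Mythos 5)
Your proof is correct and follows essentially the same route as the paper's: each generator of $\strew$ is matched against the corresponding logical equivalence of \Cref{prop:feq} (whose $\feq$-statements give both directions, so the one-directional garbage-collection and scope-extrusion generators are covered as well). The only difference is that the paper's own proof treats just the top-level generators, whereas you additionally make explicit the lifting through contexts via \Cref{thm:der}.\ref{der:2}, the extraction of a single implication from a derivable $\ltens$ of implications, and the two bookkeeping facts (process contexts translate to formula contexts, free names correspond to free variables) -- extra rigor that completes the congruence-closure step the paper leaves implicit, rather than a genuinely different method.
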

\begin{proof}
	$\fof{P \ppar Q} \feq \fof{Q \ppar P}$ and $\fof{(P \ppar Q) \ppar R }\feq \fof{P \ppar (Q \ppar R)}$ derive from commutativity and associativity of $\lpar$ (see \Cref{prop:feq}).
	The logical equivalences $\fof{P \ppar \pnil} \feq \fof P$ and $\fof {\pnu x \pnil} \feq \fof \pnil$
	are direct consequence of the ones in \Cref{fig:Seq}.
	The implication $\fof{\pnu x (P \ppar Q)} \limp \fof{\pnu x P \ppar Q} $ for $x\notin\freeof Q$ is shown in \Cref{prop:feq}.
	Finally, $\fof{\pnu x \pnu y P} \feq \fof{\pnu y \pnu x P}$ derives from the quantifier shifts
	$\lNu x {\lNu y P} 	= \lNu y {\lNu x P}$ (\Cref{fig:Seq}).
\end{proof}
%%%%%%%%%%%%%%%%%%%%%%%%%%%%%%%%%%%%%%%%%%%%%%%%%%%%%%%%%%%%%%%%%%%%

%%%%%%%%%%%%%%%%%%%%%%%%%%%%%%%%%%%%%%%%%%%%%%%%%%%%%%%%%%%%%%%%%%%
%%%%%%%%%%%%%%%%%%%%%%%%%%%%%%%%%%%%%%%%%%%%%%%%%%%%%%%%%%%%%%%%%%%
\subsection{\LFreedom as Provability in $\PIL$}
%%%%%%%%%%%%%%%%%%%%%%%%%%%%%%%%%%%%%%%%%%%%%%%%%%%%%%%%%%%%%%%%%%%
%%%%%%%%%%%%%%%%%%%%%%%%%%%%%%%%%%%%%%%%%%%%%%%%%%%%%%%%%%%%%%%%%%%

%%%%%%%%%%%%%%%%%%%%%%%%%%%%%%%%%%%%%%%%%%%%%%%%%%%%%%%%%%%%%%%%%%%
% pisys in NLstar
%%%%%%%%%%%%%%%%%%%%%%%%%%%%%%%%%%%%%%%%%%%%%%%%%%%%%%%%%%%%%%%%%%%
\begin{figure}[t]
	\centering
	\adjustbox{max width=\textwidth}{$\begin{array}{c}
		\vlderivation{
			\vliq{\lpar}{}{
				\vdash \left(\cneg{\fof P} \ltens \cneg{\fof Q}\fsubst{y}{z}\right) \lpar
				\left(\lsend xy\lprec \fof P \lpar \lExp{z}{\lrecv xz\lprec \fof Q}\right)
			}{
				\vlin{\exists}{}{
					\vdash (\cneg{\fof P} \ltens \cneg{\fof Q}\fsubst{y}{z}) ,
					\lsend xy\lprec \fof P , \lExp{z}{\lrecv xz\lprec \fof Q}
				}{
					\vliin{\lprec}{}{
						\vdash (\cneg{\fof P} \ltens \cneg{\fof Q}\fsubst{y}{z}),
						\lsend xy\lprec \fof P , \lrecv xy\lprec \fof Q\fsubst{y}{z}
					}{
						\vlin{\axrule}{}{\vdash \lsend xy ,\lrecv xy}{\vlhy{}}
					}{
						\vliin{\ltens}{}{
							\vdash (\cneg{\fof P} \ltens \cneg{\fof Q}\fsubst{y}{z}),
							\fof P ,  \fof Q\fsubst{y}{z}
						}{
							\vlpr{}{\text{\Cref{thm:cutelim}.\ref{cutelim:1}}}{
								\vdash \cneg{\fof P}, \fof P
							}
						}{
							\vlpr{}{\text{\Cref{thm:cutelim}.\ref{cutelim:1}}}{
								\vdash \cneg{\fof Q}\fsubst{y}{z},\fof Q\fsubst{y}{z}
							}
						}
					}
				}

			}
		}
		\qquad
		\vlderivation{
			\vlin{\lpar}{
			}{
				\vdash \biglsel[\ell\in L]{\left(\lrecv{x}{\ell} \lprec \cneg{\fof{P_\lab}}\right)} \lpar
				\biglbra[\ell\in L]{\left(\lsend{x}\ell\lprec\fof{P_{\ell}} \right)}
			}{
				\vlin{\lwith}{}{
					\vdash \biglsel[\ell\in L]{\left(\lrecv{x}{\ell} \lprec \cneg{\fof{P_\lab}}\right)},
					\biglbra[\ell\in L]{\left(\lsend{x}\ell\lprec\fof{P_{\ell}} \right)}
				}{
					\multiprem{
						\vlin{\oplus }{}{
							\vdash  \biglsel[\ell\in L]{\left(\lrecv{x}{\ell} \lprec \cneg{\fof{P_\lab}}\right)},
							\lsend{x}\ell\lprec\fof{P_{\ell}}
						}{
							\vliin{\lprec}{}{
								\vdash \lrecv{x}{\ell} \lprec \cneg{\fof{P_\lab}},
								\lsend{x}\ell\lprec\fof{P_{\ell}}
							}{
								\vlin{\axrule}{}{
									\vdash  \lsend{x}\lab , \lrecv{x}{\lab}
								}{\vlhy{}}
							}{
								\vlpr{}{\text{\Cref{thm:cutelim}.\ref{cutelim:1}}}{
									\vdash \cneg{\fof{P_{\lab}}}, \fof{P_{\lab}}
								}
							}
						}
					}{\lab\in L}
				}
			}
		}
		\\\\
		\vlderivation{
			\vliq{\lpar}{}{
				\vdash \left(\cneg{\fof{Q_{\ell_k}}} \ltens \cneg{\fof{R_{\ell_k}}}\right)
				\lpar
				\left(\lsend{x}{\ell_k}\lprec\fof{Q_{\ell_k}} \lpar \biglsel[\ell \in L]{\left(\lrecv{x}{\ell} \lprec {\fof{R_\lab}}\right)}\right)
			}{
				\vlin{\lplus}{}{
					\vdash \left(\cneg{\fof{Q_{\ell_k}}} \ltens \cneg{\fof{R_{\ell_k}}}\right)
					,
					\lsend{x}{\ell_k}\lprec\fof{Q_{\ell_k}} , \biglsel[\ell \in L]{\left(\lrecv{x}{\ell} \lprec {\fof{R_\lab}}\right)}
				}{
					\vliin{\lprec}{}{
						\vdash \left(\cneg{\fof{Q_{\ell_k}}} \ltens \cneg{\fof{R_{\ell_k}}}\right)
						,
						\lsend{x}{\ell_k}\lprec\fof{Q_{\ell_k}} ,\lrecv{x}{\ell_k} \lprec {\fof{R_{\lab_k}}}
					}{
						\vlin{\axrule}{}{\vdash \lsend{x}{\ell_k},\lrecv{x}{\ell_k}  }{\vlhy{}}
					}{
						\vliin{\ltens}{}{
							\vdash \left(\cneg{\fof{Q_{\ell_k}}} \ltens {\fof{R_{\ell_k}}}\right),
							\fof{Q_{\ell_k}} , \cneg{\fof{R_{\lab_k}}}
						}{
							\vlpr{}{\text{\Cref{thm:cutelim}.\ref{cutelim:1}}}{\vdash \cneg{\fof{Q_{\ell_k}}}, \fof{Q_{\ell_k}}}
						}{
							\vlpr{}{\text{\Cref{thm:cutelim}.\ref{cutelim:1}}}{\vdash {\fof{R_{\ell_k}}}, \cneg{\fof{R_{\ell_k}}}}
						}
					}
				}
			}
		}
	\end{array}$}
	\caption{
		Derivations in $\PIL$ corresponding to the rules $\rscomr$, $\rschoir$ and $\rslabr$ of the \opsem of the \picalc.
	}
	\label{fig:derivationspisys}
\end{figure}
%%%%%%%%%%%%%%%%%%%%%%%%%%%%%%%%%%%%%%%%%%%%%%%%%%%%%%%%%%%%%%%%%%%
%%%%%%%%%%%%%%%%%%%%%%%%%%%%%%%%%%%%%%%%%%%%%%%%%%%%%%%%%%%%%%%%%%%

We can now establish a correspondence between process reductions and linear implication in $\PIL$,
as well as a correspondence between each computation tree with root a process $P$ and a proof search strategy in $\PIL$ for the formula $\fof P$.
Combining these two results, we obtain a purely logical characterisation of \lfree \procs as pre-images via $\fof\cdot$ of formulas derivable in $\PIL$.

%%%%%%%%%%%%%%%%%%%%%%%%%%%%%%%%%%%%%%%%%%%%%%%%%%%%%%%%%%%%%%%%%%%
\begin{restatable}{lemma}{lemSimulation}\label{lem:simulation}
	Let $P$ and $P'$ \procs.
	\begin{enumerate}
		\item\label{simulation:1}
		If $P\strews P'$, then $\fof{P'}\limp \fof{P}$.

		\item\label{simulation:4}
		If $P\redsem P'$, then either
		\begin{enumerate}
			\item the core-reduction of $P \redsem P'$ is a $\rscomr$ or a $\rslabr$, and
			$\proves[\PIL]\fof{P'} \limp \fof P$;

			\item or the core-reduction of $P \redsem P'$ is a $\rschoir$ then
			there is a set $\set{P_\lab \mid \lab\in L}\ni P'$
			such that
			$P \redsem P_\lab$
			for all $\lab \in L$ and
			$\proves[\PIL] \left(  \bigwith_{\lab\in L}\fof{P_\lab}\right)\limp \fof{ P}$.
		\end{enumerate}
	\end{enumerate}
\end{restatable}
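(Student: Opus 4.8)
The plan is to prove the two statements separately, taking \Cref{simulation:1} first and then doing \Cref{simulation:4} by induction on the entropy $\entof{P}{P'}$, which is precisely the measure designed to count how far the core-redex is buried beneath the bureaucratic rules $\rsparr$, $\rsresr$ and $\rsstreqr$. For \Cref{simulation:1} I would induct on the number of $\strew$-steps composing $P\strews P'$. The base case is reflexivity, discharged by the identity implication $\fof P\limp\fof P$ from \Cref{thm:cutelim}.\ref{cutelim:1}. For a step $P\strew P_1\strews P'$, \Cref{prop:steqISleqiv} gives $\fof{P_1}\limp\fof P$, the induction hypothesis gives $\fof{P'}\limp\fof{P_1}$, and transitivity of linear implication (\Cref{thm:cutelim}.\ref{cutelim:3}) yields $\fof{P'}\limp\fof P$.

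For \Cref{simulation:4}, the base cases are those with $\entof P{P'}=1$, i.e. the core-reduction is applied at top level. When it is a $\rscomr$ or a $\rslabr$ we are in case (a): here $P'=\crtof P{P'}$, and the required derivation of $\fof{P'}\limp\fof P$ is exactly the first, respectively the third, derivation displayed in \Cref{fig:derivationspisys}, assembled from $\axrule$, $\lprec$, $\ltens$/$\exists$/$\lplus$ and the identity instances of \Cref{thm:cutelim}.\ref{cutelim:1}. When it is a $\rschoir$ we are in case (b): writing $P=\plsend x{\lab:P_\lab}{\lab\in L}$, the children are the singleton selections $P_\lab=\plsend x{\lab:P_\lab}{}$, each reached by $\rschoir$ with the same core-redex; since $\fof{P_\lab}=\lsend x\lab\lprec\fof{P_\lab}$ we have $\bigwith_{\lab\in L}\fof{P_\lab}=\fof P$ syntactically, so $(\bigwith_{\lab\in L}\fof{P_\lab})\limp\fof P$ is again an identity (the second derivation of \Cref{fig:derivationspisys}).

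For the inductive step ($\entof P{P'}>1$) the reduction factors through one of the bureaucratic rules, each strictly decreasing the entropy of the inner reduction, so the induction hypothesis applies. For $\rsparr$ (where the relevant formula context is $\cC\ctx=\ctx\lpar\fof R$) and for $\rsresr$ (context $\lNu x\ctx$), in case (a) I lift the implication $\fof{Q'}\limp\fof Q$ through the context using \Cref{thm:der}.\ref{der:2}, obtaining $\fof{P'}\limp\fof P$ directly. In case (b) I must commute the additive branching with the context: starting from $(\bigwith_{\lab\in L}\fof{Q_\lab})\limp\fof Q$, I first apply the distributivity laws $\bigwith_{\lab\in L}(A_\lab\lpar B)\limp(\bigwith_{\lab\in L}A_\lab)\lpar B$ and $\bigwith_{\lab\in L}\lNu xA_\lab\feq\lNu x\bigwith_{\lab\in L}A_\lab$ from \Cref{prop:feq} to pull the $\lwith$ outside, then apply \Cref{thm:der}.\ref{der:2} and transitivity. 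For $\rsstreqr$, the entropy definition gives $P\strews Q\redsem Q'\strews P'$ with the middle step of strictly smaller entropy and not itself a $\rsstreqr$; \Cref{simulation:1} supplies $\fof Q\limp\fof P$ and $\fof{P'}\limp\fof{Q'}$, the induction hypothesis handles $Q\redsem Q'$, and transitivity composes everything (in case (b) the children are inherited from those of $Q$ and $\fof Q\limp\fof P$ is prepended to the big-with implication).

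The step I expect to be most delicate is case (b) in the inductive step. Unlike case (a), the induction hypothesis there is a single implication \emph{out of} a $\lwith$-formula rather than a family of pointwise implications with a common target, so it cannot be lifted by a bare application of \Cref{thm:der}.\ref{der:2}, nor does the pointwise packaging \Cref{thm:der}.\ref{der:3} apply directly; the distributivity of $\lwith$ over $\lpar$ and over $\lnewsymb$ from \Cref{prop:feq} is exactly what is needed to commute the branching with the surrounding $\lnewsymb\lpar$-context. A secondary subtlety, orthogonal to the logic, is purely syntactic: in the $\rsstreqr$ case one must verify that the child set demanded by the statement (containing $P'$ and consisting of the $\rschoir$-reducts sharing a single core-redex) is correctly recovered after absorbing the rewriting $P\strews Q$, which is where the bookkeeping provided by the core $\crof P{P'}$ and \Cref{rem:determinismRfree} is essential.
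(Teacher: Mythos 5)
Your proposal is correct and follows essentially the same route as the paper's proof: part~\ref{simulation:1} via \Cref{prop:steqISleqiv}/\Cref{prop:feq} and transitivity (\Cref{thm:cutelim}.\ref{cutelim:3}), and part~\ref{simulation:4} by induction on \entropy, with base cases discharged by the derivations of \Cref{fig:derivationspisys}, the $\rsparr$/$\rsresr$ cases by \Cref{thm:der}.\ref{der:2} together with the $\lwith$-distributivity and quantifier-shift laws of \Cref{prop:feq}, and the $\rsstreqr$ case by composing with part~\ref{simulation:1} through cut-admissibility. The only divergences are cosmetic and both go through: in the $\rschoir$ base case the equality $\bigwith_{\lab\in L}\fof{P_\lab}=\fof P$ is not literally syntactic (each $\fof{P_\lab}$ carries a unary $\lwith$ wrapper, so one needs the evidently derivable implication rather than an identity), and in the $\rsstreqr$ case (b) you apply the induction hypothesis to the whole post-rewriting reduction, where the paper instead re-descends to the core-redex inside a \netcont and appeals to \Cref{thm:der}.\ref{der:3} -- your factorisation is, if anything, slightly cleaner there.
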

\begin{proof}
	\Cref{simulation:1} is proven using \Cref{prop:feq}
	and transitivity of $\limp$ (see \Cref{thm:cutelim}.\ref{cutelim:3}).
	To prove \Cref{simulation:4} we reason by induction on \entropy:
	\begin{itemize}
		\item if $\entof{P}{P'}=1$ then $P\redsem P'$ via $\rscomr$, $\rslabr$ or $\rschoir$ and we conclude using the derivations in \Cref{fig:derivationspisys};

		\item if $P\redsem P'$ via $\rsparr$ \resp{$\rsresr$},
		then there is a context $\cP\ctx=\left(\chole \ppar R\right)$ \resp{$\cP\ctx=\pnu x\chole$} such that
		$P=\cP\ctx[S]$  and  $P'=\cP\ctx[S']$.
		We conclude using \Cref{thm:der}.\ref{der:2} and \Cref{thm:cutelim}.\ref{cutelim:3};

		\item if $P\redsem P'$ via $\rsstreqr$, then there is $S$ such that $P\strews S$ and $S\redsem P'$ (via a rule different from $\rsstreqr$).
		We conclude by \Cref{thm:cutelim}.\ref{cutelim:3} using \Cref{simulation:1} and \Cref{thm:der}\ref{der:3}.
	\end{itemize}
\end{proof}
%%%%%%%%%%%%%%%%%%%%%%%%%%%%%%%%%%%%%%%%%%%%%%%%%%%%%%%%%%%%%%%%%%%

Using this lemma we can prove the correspondence between \lfreedom and derivability in $\PIL$.

%%%%%%%%%%%%%%%%%%%%%%%%%%%%%%%%%%%%%%%%%%%%%%%%%%%%%%%%%%%%%%%%%%%
\begin{restatable}{theorem}{thmTrees}\label{thm:deadlock}
	Let $P$ be a \rfree \proc.
	Then $P$ is \lfree iff $\proves[\PIL]\fof P$.
	{More precisely, $\proves[\set{\lunit,\axrule,\lpar,\mixr,\lprec,\lwith,\lplus,\exists,\nuurule}]\fof P$}.
\end{restatable}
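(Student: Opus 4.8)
The plan is to prove the two implications separately, using \Cref{lem:completeTree} and \Cref{lem:simulation} as the bridge between \comptrees and derivations, and to obtain the refined rule set as a byproduct of cut-elimination.

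For the direction $P$ \lfree $\Rightarrow \proves[\PIL]\fof P$, I would argue by induction on a maximal \comptree of $P$. Since $P$ is \lfree, \Cref{lem:completeTree} provides a maximal \comptree $\comptreeof P$ all of whose leaves are structurally equivalent to $\pnil$; I take it to branch on \emph{all} labels at every $\rschoir$ node (each branch is safe because $P$ is \lfree). The induction is on the height of this (finite, as the calculus is recursion-free) tree. In the base case $P \steq \pnil$, so by \Cref{prop:feq} (the equivalences $A\lpar\unit\feq A$ and $\lNu x D\feq D$ for $x\notin\freeof D$) we get $\fof P\feq\unit$; since $\proves[\PIL]\unit$ by the rule $\lunit$ and $\feq$ is a logical equivalence, one application of $\cutr$ followed by \Cref{thm:cutelim}.\ref{cutelim:2} yields $\proves[\PIL]\fof P$. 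For the inductive step I distinguish the core-reduction of the bottom step $P\redsem Q$. If it is a $\rscomr$ or $\rslabr$, then $Q$ is the unique child and is again \lfree, so by the induction hypothesis $\proves[\PIL]\fof Q$, while \Cref{lem:simulation}, item~\ref{simulation:4} case (a), gives $\proves[\PIL]\fof Q\limp\fof P$; a $\cutr$ on $\fof Q$ and cut-elimination give $\proves[\PIL]\fof P$. If it is a $\rschoir$, the children are exactly $\set{P_\lab\mid\lab\in L}$; the induction hypothesis gives $\proves[\PIL]\fof{P_\lab}$ for each $\lab$, hence $\proves[\PIL]\bigwith_{\lab\in L}\fof{P_\lab}$ by one application of the $\lwith$ rule, and \Cref{lem:simulation}, item~\ref{simulation:4} case (b), supplies $\proves[\PIL](\bigwith_{\lab\in L}\fof{P_\lab})\limp\fof P$; again cut and cut-elimination close the case.

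For the converse $\proves[\PIL]\fof P\Rightarrow P$ \lfree (this is where \rfreedom is essential), I would read a derivation of $\fof P$ as a maximal \comptree. The plan is to prove an inversion (``reading'') lemma: if $\proves[\PIL]\fof P$ with $P$ \rfree and $P\not\steq\pnil$, then $P$ is not \stuck, and for the reduction $P\redsem Q$ forced by the derivation (resp.\ the full $\rschoir$-branch set $\set{P_\lab\mid\lab\in L}$) one has $\proves[\PIL]\fof Q$ (resp.\ $\proves[\PIL]\fof{P_\lab}$ for all $\lab$). Concretely, after cut-elimination (\Cref{thm:cutelim}.\ref{cutelim:2}) and using the sub-formula property, a derivation of $\fof P$ decomposes, bottom-up, into structural rules ($\nuurule$, $\lpar$, $\mixr$) exposing a redex followed by a communication block of one of the exact shapes displayed in \Cref{fig:derivationspisys}: each $\axrule$ closing a pair $\lsend xy,\lrecv xy$ is the trace of a $\lprec$-rule that split two matching prefixes, i.e.\ of a communication, and each $\lwith$ is an internal choice. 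Race-freedom (\Cref{rem:determinismRfree}) guarantees that this pairing is the unique intended synchronisation, so the extracted step is a genuine reduction that is deterministic outside of $\rschoir$. Iterating the reading lemma along the (finite) reduction tree produces a maximal \comptree whose leaves are the processes at which the derivation has been reduced to units, i.e.\ processes $\steq\pnil$; such a tree has no \stuck leaf, so $P$ is \lfree.

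The main obstacle is the reading lemma, at two points. First, one must permute and normalise the logical rules so that the derivation is actually organised into the reduction blocks above; this is the sequent-calculus counterpart of the ``blocks of rules'' used to simulate each step of the \opsem, and the bureaucratic rules $\lpar$, $\mixr$ and $\nuurule$ must be shown not to obstruct the extraction of a redex. Second, and conceptually the hard part, one must argue that a \stuck, non-terminated \proc is \emph{not} provable: an unmatched prefix leaves an atom $\lsend xy$ or $\lrecv xy$ that the sub-formula property forbids from being closed by $\axrule$, while a cyclic (deadlocked) configuration is ruled out precisely by the ordering discipline of the non-commutative, non-associative connective $\lprec$, which prevents arranging the communications into a consistent sequence. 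Finally, the refined claim $\proves[\set{\lunit,\axrule,\lpar,\mixr,\lprec,\lwith,\lplus,\exists,\nuurule}]\fof P$ follows from the forward construction together with cut-elimination and the sub-formula property: since $\fof P$ contains no $\ltens$, no $\lyasymb$ and no dual nominal occurrences to be linked, a cut-free derivation can use neither $\ltens$, nor the $\lyasymb$-rules, nor $\precur$, nor $\nuloadr$ and $\nupopr$, leaving exactly the stated rules (with only the unit rule $\nuurule$ for $\lnewsymb$).
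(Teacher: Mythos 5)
Your proposal is correct and follows essentially the same route as the paper's proof: the forward direction by induction on the maximal \comptree given by \Cref{lem:completeTree}, composing the implications from \Cref{lem:simulation} via $\cutr$ (with a $\lwith$ gathering the $\rschoir$-branches) and then eliminating cuts, and the converse by permuting a cut-free derivation of $\fof P$ into structural rules plus communication/choice blocks (the paper's \Cref{eq:blocks}) that are read back as reduction steps, with \rfreedom ensuring a single \comptree suffices. The refined rule-set claim is also justified as in the paper, via cut-elimination and the subformula property.
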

\begin{proof}
	It suffices to establish a correspondence between maximal \comptrees and derivations in $\PIL$.
	Details are provided in the appendix.

	\textbf{($\Rightarrow$)}
	If $P$ is \lfree, then, any maximal \comptree $\comptreeof{P}$ with root $P$ has leaves which are processes structurally equivalent to $\pnil$ by \Cref{lem:completeTree}.
	By induction on the structure of $\comptreeof{P}$, we can define a derivation in $\NL\wcut$ composing (using $\cutr$) the derivations allowing simulating the transitions of the reduction semantics (\Cref{lem:simulation});
	thus we obtain a derivation in $\NL$ by applying cut-elimination (\Cref{thm:cutelim}.\ref{cutelim:2}).
	We conclude observing that the subformula property ensures that only rules in such cut-free derivation are in $\set{\lunit,\axrule,\lpar,\mixr,\lprec,\lwith,\lplus,\exists,\nuurule}$.
	Note that all judgements in such a derivation are empty.

	\textbf{($\Leftarrow$)}
	To prove the converse, we show that each derivation $\dD_P$ of $\fof P$ can be transformed using the \emph{rule permutations} in \Cref{fig:permutations} into a derivation $\widetilde{\dD_P}$ made of blocks of rules consisting of sequences of $\nurule$- and $\lpar$-rules only, or blocks as the ones shown in \Cref{eq:blocks} below.%
	\begin{equation}\label{eq:blocks}\adjustbox{max width=.9\textwidth}{$
		\vlderivation{
			\vlin{\exists}{}{
				\vpz1{}\vdash \lsend xy\lprec A,\lEx z\left(\lrecv xz \lprec B \right), \Gamma
			}{
				\vliin{\lprec}{}{
					\vdash \lsend xy\lprec A ,\lrecv xy \lprec B\fsubst yz, \Gamma
				}{
					\vlin{\axrule}{}{\vdash \lsend xy,\lrecv xy}{\vlhy{}}
				}{
					\vlhy{
						\vdash A , B\fsubst yz, \Gamma
					}
				}
			}
		}
	\qquad\qquad
		\vlderivation{
			\vlin{\lwith}{}{
				\vpz1{}\vdash
				\biglsel[\ell\in L_1]{\left(\lrecv{x}{\ell} \lprec \fof{Q_{\ell}}\right)},
				\biglbra[\ell\in L_2]{\left(\lsend{x}\ell\lprec\fof{R_{\ell}} \right)},
				\Gamma
			}{\multiprem{
					\vlin{\oplus}{}{
						\vdash
						\biglsel[\ell\in L_1]{\left(\lrecv{x}{\ell} \lprec \fof{Q_{\ell}}\right)},
						\lsend{x}\ell\lprec\fof{R_{\ell}},
						\Gamma\;
					}{
						\vliin{\lprec}{}{
							\vdash
							\lrecv{x}{\ell} \lprec \fof{Q_{\ell}},
							\lsend{x}\ell\lprec\fof{R_{\ell}},
							\Gamma
						}{
							\vlin{\axrule}{}{\vdash \lsend x\lab,\lrecv x\lab}{\vlhy{}}
						}{
							\vlhy{
								\vdash
								\fof{Q_{\ell}},
								\fof{R_{\ell}},
								\Gamma
							}
						}
					}
				}{\lab\in L_1}
			}
		}
	$}\end{equation}
	We conclude by induction on the number of such blocks, since each block in the left \resp{right} of \Cref{eq:blocks} identifies an application of a $\comr$ \resp{a $\brar$ followed by a $\selr$}.

	Note that since $P$ is \rfree, then it suffices to reason on a single \comptree and not to take into account all possible \comptrees of $P$.
\end{proof}
%%%%%%%%%%%%%%%%%%%%%%%%%%%%%%%%%%%%%%%%%%%%%%%%%%%%%%%%%%%%%%%%%%%

%%%%%%%%%%%%%%%%%%%%%%%%%%%%%%%%%%%%%%%%%%%%%%%%%%%%%%%%%%%%%%%%%%%
\begin{corollary}\label{cor:progress}
	Let $P$ be a \rfree \proc.
	Then
	$P$ has \progress
	iff
	there is a \nuparcont $\cC\ctx$ such that
	$\proves[\PIL] \cC\ctx[P]$.
\end{corollary}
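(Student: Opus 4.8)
The plan is to obtain the corollary from \Cref{thm:deadlock}, by reading a \nuparcont $\cC\ctx=\lNu{x_1}{\dots\lNu{x_n}{(\ctx\lpar A)}}$ as the logical image of composing $P$ with an environment and then closing some of its channels. The key identity is that whenever the hole-free part $A$ is itself a translated \proc, say $A=\fof Q$, then $\cC\ctx[\fof P]=\lNu{\vec x}{(\fof P\lpar\fof Q)}=\fof{\pnu{\vec x}(P\ppar Q)}$, so \Cref{thm:deadlock} turns $\proves[\PIL]\cC\ctx[\fof P]$ into \lfreedom of $\pnu{\vec x}(P\ppar Q)$, as soon as the latter \proc is \rfree. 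I would therefore first isolate the auxiliary claim that $P$ has \progress iff there are a \proc $Q$ (\stuck, or equal to $\pnil$) and names $\vec x$ such that $\pnu{\vec x}(P\ppar Q)$ is both \rfree and \lfree, and then prove the two directions of the corollary through it.

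For ($\Rightarrow$) I would unfold the definition of \progress. If $P$ is already \lfree, then $\proves[\PIL]\fof P$ by \Cref{thm:deadlock}, and the trivial \nuparcont $\cC\ctx=\ctx\lpar\unit$ (with $n=0$, $Q=\pnil$) suffices, since $\fof P\lpar\unit\feq\fof P$ by \Cref{prop:feq} and provability transfers along $\feq$ by \Cref{thm:cutelim}.\ref{cutelim:3}. Otherwise \progress provides a \stuck $Q_0$ with $P\ppar Q_0$ \lfree; as $P$ is \rfree this stuckness can only be due to unguarded actions of $P$ on free channels, so I may replace $Q_0$ by the canonical complement $Q$ that supplies exactly one dual partner per such action, sequenced to mirror the $\lprec$-structure of $P$. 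This $Q$ introduces no races, hence $P\ppar Q$ is \rfree and still \lfree, and \Cref{thm:deadlock} yields $\proves[\PIL]\fof P\lpar\fof Q$; taking $\cC\ctx=\ctx\lpar\fof Q$ gives the required \nuparcont.

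For ($\Leftarrow$) the substance is to synthesise a \proc out of an arbitrary formula $A$. Starting from a proof of $\cC\ctx[\fof P]=\lNu{\vec x}{(\fof P\lpar A)}$, I would apply cut-elimination (\Cref{thm:cutelim}.\ref{cutelim:2}) to get a cut-free derivation and discard, via \Cref{prop:feq}, every restriction $\lnewsymb x_i$ that binds a name not free in $\fof P\lpar A$. By the analysis already carried out in the proof of \Cref{thm:deadlock}, the way this derivation decomposes $\fof P$ is precisely a maximal \comptree of $P$, each of whose branches is closed by an $\axrule$; every such axiom links an atom of $\fof P$ either to a dual atom coming from another occurrence inside $\fof P$ (an internal synchronisation) or to a dual atom offered by $A$ (a synchronisation with the environment). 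Reading off the atoms of the second kind and sequencing them in the order in which the derivation consumes the corresponding prefixes of $P$, I would assemble a \proc $Q$: a guarded receive for each $\lsend x{y}$ of $P$ closed against $A$, a guarded send for each $\lrecv x{y}$, and the analogous branching and selection for the labelled constructs. By construction $Q$ reproduces against $P$ exactly the axiom links of the original proof, so $\proves[\PIL]\lNu{\vec x}{(\fof P\lpar\fof Q)}$ and, by \Cref{thm:deadlock}, $\pnu{\vec x}(P\ppar Q)$ is \lfree; forgetting the now purely internal restrictions $\vec x$ leaves $P\ppar Q$ \lfree with $Q$ \stuck, so that $P$ has \progress.

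The step I expect to be the main obstacle is this extraction of the complement in ($\Leftarrow$): one must check that the $A$-atoms consumed by axioms against $\fof P$ can be coherently sequenced into a genuine \stuck \proc $Q$ for which $P\ppar Q$ is \rfree, which forces $Q$ to follow the sequential ($\lprec$) discipline of the dangling prefixes of $P$ rather than offering them in parallel, on pain of creating a race. A secondary difficulty is the nominal bookkeeping: when a retained restriction $\lnewsymb x_i$ binds a free name of $P$, the nominal variable it loads in the store must be popped on the $A$-side, and this has to be made compatible with realising that side as the translation of a \proc. This is exactly where the standing hypothesis that $P$ sends no restricted name is used, since it forbids scope extrusion of private names into the environment and so keeps the free/bound-name discipline of the synthesised $Q$ consistent.
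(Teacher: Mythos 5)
Your left-to-right direction follows the route the paper intends: the corollary carries no proof of its own precisely because, once the par-component $A$ of the \nuparcont is read as $\fof Q$ for a \stuck environment \proc $Q$, both directions are meant to be instances of \Cref{thm:deadlock} applied to $\pnu{\vec x}(P\ppar Q)$, via the identity $\lNu{\vec x}{(\fof P\lpar\fof Q)}=\fof{\pnu{\vec x}(P\ppar Q)}$ that you also start from; your canonical-complement repair of the \rfreedom side condition (so that \Cref{thm:deadlock} is actually applicable to $P\ppar Q$) is a sensible, if sketched, way to discharge a hypothesis the paper itself glosses over.

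The genuine gap is in your right-to-left direction, where you try to synthesise an environment \proc from an \emph{arbitrary} formula $A$. No such extraction can exist, because with $A$ arbitrary the implication you are proving is false. Take $P=\pnu x(\psend xa\pnil)$: it is \rfree, it sends no restricted name (the restricted $x$ is the subject of the send, the payload $a$ is free), and it has no \progress, since no \proc composed in parallel can ever synchronise on $P$'s private channel. Yet $\cC\ctx=\ctx\lpar\cneg{\fof P}$ is a legitimate \nuparcont and $\cC\ctx[\fof P]=\fof P\lpar\cneg{\fof P}$ is provable in $\PIL$ (by \Cref{thm:cutelim}.\ref{cutelim:1} and a $\lpar$; concretely $\nuloadr$, then $\nupopr$, $\lprec$, $\axrule$, $\mixr$, $\lunit$). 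Run on this derivation, your procedure would have to output $Q=\precv xa\pnil$ interacting on $P$'s restricted channel, which is not realisable by any parallel environment. The point of failure is that an arbitrary $A$ may contain the dual quantifier $\lyasymb$, whose $\nupopr$ rule lets atoms of $A$ capture the name loaded by $P$'s $\lnewsymb$; no formula of the form $\fof Q$ contains $\lyasymb$, which is exactly why the intended reading restricts $A$ to translations of \procs. Your appeal to the standing no-private-mobility hypothesis cannot close this hole: that hypothesis forbids restricted names as payloads, and says nothing about atoms whose subject is restricted, which is what the counterexample exploits. So either you restate the right-to-left direction with $A$ of the form $\fof Q$ --- in which case your extraction machinery is unnecessary and \Cref{thm:deadlock} applies directly, leaving only the \rfreedom and \stuck{}ness bookkeeping you already identified --- or the direction as you set it up is unprovable.
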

%%%%%%%%%%%%%%%%%%%%%%%%%%%%%%%%%%%%%%%%%%%%%%%%%%%%%%%%%%%%%%%%%%%

%%%%%%%%%%%%%%%%%%%%%%%%%%%%%%%%%%%%%%%%%%%%%%%%%%%%%%%%%%%%%%%%%%%
% FIG rule permutations
%%%%%%%%%%%%%%%%%%%%%%%%%%%%%%%%%%%%%%%%%%%%%%%%%%%%%%%%%%%%%%%%%%%
\begin{figure}[t]
	\centering
	\adjustbox{max width=\textwidth}{$\begin{array}{c}
		\vlderivation{
			\vlin{\rrule[1]_1}{}{\vdash \Gamma,\Delta}{
				\vlin{\rrule[1]_2}{}{\vdash\Gamma', \Delta}{
					\vlhy{\vdash  \Gamma', \Delta'}
				}
			}
		}
	\;\sim\;
		\vlderivation{
			\vlin{\rrule[1]_2}{}{\vdash \Gamma,\Delta}{
				\vlin{\rrule[1]_1}{}{\vdash\Gamma, \Delta'}{
					\vlhy{\vdash  \Gamma', \Delta'}
				}
			}
		}
	\qquad
		\vlderivation{
			\vliin{\rrule[2]}{}{\vdash \Gamma, \Sigma, \Delta}{
				\vlin{\rrule[1]}{}{\vdash\Gamma, \Sigma'}{
					\vlhy{\vdash  \Gamma', \Sigma'}
				}
			}{
				\vlhy{\vdash \Sigma', \Delta}
			}
		}
	\;\sim\;
		\vlderivation{
			\vlin{\rrule[1]}{}{\vdash \Gamma, \Sigma, \Delta}{
				\vliin{\rrule[2]}{}{\vdash\Gamma', \Sigma, \Delta
				}{
					\vlhy{\vdash\Gamma', \Sigma' }
				}{
					\vlhy{\vdash \Sigma', \Delta}
				}
			}
		}
	\qquad
		\vlderivation{
			\vlin{\lwith}{}{\vdash \Gamma, \bigwith_{i\in I} A_i}{
				\multiprem{\vlin{\rrule[1]}{}{\vdash\Gamma, A}{
					\vlhy{\vdash  \Gamma', A}
				}}{i\in I}
			}
		}
	\;\sim\;
		\vlderivation{
			\vlin{\rrule[1]}{}{\vdash \Gamma, \bigwith_{i\in I} A_i}{
				\vlin{\lwith}{}{\vdash\Gamma', \bigwith_{i\in I} A_i
				}{
					\multiprem{\vlhy{\vdash\Gamma', A_i }}{i\in I}
				}
			}
		}
	\\[20pt]
		\vlderivation{
			\vliin{\rrule[2]_1}{}{\vdash\Gamma,\Delta,\Sigma}{
				\vliin{\rrule[2]_2}{}{\vdash\Gamma, \Delta,\Sigma'}{
					\vlhy{\vdash\Gamma',\Sigma'}
				}{
					\vlhy{\vdash\Gamma',\Delta}
				}
			}{
				\vlhy{\vdash \Sigma'}
			}
		}
	\;\sim\;
		\vlderivation{
			\vliin{\rrule[2]_2}{}{\vdash\Gamma,\Delta,\Sigma}{
				\vliin{\rrule[2]_1}{}{\vdash\Gamma',\Sigma}{
					\vlhy{\vdash\Gamma',\Sigma'}
				}{
					\vlhy{\vdash\Sigma'}
				}
			}{
				\vlhy{\vdash \Gamma',\Delta}
			}
		}
	\qquad
		\vlderivation{
			\vliin{\rrule[2]}{}{
				\vdash\Gamma,\Delta,\bigwith_{i\in I} A_i
			}{
				\vlin{\lwith}{}{
					\vdash\Gamma,\Delta_1,\bigwith_{i\in I} A_i
				}{
					\multiprem{\vlhy{\vdash\Gamma,\Delta_1,A_i}}{i\in I}
				}
			}{
				\vlhy{\vdash \Delta_2}
			}
		}
	\;\sim\;
		\vlderivation{
			\vlin{\with}{}{\vdash\Gamma,\Delta,\bigwith_{i\in I} A_i}{
				\multiprem{\vliin{\rrule[2]}{}{\vdash\Gamma,\Delta,\bigwith_{i\in I} A_i}{
					\vlhy{\vdash\Gamma,\Delta_1,A_i}
				}{
					\vlhy{\vdash \Delta_2}
				}}{i\in I}
			}
		}
	\end{array}$}
	\caption{
		Rule permutations
		with
		$\rrule[1],\rrule[1]_1,\rrule[1]_2\in\set{\lpar,\exists,\lplus,\nuur}$
		and
		$\rrule[2],\rrule[2]_1,\rrule[2]_2\in\set{\lprec,\ltens,\mixr}$.
	}
	\label{fig:permutations}
\end{figure}
%%%%%%%%%%%%%%%%%%%%%%%%%%%%%%%%%%%%%%%%%%%%%%%%%%%%%%%%%%%%%%%%%%%
%%%%%%%%%%%%%%%%%%%%%%%%%%%%%%%%%%%%%%%%%%%%%%%%%%%%%%%%%%%%%%%%%%%

We conclude this section by showing that \progress for \procs which never send `private' channels can be easily captured in this new setting.
Specifically, we say that a \proc $P$ has \defn{private mobility} if it is of the form $P=\cP\ctx[\Apsend ax]$ for an $a$ bound by a $\nu$ in $\cP$.
We also denote by $\fofi{P}$ the formula obtained by replacing with a unit ($\lunit$) any atom in $\fof P$ of the form $\lsend xy$ or $\lrecv xy$ for any $x\in\set{\xs}$.
%%%%%%%%%%%%%%%%%%%%%%%%%%%%%%%%%%%%%%%%%%%%%%%%%%%%%%%%%%%%%%%%%%%
\begin{theorem}\label{prop:progress}
	Let $P$ be a \rfree \proc without private mobility.
	Then
	$P$ has \progress
	iff
	$\proves[\PIL]\fofi[\freenamesof{P}]{P}$.
\end{theorem}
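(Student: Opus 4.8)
The plan is to obtain \Cref{prop:progress} from the already-established \Cref{cor:progress}, reducing the statement to a purely logical equivalence. By \Cref{cor:progress}, a \rfree process $P$ has \progress iff there is a \nuparcont $\cK\ctx = \lNu{x_1}{\dots\lNu{x_n}{(\ctx \lpar A)}}$ such that $\proves[\PIL]\cK\ctx[\fof P]$. Hence it suffices to prove that $\proves[\PIL]\fofi[\freenamesof P]{P}$ holds iff there exists such a \nuparcont $\cK\ctx$ with $\proves[\PIL]\cK\ctx[\fof P]$. The guiding intuition is that unitising every atom on a free channel of $P$ models the \emph{most permissive} environment, i.e. one that is always ready to supply the dual of any action $P$ performs on a free name; the existence of \emph{some} completing environment and the success of this maximal one coincide. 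The hypothesis that $P$ has no private mobility is exactly what makes this sound: it guarantees that every object sent on a free channel is itself free, so no restricted name must escape its binder when we simulate the environment (cf. \Cref{rem:restriction}).

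To build the environment from a proof of $\fofi[\freenamesof P]{P}$, the key observation is the \emph{expansion law} $\proves[\PIL]\lunit \limp (\lsend xy \lpar \lrecv xy)$, which holds because $\vdash \lsend xy,\lrecv xy$ is an instance of $\axrule$ and may be joined with $\vdash\lunit$ through $\mixr$. Starting from a derivation of $\fofi[\freenamesof P]{P}$, I would reinstate each free-channel atom of $\fof P$ that had been replaced by $\lunit$: applying the expansion law inside the surrounding formula context by \Cref{thm:der}.\ref{der:2} turns each such $\lunit$ into the original atom par-ed with its dual. It then remains to extrude all the freshly introduced dual atoms to the top level so that they assemble into the environment formula $A$ under a prefix of $\lNu{}{}$'s. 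Since $\fof P$ contains no tensor and its free-channel atoms occur only as left arguments of $\lprec$, this extrusion is carried out with the commutative and associative laws for $\lpar$, the distribution of $\lwith$ and $\lplus$ over $\lNu{}{}$ and over $\lpar$, and the quantifier shifts of \Cref{prop:feq}, composed via transitivity of $\limp$ (\Cref{thm:cutelim}.\ref{cutelim:3}); no private mobility ensures that the freshness side conditions of these shifts are met.

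To recover $\proves[\PIL]\fofi[\freenamesof P]{P}$ from a given environment, I would start from a cut-free derivation of $\cK\ctx[\fof P] = \lNu{x_1}{\dots\lNu{x_n}{(\fof P \lpar A)}}$, which exists by cut-elimination (\Cref{thm:cutelim}.\ref{cutelim:2}) and enjoys the subformula property. In such a derivation every atom of $\fof P$ on a free channel $x_i\in\freenamesof P$ is consumed by an $\axrule$ against its dual, supplied either by $A$, by the $\nu$-machinery, or internally by $\fof P$. I would then perform surgery on the derivation: replace each such atom by $\lunit$, replace the matching $\axrule$ leaf $\vdash \lsend xy,\lrecv xy$ by a derivation of $\vdash \lunit,\lunit$ (two $\lunit$-rules joined by $\mixr$) or by a single $\lunit$-rule, and discard $A$ together with the now-vacuous binders using $\lNu x D \feq D$ (\Cref{prop:feq}). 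The outcome is a derivation of $\fofi[\freenamesof P]{P}$.

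The main obstacle I expect is the bookkeeping around binders in both directions. In the extrusion step the dual atoms introduced for free \emph{receives} appear underneath an $\lEx{}{}$ (the existential encoding input), so pulling them to the top requires interleaving the expansion law with the $\exists$- and $\lNu{}{}$-shifts while respecting their freshness conditions; and in the surgery step the replacement of matched atom pairs must stay consistent across every $\lwith$-branch and through every quantifier instantiation tracked by the store. This is precisely where the absence of private mobility is indispensable, since it excludes the unsound case of \Cref{rem:restriction} in which a restricted object would have to be matched by an environment action, a situation that unitisation cannot faithfully represent.
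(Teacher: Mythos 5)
Your reduction to \Cref{cor:progress} hinges on the intermediate claim that, for $P$ \rfree and without private mobility, $\proves[\PIL]\fofi[\freenamesof{P}]{P}$ holds iff there is a \nuparcont $\cK\ctx$ with $\proves[\PIL]\cK\ctx[\fof P]$ — and this claim is false, because the companion formula $A$ in a \nuparcont is an \emph{arbitrary} formula, not the encoding of a stuck process. Take $P=\pnu x \left(\precv xa \pnil\right)$: it is \rfree, has no private mobility (it contains no send at all), and does not have \progress, since its only action is on the restricted channel $x$; accordingly $\fofi[\freenamesof{P}]{P}=\fof P=\lNu x{\lExp a{\lrecv xa\lprec\lunit}}$ is not derivable. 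Yet the \nuparcont $\cK\ctx=\ctx\lpar\cneg{\fof P}$ satisfies $\proves[\PIL]\cK\ctx[\fof P]$ by \Cref{thm:cutelim}.\ref{cutelim:1} followed by a $\lpar$-rule. Your surgery direction breaks precisely on such derivations: atoms of $\fof P$ whose channel is \emph{restricted} are never replaced by units in $\fofi[\freenamesof{P}]{P}$, but in the given derivation they are axiom-matched against atoms of $A$ (the pair of dual nominal quantifiers $\lnewsymb/\lyasymb$ and the store make this linking possible); after ``discarding $A$'' those atoms are orphaned, and no derivation of $\fofi[\freenamesof{P}]{P}$ remains. The absence of private mobility does not exclude this situation: it constrains which names $P$ \emph{sends}, whereas the problem is interaction between $\fof P$ and an arbitrary $A$ on restricted channels — something a formula can do but a process environment cannot.

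Even if you repair this by restricting $A$ to encodings of stuck processes (the only reading under which \Cref{cor:progress} itself holds), your forward direction still has a gap: for a receive on a free channel, the expansion law introduces the dual atom \emph{underneath} the receive's existential, and no rewriting by implications-in-context can extrude it — the implication $\lEx y{(\lunit\lprec F)}\limp\bigl(\lEx y{(\lrecv xy\lprec F)}\lpar\lsend xv\bigr)$ is in general not derivable when $y$ occurs in $F$, because the $\forall$-eigenvariable arising from the negated premise and the witness $v$ cannot be identified. The environment's send has to be matched with the witness actually chosen by the $\exists$-rule inside a particular derivation of $\fofi[\freenamesof{P}]{P}$; that is, the environment must be reconstructed from a derivation, not produced by \Cref{thm:der}-style rewriting. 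This is exactly where the paper goes a different (and necessary) way: it never detours through \Cref{cor:progress}, but proves a simulation result for $\fofi[\freenamesof{P}]{\cdot}$ directly — reductions of $P\ppar Q$, for the stuck environment $Q$, become implications between unitized encodings, exploiting that $\fofi[\freenamesof{P}]{P\ppar Q}\feq\fofi[\freenamesof{P}]{P}$ since $\fofi[\freenamesof{P}]{Q}$ consists only of units (\Cref{prop:feq}) — and then reruns the computation-tree and block-decomposition argument of \Cref{thm:deadlock}.
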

\begin{proof}
	We prove a simulation result (as \Cref{lem:simulation}) for $\fofi[\freenamesof{P}]{P}$, and we conclude with the same argument used in the proof of \Cref{thm:deadlock}.
	It $P$ is \lfree, then we conclude as in \Cref{thm:deadlock}.
	Otherwise, since $P$ has \progress there is $Q$ such that $P \ppar Q$ is \lfree.
	By definition, we must have that
	$\namesof{Q}=\freenamesof{Q}$ (otherwise either $Q$ is not \stuck or $P \ppar Q$ is not \lfree)
	and that
	$\freenamesof{Q}=\freenamesof{P}=\xs$.
	Thus $\fofi{P\ppar Q}\feq \fofi{P}$ by the fact that $\fofi{Q}$ contains no atoms (i.e., only units) and $\lunit \lprec A \feq A \feq \lunit \lpar A$ (see \Cref{prop:feq}).
	\begin{itemize}
		\item If $P \ppar Q $ is \lfree for $Q$ \stuck and $P \ppar Q \redsem R$ via $\rsparr$
		then, $R = P' \ppar Q$ and $P \redsem P'$.
		If the core-reduction of $P\redsems P'$ is a $\rscomr$ or a $\rslabr$,
		then $\proves[\PIL]\fofi{P'}\limp \fofi{P}$;
		if the core-reduction is a $\rschoir$,
		there is a set of \procs $\set{P_\lab}_{\lab\in L}\ni P'$ such that
		$\proves[\PIL] \left(  \bigwith_{\lab\in L}\fofi{P_\lab}\right)\limp \fofi{ P}$.
		This is proven by induction on the entropy as in \Cref{lem:simulation}.
		\item
		If $P'\ppar Q\redsem P'$ not via $\rsparr$ and then the core-reduction is either a $\comr$ or a $\selr$.
		In this case can prove as that
		$\proves[\PIL]\fofi{P'}\limp \fofi{P'}$ because $\fofi{P'}\feq\fofi{P'\ppar Q}\feq \fofi{P'}$.
	\end{itemize}
\end{proof}
%%%%%%%%%%%%%%%%%%%%%%%%%%%%%%%%%%%%%%%%%%%%%%%%%%%%%%%%%%%%%%%%%%%

\begin{remark}
	To understand the requirement on private mobility in~\Cref{prop:progress}, consider the process $P=\pnu a (\psend ba  \psend ac \pnil)$. This process has progress, because $$(P\ppar \precv bx \precv xc \pnil) \steq  \pnu a (\psend ba  \psend ac \pnil \ppar \precv bx \precv xc \pnil ) \redsems \pnil \mydot$$
	However,
	$\fofi[\set{b}]{P} = \lNup a{\lunit \lprec \lsend ac \lprec \lunit}$
	is not derivable in $\PIL$.
	This makes our characterisation of progress as powerful as in previous work~\cite{CDM14}
	(where the condition is not made explicit but clearly necessary, see the definition of `co-process' therein).
\end{remark}

%%%%%%%%%%%%%%%%%%%%%%%%%%%%%%%%%%%%%%%%%%%%%%%%%%%%%%%%%%%%%%%%%%%
%%%%%%%%%%%%%%%%%%%%%%%%%%%%%%%%%%%%%%%%%%%%%%%%%%%%%%%%%%%%%%%%%%%
%%%%%%%%%%%%%%%%%%%%%%%%%%%%%%%%%%%%%%%%%%%%%%%%%%%%%%%%%%%%%%%%%%%
\section{Completeness of \Chors}\label{sec:choreo}
%%%%%%%%%%%%%%%%%%%%%%%%%%%%%%%%%%%%%%%%%%%%%%%%%%%%%%%%%%%%%%%%%%%
%%%%%%%%%%%%%%%%%%%%%%%%%%%%%%%%%%%%%%%%%%%%%%%%%%%%%%%%%%%%%%%%%%%
%%%%%%%%%%%%%%%%%%%%%%%%%%%%%%%%%%%%%%%%%%%%%%%%%%%%%%%%%%%%%%%%%%%

In this section we prove that any \lfree \fproc can be expressed as a choreography, as intended in the paradigm of choreographic programming~\cite{M13:phd}.
Key to this result is establishing a \emph{proofs-as-\chors} correspondence, whereby \chors can be seen as derivations in the $\PIL$ system.

To this end, we first introduce the syntax and semantics of \emph{\chors}, the typical accompanying language for describing their implementations in terms of located processes (the \emph{endpoint calculus}), and a notion of endpoint projection (EPP) from choreographies to processes.
We then define the sequent calculus $\ChorL$ operating on sequents in which (occurrences of) formulas are labelled by process names, and we conclude by establishing the proofs-as-choreographies correspondence.

%%%%%%%%%%%%%%%%%%%%%%%%%%%%%%%%%%%%%%%%%%%%%%%%%%%%%%%%%%%%%%%%%%%
%%%%%%%%%%%%%%%%%%%%%%%%%%%%%%%%%%%%%%%%%%%%%%%%%%%%%%%%%%%%%%%%%%%
\subsection{\Chors}\label{back:chore}
%%%%%%%%%%%%%%%%%%%%%%%%%%%%%%%%%%%%%%%%%%%%%%%%%%%%%%%%%%%%%%%%%%%
%%%%%%%%%%%%%%%%%%%%%%%%%%%%%%%%%%%%%%%%%%%%%%%%%%%%%%%%%%%%%%%%%%%

%%%%%%%%%%%%%%%%%%%%%%%%%%%%%%%%%%%%%%%%%%%%%%%%%%%%%%%%%%%%%%%%%%%
% FIG CHOR
%%%%%%%%%%%%%%%%%%%%%%%%%%%%%%%%%%%%%%%%%%%%%%%%%%%%%%%%%%%%%%%%%%%
\begin{figure}[t]
	\adjustbox{max width=\textwidth}{$\begin{array}{c}
			\mbox{\Chors}
			\\
			%		\begin{array}{c}
				\begin{array}{rccccccl}
					\chC,\chC_\lab  \coloneqq&
					\chnil		&\mid&	\gencom  \chC	&\mid&\genchois&\mid&\cnu x \chC^x \mbox{ (with $\chC^x$ containing no $(\nu x)$)}
					\\&
					\mbox{end}	&&	\mbox{communication}&&\mbox{choice}&&\mbox{restriction}
				\end{array}
			\\\hline\hline
			\mbox{\Opsem for Choreographies}
			\\
			\begin{array}{lcr@{\;}c@{\;}ll}
				\ccomr	&:&
				\gencom \chC
				&\osto{\tcom \pp\pq k}&
				\chC\fsubst xy
				\\
				\cchoir &:&
				\genchois
				&\osto{\;\;\tbra\pp k\;\;}&
				\genchoi
				&\mbox{for a }\lab_i\in L
				\\
				\clabr &:&
				\genchoi
				&\osto{\tcom \pp\pq k}&
				\chC_{\lab_i}
				&\mbox{if }
				\lab_i \in L'
				\\
				\crestr &:&
				\cnu x \chC
				&\osto{\;\;\tm\;\;}&
				\cnu x \chC'
				&\mbox{if }
				\chC\osto{\tm}\chC'
				\\\hline
				\cdelir &:&
				\gencom \chC
				&\osto{\;\;\tm'\;\;}&
				\gencom\chC'
				&
				\mbox{if }
				\chC\osto{\tm'}\chC'
				\\
				\cdelcr &:&
				\genchois
				&\osto{\;\;\tm'\;\;}&
				\Choics[\chC']{p}{q}{k}{\lab}{L}
				&
				\begin{array}{l}
					\mbox{if } \chC_\lab\osto{\tm'}\chC_\lab'
					\\
					\mbox{for all $\lab\in L$}
				\end{array}
				\\
				\multicolumn{6}{c}{\mbox{with } \pnamesin{\tm'}\cap\set{\pp,\pq,k}=\emptyset}
			\end{array}
		\end{array}$}
	\caption{Syntax and semantics for choreographies, where
		$\pp$ and $\pq$ are distinct process names in $\procset$,
		$x,y\in\varset$,
		$L\subseteq L' \subseteq \labelsset$,
		and
		$S_\lab$ are \sprocs
		% either \chors or \eprocs
		(see \Cref{rem:garbage}).
	}
	\label{fig:chor}
\end{figure}
%%%%%%%%%%%%%%%%%%%%%%%%%%%%%%%%%%%%%%%%%%%%%%%%%%%%%%%%%%%%%%%%%%%
%%%%%%%%%%%%%%%%%%%%%%%%%%%%%%%%%%%%%%%%%%%%%%%%%%%%%%%%%%%%%%%%%%%

In a choreographic language, terms (called \emph{choreographies}) are coordination plans that express the overall behaviour of a network of processes~\cite{montesi:book}.
The \defn{\chors} that we consider in this paper are generated by a set of \defn{process names} $\procset$, a set of variables $\varset$, and a set of \defn{selection labels} $\labelsset$ as shown in \Cref{fig:chor}.
A choreography can be either:
\begin{itemize}
	\item $\chnil$, the terminated choreography;
	\item $\gencom C$, a communication from a process $\pp$ to another $\pq$ with a continuation $C$ ($y$ is bound in $C$ and can appear only under $\pq$);
	\item $\genchois$, a choice by a process $\pp$ of a particular branch $L$ offered by another \proc $\pq$\footnote{
		The set $L'$ of labels the \proc $\pq$ can accept contains the set $L$ of labels $\pp$ can send.
		For the continuation of labels in $L'\setminus L$ we only allow \sprocs because we do not allow nested parallel in the target language of the projection (see next subsection).
	}; or
	\item $\cnu x \chC^x$, which restricts $x$ in a \chor $\chC^x$ in which the variable $x$ always occur free (i.e., no $\cnu x{}$ occurs in $\chC^x$).%
	\footnote{
		By allowing the construct $\cnu x $ only in the case in which $x$ is not bound in $\chC^x$, we ensure that \chors are always written using \emph{Barendregt's convention}.
		This means that each variable $x$ can be bound by at most one restriction $\nu$, and $x$ cannot appear both free and bound in a \chor $\chC$.
		As a result, we can adopt a lighter labelling discipline for the reduction semantics compared to the one used in \cite{CM13} -- see the rule $\crestr$ in \Cref{fig:chor}.
	}
\end{itemize}
Note that we consider communication of \proc names or variables only (that is, $k\in\procset\cup\varset$).
We say that a \chor is \defn{\cflat} if it is of the form $\cnus x \chCrf$ for a \defn{restriction-free} (i.e., containing no occurrences of $\nu$) \chor $\chCrf$.
In the same figure we also provide the \defn{\opsem} of our choreographic language, where each reduction step is labelled by a \defn{reduction label} $\tm$ from the following set.
\begin{equation}\label{eq:labels}
	\Set{\tcom \pp\pq k \qomma \tbra \pp k   \mid \pp,\pq\in\procset, \; k\in\varset}
\end{equation}
To each {reduction label} $\tm$ we associate the set $\pnamesin{\tm}$ of \procs names and variables occurring in it -- i.e. $\pnamesin{\tcom \pp\pq k}=\set{\pp,\pq,k}$ and $\pnamesin{ \tbra \pp k }=\set{\pp,k}$.

In the semantics, $\ccomr$ executes a communication while $\cchoir$ allows a process $\pp$ to make an internal choice.
Rule $\clabr$ then communicates a label from $\pp$ to $\pq$, which then continue with the \chor $\chC_\lab$ (but never with a \sproc $S_\lab$, see \Cref{rem:garbage}).
Rule $\crestr$ lifts reductions under restrictions.
Lastly, $\cdelir$ \resp{$\cdelcr$} models the standard out-of-order execution of independent communications that can be reduced by rule $\ccomr$ \resp{both rules $\cchoir$ and $\clabr$} -- this is the choreographic equivalent of parallel composition in process calculi~\cite{montesi:book}.
\begin{example}\label{ex:chC}
	The next choreography expresses the communication behaviour of the processes given in~\Cref{eq:intro}.
	\begin{equation}\label{eq:choreo}
		\comc paqax
		;
		\choic p{{\set{\lab}}}q{{\set{\lab,\lab'}}}y{
			\lab\colon \comc pbqby ; \chnil
			\\
			\lab'\colon \psend zc \pnil
		}
	\end{equation}
	It can be executed by applying rule $\ccomr$ and then rule $\clabr$.
	Note that we do not need to use rule $\cchoir$ before applying $\clabr$, because the set of labels $L$ in the choice constructor is a singleton.
\end{example}

\begin{remark}\label{rem:garbage}
	From the programmer's viewpoint, choice instructions may contain some unnecessary information since no label $\lab'\in L'\setminus L$ will never be selected during the execution of a choreography -- and thus no continuation will execute the process $S_{\lab'}$.
	This `garbage' code is typical of works on choreographies and logic~\cite{CMS18,carbone:multiparty_choreo}, and we share the same motivation: we want to be able to capture the entire \flat fragment of the \picalc, where such garbage code cannot be prohibited.
	For example, without garbage code, the choreography in \Cref{eq:choreo} would not be a complete representation of the \eproc in \Cref{eq:exampleIntroEndpoint} (see also \Cref{eq:intro}).
\end{remark}

%%%%%%%%%%%%%%%%%%%%%%%%%%%%%%%%%%%%%%%%%%%%%%%%%%%%%%%%%%%%%%%%%%%
%%%%%%%%%%%%%%%%%%%%%%%%%%%%%%%%%%%%%%%%%%%%%%%%%%%%%%%%%%%%%%%%%%%
\subsection{Endpoint Projection}
%%%%%%%%%%%%%%%%%%%%%%%%%%%%%%%%%%%%%%%%%%%%%%%%%%%%%%%%%%%%%%%%%%%
%%%%%%%%%%%%%%%%%%%%%%%%%%%%%%%%%%%%%%%%%%%%%%%%%%%%%%%%%%%%%%%%%%%

%%%%%%%%%%%%%%%%%%%%%%%%%%%%%%%%%%%%%%%%%%%%%%%%%%%%%%%%%%%%%%%%%%%
% FIG ENDPOINT CALCULUS
%%%%%%%%%%%%%%%%%%%%%%%%%%%%%%%%%%%%%%%%%%%%%%%%%%%%%%%%%%%%%%%%%%%
\begin{figure}[t]
	\adjustbox{max width=\textwidth}{$\begin{array}{c}
			\begin{array}{c}
				\mbox{Structural Equivalence}
				\\
				\proloc{\pp}\pnil \ppar P
				\;\steq\;
				P
				\qquand
				\pnu{x_1}\cdots\pnu{x_k}\prod_{i=1}^{n}\proloc{\pp_i}{S_i}
				\;\steq\;
				\pnu{x_{\tau(1)}}\cdot\pnu{x_{\tau(k)}}\prod_{i=1}^{n}\proloc{\pp_{\sigma(i)}}{S_{\sigma(i)}}
				\\
				\mbox{for any $\sigma$ permutation over $\intset1n$ and $\tau$ over $\intset1k$}
				\\\hline\\[-10pt]
				\begin{array}{c}
					\mbox{\OpSem}
					\\
					\begin{array}{l@{\;:\;}rcll}
						\Ecomr	&
						\cN\ctx[
						\proloc\pp {\psend k x S} \ppar  \proloc \pq {\precv k y S'}
						]
						&\osto{\tcom \pp\pq k}&
						\cN\ctx[
						\proloc\pp{S} \ppar  \proloc\pq{S'\fsubst xy}
						]
						\\
						\Elsendr	&
						\cN\Ctx[
						\proloc \pp{\plsend{k}{\lab: S_\lab}{\lab\in L}}
						]
						&\osto{\;\;\tbra \pp k\;\;}&
						\cN\Ctx[
						\proloc \pp{\plsend{k}{\lab_i:S_{\lab_i}}{}}
						]
						& \mbox{ for each } \lab_i\in L
						\\
						\Elrecvr	&
						\cN\Ctx[
						\proloc \pp{\plsend {k}{\lab_i:S_{\lab_i}}{}}
						\ppar
						\proloc \pq{\plrecv {k}{\lab : S'_\lab}{\lab\in L}}
						]
						&\osto{\tcom \pp\pq k}&
						\cN\Ctx[
						\proloc \pp{S_{\lab_i}}
						\ppar
						\proloc \pq{S'_{\lab_i}}
						]
						& \mbox{ if } \lab_i\in L
					\end{array}
				\end{array}
			\end{array}
		\end{array}$}
	\caption{
		Simplified presentation of the structural equivalence and \opsem for the endpoint calculus,
		where $\cN\Ctx[P]\steq\pnus x \left( P \ppar \prod_{i=1}^{n}\proloc{\pp_i}{T_i} \right)$.
	}
	\label{fig:EPC}
\end{figure}
%%%%%%%%%%%%%%%%%%%%%%%%%%%%%%%%%%%%%%%%%%%%%%%%%%%%%%%%%%%%%%%%%%%
%%%%%%%%%%%%%%%%%%%%%%%%%%%%%%%%%%%%%%%%%%%%%%%%%%%%%%%%%%%%%%%%%%%

Our choreographies can be mechanically translated into processes via the standard technique of endpoint projection (EPP)~\cite{montesi:book}.
To simplify the presentation of projection, we adopt the standard convention of enriching the language of \procs with process names labelling each \seqcomp of a \fproc~\cite{H07,montesi:book}.
That is, a \fproc of the form $\pnus x\left(S_1 \ppar \cdots \ppar S_n\right)$ is represented by an \defn{\eproc}
\begin{equation}
	\pnus x\left(\proloc{\pp_1}{S_1} \ppar \cdots \ppar \proloc{\pp_n}{S_n}\right)
	\quor
	\pnus x\left(\prod_{i=1}^{n}\proloc{\pp_i}{S_i}\right)
\end{equation}
where all names $\pp_1,\ldots,\pp_n$ are distinct.\footnote{In the literature, a process $P$ with name $\pp$ is usually written $\pp[P]$~\cite{H07,montesi:book}. We adopt the alternative writing $\proloc{\pp}{P}$ to avoid confusion with the notation used for contexts.}
The process calculus over these processes is dubbed \defn{endpoint calculus}.

The definition of \defn{endpoint projection} is provided by the partial function $\mathsf{EPP}$ in \Cref{fig:EPP} (it is defined only for \cflat \chors, as in~\cite{CM13}).
It is a straightforward adaptation to our syntax of the textbook presentation of projection~\cite{montesi:book}. In particular, it uses a \defn{merge} operator $\sqcup$ (originally from ~\cite{CHY12}) to support propagation of knowledge about choices. That is, if a process $\pr$ needs to behave differently in two branches of a choice communicated from $\pp$ to $\pq$, it can do so by receiving different labels in these two branches. Merge then produces a term for $\pr$ that behaves as prescribed by the first (respectively the second) branch when it receives the first (respectively the second) label.
If $\EPP \chC$ is defined for $\chC$ we say that $\chC$ is \defn{\prochor}.
\begin{example}
The EPP of the \chor in \Cref{eq:choreo} is
% the \eproc
\begin{equation}\label{eq:exampleIntroEndpoint}
	\adjustbox{max width=.9\textwidth}{$
		\pnu x\pnu y
		\left(
		\proloc{\pp}{\psend xa
			\plsend y{\lab: \psend yb \pnil}{}
		}
		\ppar
		\proloc{\pq}{
			\precv xa
			\plrecv y {\lab: \precv yb \pnil , \lab': \psend zc \pnil}{}
		}
		\right)
	$}
\end{equation}
which is precisely the one in~\Cref{eq:intro} annotated with process names.
\end{example}

Structural equivalence and \opsem for the endpoint calculus are obtained by the one of the \picalc assuming that each structural equivalence ($\steq$) and reduction step ($\redsem$) preserves the process names.
Note that, for the purpose of studying \lfreedom, structural equivalence and reduction rules can be simplified as shown in \Cref{fig:EPC}.
Each reduction step is labelled by the same labels used in the \opsem of choreographies (see \Cref{eq:labels}), allowing us to retain the information about which \seqcomps and channel are involved in each reduction step.

%%%%%%%%%%%%%%%%%%%%%%%%%%%%%%%%%%%%%%%%%%%%%%%%%%%%%%%%%%%%%%%%%%%
% FIG CHOR
%%%%%%%%%%%%%%%%%%%%%%%%%%%%%%%%%%%%%%%%%%%%%%%%%%%%%%%%%%%%%%%%%%%
\begin{figure}[t]
	\adjustbox{max width=\textwidth}{$\begin{array}{c}
		\EPP\chC
		=
		\begin{cases}
			\pnus x \EPP{\chCrf}
			&
			\mbox{if $\chC=\pnus x \chCrf$ with $\chCrf$ restriction-free}
			\\
			\proloc{\pp_0}{\pnil}
			&
			\mbox{if } \chC=\chnil \mbox{ (for a given $\pp_0$)}
			\\
			\EPP[\pp_1]{\chC}\ppar \cdots \ppar \EPP[\pp_n]{\chC}
			&
			\mbox{otherwise, with $\pp_1,\ldots ,\pp_n$ all process names in $\chC$}
		\end{cases}
	\\[10pt]\hline\\[-8pt]
		\begin{array}{r@{=}l}
			\EPP[\pp_i]{\chnil}=\pnil
		\qquad\qquad
			\EPP[\pp_i]{\chS}
			&
			\begin{cases}
				\EPP[\pp_i]{\chS} & \mbox{if $\chS$ is a \chor}
				\\
				\EPP[\pp_i]{\proloc{\pp_i}{S_i}} & \mbox{if
				$\chS=\prod_{i=1}^{n}\proloc{\pp_i}{S_i}$ is a \proc}
			\end{cases}
		\\
			\EPP[\pp_i]{\gencom \chC}
			&
			\begin{cases}
				\psend k x \EPP[\pp_i]\chC	&\mbox{if } \pp_i=\pp
				\\
				\precv k y \EPP[\pp_i]\chC	&\mbox{if } \pp_i=\pq
				\\
				\EPP[\pp_i]\chC				&\mbox{if }\pp_i\notin\set{\pp,\pq}
			\end{cases}
		\\
			\EPP[\pp_i]{\genchois}
			&
			\begin{cases}
				\plsend {k}{\lab : \EPP[\pp_i]{\chC_\lab}}{\lab\in L}
				&
				\mbox{if } \pp_i=\pp
				\\
				\plrecv {k}{
					\begin{array}{l|@{\;}l}
						\lab : \EPP[\pp_i]{\chC_\lab} 	& \lab\in L
						\\
						\lab : S_\lab					& \lab\in L'\setminus L
					\end{array}
				}{}
				&
				\mbox{if } \pp_i=\pq
				\\
				\Merge_{\lab \in L} \EPP[\pp_i]{\chC_\lab}
				&
				\mbox{if }\pp_i\notin\set{\pp,\pq}
			\end{cases}
		\end{array}
	\\\\[-5pt]\hline\hline\\[-5pt]
		\begin{array}{c}
			%parallelo-nu
			\big(\pnus x \left(\prod_{i=1}^{n}\proloc{\pp_i}{S_i} \right) \big)
			\Merge
			\big(\pnus x \left(\prod_{i=1}^{n}\proloc{\pp_i}{T_i} \right)\big)
			=
			\pnus x \left( \prod_{i=1}^{n}\proloc{\pp_i}{S_i \merge T_i} \right)
		\\[10pt]
			%nil
			\pnil \merge \pnil = \pnil
			\qquad
			%send/recv
			(\psend xy T) \Merge (\psend xy S) = \psend xy T \merge S
			\qquad
			(\precv xy T) \Merge (\precv xy S) = \precv xy T \merge S
		\\[8pt]
			%lable-recv
			\left(\plrecv x{\lab : P_\lab}{\lab\in L} \right)
			\Merge
			\left(\plrecv x{\lab : Q_{\lab}}{\lab \in L'} \right)
			=
			x \lcoseq
			\left(
				\set{\lab : P_{\lab} \merge Q_{\lab}}_{\lab \in L \cap L'} \cup
				\set{\lab : P_\lab}_{\lab \in L \setminus L'}
				\cup
				\set{\lab : Q_{\lab}}_{\lab \in L' \setminus L}
			\right)
		\\[8pt]
			\mbox{ only if $L=L'$ : }
			\left(\plsend x{\lab : P_\lab}{\lab\in L} \right)
			\Merge
			\left(\plsend x{\lab : Q_{\lab}}{\lab \in L'} \right)
			=
			x \lseq
			\set{\lab : P_{\lab} \merge Q_{\lab}}_{\lab \in L}
		\end{array}
	\end{array}$}
	\caption{Endpoint Projection for \cflat \chors, and the merge operator ($\merge$).}
	\label{fig:EPP}
\end{figure}
%%%%%%%%%%%%%%%%%%%%%%%%%%%%%%%%%%%%%%%%%%%%%%%%%%%%%%%%%%%%%%%%%%%
%%%%%%%%%%%%%%%%%%%%%%%%%%%%%%%%%%%%%%%%%%%%%%%%%%%%%%%%%%%%%%%%%%%

%%%%%%%%%%%%%%%%%%%%%%%%%%%%%%%%%%%%%%%%%%%%%%%%%%%%%%%%%%%%%%%%%%%
\begin{nota}\label{def:chOrd}
	If $P$ and $Q$ are \eprocs, we write $P\chOrd Q$ iff $P \merge Q = P$.
\end{nota}
\begin{restatable}{theorem}{EPPthm}\label{thm:choreo}
	Let $\chC$ be a \prochor \cflat \chor.
	\begin{itemize}
		\item\label{thm:choreo.item1} \textbf{Completeness}: if $\chC\osto{\;\tm\;}\chC'$, then $\EPP\chC\osto{\;\tm\;}P \chOrd \EPP{\chC'}$;
		\item\label{thm:choreo.item2} \textbf{Soundness}:
		if $P\chOrd\EPP{\chC}$ and $P\osto{\;\tm\;} P'$, then there is a \chor $\chC'$ such that $\chC\osto{\;\tm\;}\chC'$ and $P'\chOrd\EPP{\chC'}$.
	\end{itemize}
\end{restatable}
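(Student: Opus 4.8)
The plan is to prove Completeness and Soundness by induction, supported by a small algebra for the merge operator $\merge$ and the preorder $\chOrd$ it induces (\Cref{def:chOrd}). First I would establish, for projectable terms, that $\merge$ is commutative, associative and idempotent where defined, so that $\chOrd$ is a preorder and $\Merge_{\lab\in L}\EPP[\pr]{\chC_\lab}\chOrd\EPP[\pr]{\chC_{\lab_i}}$ for every $\lab_i\in L$ by absorption. The decisive structural fact, read off the clauses of \Cref{fig:EPP}, is that $\merge$ is \emph{rigid on outputs}: two value sends, or two label sends, merge only when identical (for label sends, only when $L=L'$), whereas inputs and branchings are the only constructs that genuinely combine. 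Hence $P\chOrd Q$ forces $P$ and $Q$ to perform exactly the same sends and selections, and allows $P$ only to offer \emph{additional} input branches. From this I would derive the compatibility lemmas that drive both directions: a monotonicity lemma stating that matching $\tm$-reductions preserve $\chOrd$, i.e. if $P\chOrd Q$ and both reduce on the redex selected by $\tm$ then the reducts are again related; and the fact that every output- or selection-driven reduction is insensitive to the extra input branches of a $\chOrd$-larger term.

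For Completeness I would induct on the derivation of $\chC\osto{\tm}\chC'$, casing on the last rule of \Cref{fig:chor}. For $\ccomr$ the projections of $\pp$ and $\pq$ form a matching send/receive while all spectator projections coincide for $\chC$ and its continuation, so one $\Ecomr$ step yields exactly $\EPP{\chC\fsubst xy}$; $\clabr$ is analogous through $\Elrecvr$, using $\lab_i\in L\subseteq L'$ to guarantee the receiver offers the selected label (if $\lab_i\in L'\setminus L$ the continuation is the garbage \sproc $S_{\lab_i}$ of \Cref{rem:garbage} and the projections still agree). The single case yielding a strict $\chOrd$ is $\cchoir$: the projection of $\pp$ is a label send from which $\Elsendr$ commits to $\lab_i$, while every spectator $\pr\notin\set{\pp,\pq}$ retains its merged term $\Merge_{\lab\in L}\EPP[\pr]{\chC_\lab}$, which by absorption is $\chOrd\EPP[\pr]{\chC_{\lab_i}}=\EPP[\pr]{\genchoi}$; this is precisely where the inequality is created. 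The out-of-order rules $\cdelir$ and $\cdelcr$ follow from the induction hypothesis on the continuation(s): since $\pnamesin{\tm'}\cap\set{\pp,\pq,k}=\emptyset$, the reduction touches only spectator projections, which are unaffected by the head prefix, so the step can be fired inside the network context $\cN\Ctx$ of \Cref{fig:EPC}; for $\cdelcr$ the hypothesis is applied branchwise and the results recombined with $\merge$, invoking its monotonicity under reduction. Finally $\crestr$ merely carries the leading $\pnus x$, with which projection commutes.

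For Soundness I would induct on the head constructor of $\chC$ and case on $\tm$ together with the endpoint rule justifying $P\osto{\tm}P'$. Output-rigidity is the engine: since $P\chOrd\EPP{\chC}$, the process initiating $\tm$ has in $P$ the same send/selection as in $\EPP{\chC}$, which pins down the head of $\chC$. If $\tm=\tcom\pp\pq k$ comes from $\Ecomr$ then $\EPP[\pp]{\chC}$ is a value send, so $\chC$ heads with $\gencom$ and $\chC\osto{\tm}\chC'$ by $\ccomr$; if it comes from $\Elrecvr$ then $\pp$ carries a singleton label send, so $\chC$ heads with $\genchoi$ and $\chC\osto{\tm}\chC'$ by $\clabr$, the side condition $\lab_i\in L'$ holding because $\pp$ could only have committed to a label already present in $\chC$; if $\tm=\tbra\pp k$ comes from $\Elsendr$ then $\chC$ heads with $\genchois$ and $\chC\osto{\tm}\chC'$ by $\cchoir$. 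In each case $\chC\osto{\tm}\chC'$ is obtained; applying Completeness to it gives a reduct $R$ of $\EPP{\chC}$ with $R\chOrd\EPP{\chC'}$, and the monotonicity lemma applied to the matching $\tm$-reductions of $P$ and $\EPP{\chC}$ gives $P'\chOrd R$, whence $P'\chOrd\EPP{\chC'}$ by transitivity. When $\tm$ does not involve the head communication, its names are disjoint from $\set{\pp,\pq,k}$, so the corresponding $\cdelir$/$\cdelcr$ rule applies and the claim follows from the induction hypothesis on the continuation(s); the leading restrictions are again stripped via $\crestr$.

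The main obstacle is the merge algebra and its interaction with the two semantics, rather than the case analyses themselves. The delicate points are: (i) proving that $\merge$ is associative and idempotent \emph{exactly} on the projectable fragment, and that projectability is preserved by both choreographic and endpoint reductions, so that every $\EPP{\chC'}$ occurring above is defined; (ii) the monotonicity lemma for $\chOrd$, whose proof must recurse through the network-context rules and, at branchings, use output-rigidity to exclude reductions that the extra input branches of a $\chOrd$-larger term might seem to enable but the choreography does not; and (iii) the bookkeeping of the network contexts $\cN\Ctx$ and of structural equivalence when commuting independent reductions, which is exactly where the disjointness side conditions $\pnamesin{\tm'}\cap\set{\pp,\pq,k}=\emptyset$ of $\cdelir$/$\cdelcr$ are matched with the independence that $\cN\Ctx$ provides.
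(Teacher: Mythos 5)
Your proposal is correct and, on Completeness, follows essentially the same route as the paper: the same case analysis on the last rule of the choreographic semantics, with the same key observations (exact correspondence of reducts for $\ccomr$ and $\clabr$, the inequality $\chOrd$ arising only at $\cchoir$ from the spectators' merged terms, induction plus the disjointness side condition for $\cdelir$/$\cdelcr$, and commutation of projection with restriction for $\crestr$). Your merge algebra plays the role of the paper's two auxiliary lemmas: \Cref{lem:mer} (componentwise decomposition of $\chOrd$ over parallel composition, closure under prefixes and under $\merge$) and \Cref{lemma:choreo} (a choreographic reduction preserves projectability and keeps spectator projections $\chOrd$-related), the latter being exactly your ``spectators retain their merged term, then absorb'' observation together with your delicate point (i). Where you genuinely diverge is Soundness. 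The paper proves it by induction on the structure of $\chC$, in each case directly exhibiting $\chC'$ and verifying $P'\chOrd\EPP{\chC'}$ by recomputing the reducts and applying \Cref{lem:mer}; you instead identify $\chC'$ the same way (output-rigidity for the head cases, induction plus $\cdelir$/$\cdelcr$ otherwise) but then conclude by composing Completeness with a simulation-up-to-$\chOrd$ (``monotonicity'') lemma and transitivity of $\chOrd$. This factoring buys you non-duplication of the per-case reduct computations across the two directions, but it costs two ingredients the paper never has to state: the monotonicity lemma itself, whose proof needs that $\chOrd$ is preserved by the substitution performed in $\Ecomr$ and that for $\tbra{\pp}{k}$ steps both sides commit to the \emph{same} label (which the label $\tm$ alone does not record, so ``matching reductions'' must be defined with care), and transitivity of $\chOrd$, i.e.\ associativity and idempotence of $\merge$ on the projectable fragment. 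One shared caveat: your claim that any endpoint reduction not matching the head of $\chC$ has names disjoint from $\set{\pp,\pq,k}$ is asserted rather than proved; the paper's proof rests on the same implicit assumption (its ``independent'' case posits that the reduction happens entirely inside the spectator part $R$), so this is not a gap of your proposal relative to the paper's own argument.
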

\begin{proof}
	The proof is obtained by adapting the proof provided in, e.g., \cite{M13:phd,montesi:book,mon:yos:compositional} to the
	language we consider in this paper. Details are given in \Cref{app:proofs}.
\end{proof}
%%%%%%%%%%%%%%%%%%%%%%%%%%%%%%%%%%%%%%%%%%%%%%%%%%%%%%%%%%%%%%%%%%%

%%%%%%%%%%%%%%%%%%%%%%%%%%%%%%%%%%%%%%%%%%%%%%%%%%%%%%%%%%%%%%%%%%%
%%%%%%%%%%%%%%%%%%%%%%%%%%%%%%%%%%%%%%%%%%%%%%%%%%%%%%%%%%%%%%%%%%%
\subsection{A Sequent Calculus for the Endpoint Calculus}\label{subsec:endpoint}
%%%%%%%%%%%%%%%%%%%%%%%%%%%%%%%%%%%%%%%%%%%%%%%%%%%%%%%%%%%%%%%%%%%
%%%%%%%%%%%%%%%%%%%%%%%%%%%%%%%%%%%%%%%%%%%%%%%%%%%%%%%%%%%%%%%%%%%

To establish the correspondence between proofs of (formulas encoding) \eprocs and \chors, we enrich the syntax of formulas by adding labels (on sub-formulas) carrying the same information of the process names used in the syntax of the endpoint calculus.
More precisely, we consider a translation $\fofp{\cdot}$ from \eprocs to \defn{annotated formulas} of the form $\named A\pp$.
\begin{definition}
	For any \eproc $P=\pnus x \left(\prod_{i=1}^{n}\proloc{\pp_i}{S_i}\right)$ we define the formula $\fofp{P}=\lNu {x_1}{\dots \lNu {x_n}{\named{\fof{S_1}}{\pp_1}\lpar\cdots\lpar \named{\fof{S_n}}{\pp_n}}}$, and the sequent $\ffof P=\named{\fof{S_1}}{\pp_1},\ldots, \named{\fof{S_n}}{\pp_n}$.
\end{definition}
Note that because of the subformula property%
\footnote{
	Assuming an initial $\alpha$-renaming on $P$ such that $P$ is unambiguous, and such each variable bound by a receive action is the same as the unique (because of \rfreedom) variable sent by the matching send action.
	{For example, we would write $\Apsend xa \ppar \Aprecv xa$ instead of $\Apsend xa \ppar \Aprecv xb$.}
}
of rules in $\PIL$, such labelling can be propagated in a derivation by labelling each active formula of a rule (which is a sub-formula of one of the principal formulas of the rule) with the same process name of the corresponding active formula.

%%%%%%%%%%%%%%%%%%%%%%%%%%%%%%%%%%%%%%%%%%%%%%%%%%%%%%%%%%%%%%%%%%%
\begin{example}
	Consider the following derivation in $\PIL$ with conclusion the formula $\fofp{\proloc \pp{\psend xy \pnil} \ppar \proloc \pq {\precv xy \pnil}}$:
	\vspace{-10pt}
	\begin{equation}\label{eq:exampleNamed}
		\adjustbox{max width=.45\textwidth}{$\vlderivation{
			\vlin{\lpar}{}{
				\vdash
				\named[pzbrickred]{\lsend xy \lprec \lunit}{\pp}
				\lpar
				\named[pzgreen]{\lExp x{\lrecv xy \lprec \lunit}}{\pq}
			}{
				\vlin{\exists}{}{
					\vdash
					\named[pzbrickred]{\lsend xy \lprec \lunit}\pp,
					\named[pzgreen]{\lExp x{\lrecv xy \lprec \lunit}}\pq
				}{
					\vliin{\lprec}{}{
						\vdash
						\named[pzbrickred]{\lsend xy \lprec \lunit}\pp,
						\named[pzgreen]{\lrecv xy \lprec \lunit}\pq
					}{
						\vlin{\axrule}{}{
							\vdash
							\named[pzbrickred]{\lsend xy}\pp,
							\named[pzgreen]{\lrecv xy}\pq
						}{\vlhy{}}
					}{
						\vliin{\mixr}{}{\vdash \named[pzbrickred]\lunit\pp, \named[pzgreen]\lunit\pq}{
							\vlin{\lunit}{}{\vdash \named[pzbrickred]\lunit\pp}{\vlhy{}}
						}{
							\vlin{\lunit}{}{\vdash \named[pzgreen]\lunit\pq}{\vlhy{}}
						}
					}
				}
			}
		}$}
	\end{equation}
\end{example}
%%%%%%%%%%%%%%%%%%%%%%%%%%%%%%%%%%%%%%%%%%%%%%%%%%%%%%%%%%%%%%%%%%%

%%%%%%%%%%%%%%%%%%%%%%%%%%%%%%%%%%%%%%%%%%%%%%%%%%%%%%%%%%%%%%%%%%%
% Fig System-ChorL
%%%%%%%%%%%%%%%%%%%%%%%%%%%%%%%%%%%%%%%%%%%%%%%%%%%%%%%%%%%%%%%%%%%
\begin{figure}[t]
	\centering
	\adjustbox{max width=\textwidth}{$\begin{array}{c}
		%%%%%%%%%%%%%%%%NUPAR%%%%%%%%%%%%%%
		\vlinf{\Crestr}{m\in \N}{
			\vdash
			\lNu {x_1}{\dots\lNu{x_m}{\left(
					\named[pzbrickred]{\fof {T_1}}{\pp_1}
					\lpar \cdots\lpar
					\named[pzgreen]{\fof {T_n}}{\pp_n}
					\right)}}
		}{
			\vdash \named[pzbrickred]{\fof{T_1}}{\pp_1},\ldots,\named[pzgreen]{\fof {T_n}}{\pp_n}
		}
		\qquad\qquad
		%%%%%%%%%%%%%%%%COM%%%%%%%%%%%%%%%%
		\vlinf{\Ccomr}{
			%			\text{\scriptsize if $T_i\ppar T_j\redsem T_i'\ppar T_j'$ via $\rscomr$}
		}{
			\vdash \Gamma,
			\named[pzbrickred]{\fof{\psend xy T}}{\pp},
			\named[pzgreen]{\fof{\precv xz T'}}{\pq}
		}{
			\vdash \Gamma,
			\named[pzbrickred]{\fof{T}}{\pp},
			\named[pzgreen]{\fof{T'\fsubst yz}}{\pq}
		}
		\\[5pt]
		%%%%%%%%%%%%%%INIT%%%%%%%%%%%%%%%%
		%	\vlinf{\Caxr}{}{\vdash \lunit,\ldots, \lunit}{}
		%\qquad\qquad
		%%%%%%%%%%%%%%INIT con LABEL%%%%%%%%%%%%%%%%
		\vlinf{\Caxr}{}{\vdash \named{\lunit}{\pp_1},\ldots, \named{\lunit}{\pp_n}}{}
		\qquad\qquad
		%%%%%%%%%%%%%%CHOI%%%%%%%%%%%%%%%%
		\vlinf{\Cchor}{L \subseteq L'}{
			\vdash
			\Gamma,
			\named[pzbrickred]{\fof{\plsend{k}{\lab:T_\lab}{\lab\in L}}}{\pp} , \named[pzgreen]{\fof{\plrecv{k}{\lab:T'_\lab}{\lab\in L'}}}{\pq}
		}{
			\left\{\vdash
			\Gamma,
			\named[pzbrickred]{\fof{T_\lab}}{\pp} , \named[pzgreen]{\fof{T'_\lab}}{\pq}
			% , \named{\fof{T_1}}{\pp_1},\ldots,\named{\fof {T_n}}{\pp_n}
			\right\}_{\lab\in L}
		}
	\end{array}$}
	\caption{Sequent calculus rules for the system $\ChorL$.}
	\label{fig:ChorL}
\end{figure}
%%%%%%%%%%%%%%%%%%%%%%%%%%%%%%%%%%%%%%%%%%%%%%%%%%%%%%%%%%%%%%%%%%%
%%%%%%%%%%%%%%%%%%%%%%%%%%%%%%%%%%%%%%%%%%%%%%%%%%%%%%%%%%%%%%%%%%%

%%%%%%%%%%%%%%%%%%%%%%%%%%%%%%%%%%%%%%%%%%%%%%%%%%%%%%%%%%%%%%%%%%%
%%%%%%%%%%%%%%%%%%%%%%%%%%%%%%%%%%%%%%%%%%%%%%%%%%%%%%%%%%%%%%%%%%%
\begin{figure}[t]
	\centering
	\adjustbox{max width=.95\textwidth}{$\begin{array}{c|c}
		\Caxr&
		\dD_{n}=\vlderivation{
			\vliiiq{\mixr}{}{\vdash
				\named{\unit}{\pp_1} , \dots , \named{\unit}{\pp_n}
			}{
				\vlin{}{}{\vdash
				\named{\unit}{\pp_1}
				%\unit
				}{\vlhy{}}
			}{\vlhy{\cdots}}{
				\vlin{}{}{\vdash
				%\unit
				\named{\unit}{\pp_n}
				}{\vlhy{}}
			}
		}
	\\\\[-5pt]\hline\\[-10pt]
		\Ccomr&
		\dD_{\tuple{\pp,x,\pq,y,k}}=
		\vlderivation{
			\vlin{\exists}{}{
				\vdash
				\Gamma
				,
				\named[pzbrickred]{\lsend{
					k
					}{x}\lprec\fof{S}}{\pp}

				,
				\named[pzgreen]{\lExp y{\lrecv{
					k}{y}\lprec\fof{S'}
					}}{\pq}
			}{
				\vliin{\lprec}{}{
					\vdash
					\Gamma
					,
					\named[pzbrickred]{\lsend{
						k}{x}\lprec\fof{S}}{\pp}

					,
					\named[pzgreen]{\lrecv{
						k}{x}\lprec \left(\fof{S'}
						\fsubst{x}{y}\right)}{\pq}
				}{
					\vlin{\axrule}{}{
						\vdash \named[pzbrickred]{\lsend kx}{\pp},\named[pzgreen]{\lrecv kx}{\pq}
					}{\vlhy{}}
				}{
					\vlhy{
						\qquad
						\vdash
						\Gamma,
						\named[pzbrickred]{\fof{S}}{\pp}
						,
						\named[pzgreen]{\fof{S'}
						\fsubst{x}{y}}{\pq}
					}
				}
			}
		}
	\\\\[-10pt]\hline\\[-5pt]
		\Cchor&
		\dD_{\tuple{\pp,L,\pq,L',k}}=
		\vlderivation{
		  	\vlin{\lwith}{}{
		  		\vdash
		  		\Gamma,
		  		\named[pzbrickred]{\biglbra[\ell\in L]{\left(\lsend{k}\ell\lprec\fof{S_{\ell}} \right)}}{\pp}, \named[pzgreen]{\biglsel[\ell \in L']{\left(\lrecv{k}\ell\lprec\fof{S'_{\ell}} \right)}}{\pq}
			}{\multiprem{
					\vliin{\lprec}{}{\vdash
					\Gamma,
					\named[pzbrickred]{\lsend{k}\ell\lprec\fof{S_{\ell}}}{\pp}, \named[pzgreen]{\lrecv{k}{\ell}\lprec\fof{S'_{\ell}}}{\pq}
					}{
						\vlin{\axrule}{}{
							\vdash \named[pzbrickred]{\lsend k\lab}{\pp} , \named[pzgreen]{\lrecv k\lab}{\pq}
						}{\vlhy{}}
					}{
						\vlhy{\vdash \named[pzbrickred]{\fof{S_\ell}}{\pp}, \named[pzgreen]{\fof{S'_{\ell}}}{\pq}}
					}
				}{\lab\in L}
	  		}
		}
	\\\\[-10pt]\hline\\[-5pt]
		\Crestr&
		\dD_{\emptyset}= \vlderivation{
			\vliq{\lpar}{}{\vdash
				\fof{S_1}\lpar \ldots\lpar\fof {S_n}
			}{
				\vlhy{\vdash
					\fof{S_1}, \ldots, \fof {S_n}
				}
			}
		}
		\quor
		\dD_{\tuple{x_1, \dots, x_m}}= \vlderivation{
			\vliq{\nurule}{}{\vdash
				\lNu{x_1}{\dots\lNu{x_m}{\left(
					\fof{S_1}
					\lpar \ldots \lpar
					\fof {S_n}
				\right)}}
			}{
				\vliq{\lpar}{}{\vdash
					\fof{S_1}
					\lpar \ldots\lpar
					\fof {S_n}
				}{
					\vlhy{\vdash
						\fof{S_1}
						, \ldots,
						\fof {S_n}
					}
				}
			}
		}
	\end{array}$}
	\caption{Derivability (in $\PIL$) of the rules in $\ChorL$.}
	\label{fig:derChorL}
\end{figure}
%%%%%%%%%%%%%%%%%%%%%%%%%%%%%%%%%%%%%%%%%%%%%%%%%%%%%%%%%%%%%%%%%%%
%%%%%%%%%%%%%%%%%%%%%%%%%%%%%%%%%%%%%%%%%%%%%%%%%%%%%%%%%%%%%%%%%%%

We now introduce a sequent calculus for the endpoint calculus, given in~\Cref{fig:ChorL}, which consists purely of rules that are derivable in $\PIL$. That is, for every rule $\rrule$ in $\ChorL$ there is an open derivation in $\PIL$ with the same open premises and conclusion as $\rrule$.

%%%%%%%%%%%%%%%%%%%%%%%%%%%%%%%%%%%%%%%%%%%%%%%%%%%%%%%%%%%%%%%%%%%
\begin{lemma}\label{lem:ChorLderiv}
	Each rule in $\ChorL$ is derivable in $\PIL$.

	Moreover, if the premise \resp{conclusion} of a rule in $\ChorL$ is of the form $\fofp{P}$, then its conclusion is so \resp{all its premise are so}.
\end{lemma}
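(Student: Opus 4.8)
The plan is to split the statement into its two halves: derivability of each $\ChorL$ rule in $\PIL$, and the form-preservation (``moreover'') clause. For derivability, I would take the open $\PIL$-derivations displayed in \Cref{fig:derChorL} as the witnesses and verify, rule by rule, that each is well formed (uses only $\PIL$ rules, satisfies every side condition) and has the same open premises and conclusion as the corresponding $\ChorL$ rule once the translation $\fof\cdot$ of \Cref{eq:translation} is unfolded on the process constructors. Concretely: $\Caxr$ reduces to iterating $\mixr$ over $n$ copies of $\lunit$ (using $\fof\pnil=\unit$); $\Ccomr$ to one $\lprec$ whose left branch is closed by $\axrule$ on $\named{\lsend xy}\pp,\named{\lrecv xy}\pq$, followed by an $\exists$ introducing the existential of the receive, which matches because $\fof{\psend xy T}=\lsend xy\lprec\fof T$ and $\fof{\precv xz{T'}}=\lExp z{\lrecv xz\lprec\fof{T'}}$; $\Crestr$ to $n-1$ applications of $\lpar$ collapsing the named sequent into a single $\lpar$-formula, followed by $m$ applications of $\nuur$ to introduce the $\lNu$-prefixes.

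For the side conditions I would observe that, since we work with clean, unambiguous (Barendregt-style) processes and formulas, the eigenvariable provisos are automatic: each $\nuur$ in the $\Crestr$-derivation introduces the outermost connective of a single formula, so its context is empty and $x_i\notin\freeof{\emptyset}$ holds vacuously; the $\exists$-witness in $\Ccomr$ is fixed by the convention (footnote to \Cref{subsec:endpoint}) that a receive binds the same name the matching send carries, so the substitution $T'\fsubst yz$ is the intended one.

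For the form-preservation clause I would argue by a case analysis on the rule, reading the encoding directly off $\fof\cdot$ and using the standard identification of a sequent of named formulas $\vdash\named{\fof{S_1}}{\pp_1},\dots,\named{\fof{S_n}}{\pp_n}$ with $\ffof{\prod_i\proloc{\pp_i}{S_i}}$ (equivalently, up to the comma/$\lpar$ correspondence, with the body of $\fofp{\cdot}$). The rule $\Crestr$ is exactly the converter between the two shapes: its conclusion is $\fofp{\pnus x(\proloc{\pp_1}{T_1}\ppar\cdots\ppar\proloc{\pp_n}{T_n})}$ precisely when its premise is $\ffof\cdot$ of that same endpoint process, which gives both directions of the clause for this rule. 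For $\Caxr$ the conclusion is $\ffof{\prod_i\proloc{\pp_i}\pnil}$ and there are no premises, so the upward direction is vacuous; for $\Ccomr$ stripping the $\lprec$-prefixes and the existential from the conclusion yields exactly the continuations of the reduced endpoint process, so conclusion and premise encode processes related by one $\Ecomr$-step, establishing both directions.

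The step I expect to demand the most care is $\Cchor$, and it is the one I would write out in full. Here $\fof{\plsend k{\lab:S_\lab}{\lab\in L}}=\bigwith_{\ell\in L}(\lsend k\ell\lprec\fof{S_\ell})$ and $\fof{\plrecv k{\lab:S'_\lab}{\lab\in L'}}=\bigoplus_{\ell\in L'}(\lrecv k\ell\lprec\fof{S'_\ell})$, so the derivation must interleave a $\lwith$ branching over the sender's labels $L$ with an $\lplus$ selecting a summand of the receiver's $\bigoplus$ over $L'$; the inclusion $L\subseteq L'$ is exactly what guarantees the selected summand $\ell\in L$ is available, and the final $\lprec$ is closed by $\axrule$. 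For form-preservation I would have to track the garbage continuations $S'_\ell$ with $\ell\in L'\setminus L$: these never occur in any premise but are carried faithfully inside the receiver's $\bigoplus$ in the conclusion, mirroring the fact that $\EPP\cdot$ retains such dead branches in the projected receive (cf.\ \Cref{rem:garbage}). The crux is to check that the $|L|$ premises are precisely the $\ffof\cdot$-encodings of the $|L|$ endpoint processes reachable from the conclusion's process by the choreographic choice, so that conclusion and premises encode processes in both directions.
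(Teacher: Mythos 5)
Your proposal is correct and takes essentially the same approach as the paper: the paper's proof simply exhibits the open derivations in \Cref{fig:derChorL} as witnesses and dispatches the second clause ``by rule inspection,'' which is exactly what you reconstruct and verify rule by rule. Your explicit inclusion of the $\lplus$ step in the $\Cchor$ derivation (selecting the receiver's summand, justified by $L\subseteq L'$) is in fact slightly more careful than the figure, which elides that rule but matches the corresponding block in \Cref{eq:blocks}.
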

\begin{proof}
	Derivations with the same premise(s) and conclusion of a rule in $\ChorL$ are shown in \Cref{fig:derChorL}.
	The second part of the statement follows by rule inspection.
\end{proof}
%%%%%%%%%%%%%%%%%%%%%%%%%%%%%%%%%%%%%%%%%%%%%%%%%%%%%%%%%%%%%%%%%%%

We can refine \Cref{thm:deadlock} for \rfree \fprocs thanks to the fact that in \eprocs parallel and restrictions can only occur at the top level.
This allows us to consider derivations in $\PIL$ made of blocks of rule applications as those in \Cref{fig:derChorL}, each corresponding to a single instance of a rule in $\ChorL$.

%%%%%%%%%%%%%%%%%%%%%%%%%%%%%%%%%%%%%%%%%%%%%%%%%%%%%%%%%%%%%%%%%%%
\begin{theorem}\label{thm:deadlockfreeChor}
	A \rfree \eproc $P$ is \lfree iff $\proves[\ChorL] \fofp P$.
\end{theorem}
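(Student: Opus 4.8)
The plan is to prove both implications by reducing them to \Cref{thm:deadlock}, using \Cref{lem:ChorLderiv} to treat $\ChorL$ as a fragment of $\PIL$, and exploiting that for an \eproc the connective $\lpar$ (coming from parallel composition) and the nominal quantifiers occur only at the very top level of $\fofp P$. For the direction $(\Leftarrow)$, suppose $\proves[\ChorL]\fofp P$: by \Cref{lem:ChorLderiv} every $\ChorL$-rule is derivable in $\PIL$, so replacing each rule instance by the corresponding open derivation of \Cref{fig:derChorL} turns the $\ChorL$ derivation into a $\PIL$ derivation of the annotated formula $\fofp P$; erasing the process-name annotations yields a $\PIL$ derivation of the underlying formula $\fof P$, since the annotations are inert decorations that never affect rule applicability. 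As $P$ is \rfree, \Cref{thm:deadlock} then gives that $P$ is \lfree.

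For the direction $(\Rightarrow)$, suppose $P$ is \rfree and \lfree. By \Cref{thm:deadlock} there is a derivation $\dD$ of $\fof P$ that uses only the rules in $\set{\lunit,\axrule,\lpar,\mixr,\lprec,\lwith,\lplus,\exists,\nuurule}$. The structural key is that, $P$ being an \eproc, we have $\fof P = \lNu{x_1}{\dots\lNu{x_m}{(\fof{S_1}\lpar\cdots\lpar\fof{S_n})}}$ with each $S_i$ \sequential, so every $\fof{S_i}$ is built only from atoms, $\lunit$, $\lprec$, $\lwith$, $\lplus$, $\exists$ and contains neither $\lpar$ nor any nominal quantifier. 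By the subformula property, every $\lpar$- and $\nuurule$-instance in $\dD$ must therefore act on one of the top-level occurrences of $\lpar$ or $\lnewsymb$ in $\fof P$. I would then normalise $\dD$ with the permutations of \Cref{fig:permutations}: since these top-level instances are independent of every rule acting inside the sequential subformulas, they can all be pushed to the bottom of $\dD$, producing a single bottommost $\Crestr$-block (a stack of $\lnewsymb$-rules followed by $\lpar$-rules, that is, exactly $\dD_{\tuple{x_1,\dots,x_m}}$ of \Cref{fig:derChorL}) whose top sequent is $\ffof P$.

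Above this block only $\set{\lunit,\axrule,\mixr,\lprec,\lwith,\lplus,\exists}$ survives, acting on sequents of sequential formulas; reusing the block-normal form already established inside the proof of \Cref{thm:deadlock} (the blocks displayed in \Cref{eq:blocks}), this part decomposes into blocks each of which is exactly a $\Ccomr$-block ($\exists$, $\lprec$, $\axrule$) or a $\Cchor$-block ($\lwith$, $\lplus$, $\lprec$, $\axrule$) of \Cref{fig:derChorL}, and since $P$ is \lfree every leaf reached carries an all-$\lunit$ sequent, derived by a $\Caxr$-block ($\mixr$, $\lunit$). Folding each block back into the single $\ChorL$-rule it realises gives, by \Cref{lem:ChorLderiv}, a $\ChorL$ derivation of $\fofp P$ (the annotations propagate consistently from $\fofp P$ upward by the second part of \Cref{lem:ChorLderiv}). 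The main obstacle is precisely this normalisation: showing that \Cref{fig:permutations} really drives $\dD$ into the block form, with no binary $\lprec$- or $\mixr$-instance left trapped between two would-be blocks and each $\exists$ forced to cluster with the $\lprec$/$\axrule$ of the receive-formula it introduces. Most of this reasoning already appears in the proof of \Cref{thm:deadlock}; the genuinely new ingredient is the collapse of all nominal- and $\lpar$-rules into the single $\Crestr$-block at the root, which is exactly what the top-level-only occurrence of $\lpar$ and $\lnewsymb$ in an \eproc guarantees.
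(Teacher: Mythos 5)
Your proof is correct and takes essentially the same route as the paper's: for the forward direction you invoke \Cref{thm:deadlock} and then use the rule permutations of \Cref{fig:permutations} to reorganise the $\PIL$ derivation into blocks matching \Cref{fig:derChorL} (a bottommost $\Crestr$-block followed by $\Ccomr$-, $\Cchor$- and $\Caxr$-blocks), and for the converse you compose \Cref{lem:ChorLderiv} with \Cref{thm:deadlock}, exactly as the paper does. Your extra observation that in an endpoint process $\lpar$ and $\lnewsymb$ occur only at top level (so the subformula property confines those rules to the root block) is precisely the implicit justification on which the paper's permutation step rests.
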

\begin{proof}
	If $P=\pnus x \left(\prod_{i=1}^{n}\proloc{\pp_i}{S_i}\right)$ is \lfree,
	then by \Cref{thm:deadlock} there is a derivation $\widetilde\dD$ in $\PIL$ with conclusion
	$\fofp P$.
	Using rule permutations,
	we can transform $\widetilde{\dD}$ into a derivation made (bottom-up) of possibly some $\nurule$-rules followed by $\lpar$-rules (i.e., an open derivation of the same shape of $\dD_{\tuple{x_1,\ldots,x_k}}$ or $\dD_{\emptyset}$ from \Cref{fig:derChorL}), followed by a derivation $\dD$ of the sequent $\fofp {P_1},\ldots, \fofp{P_n}$ which is organized in blocks of rules as open derivations in \Cref{fig:derChorL}.
	Note that derivations of the form $\dD_{\tuple{\pp,x,\pq,y,k}}$ \resp{$\dD_{\tuple{\pp,L,\pq,L',k}}$} correspond to the open derivations
	in \Cref{eq:blocks}.
	Thus we can replace $\dD_{\tuple{x_1,\ldots,x_k}}$ or $\dD_{\emptyset}$  by a $\Crestr$,
	each
	$\dD_{\tuple{\pp,x,\pq,y,k}}$ \resp{$\dD_{\tuple{\pp,L,\pq,L',k}}$}
	by a
	$\Ccomr$ \resp{by a $\Cchor$},
	and
	each $\dD_n$ with a $\Caxr$,
	obtaining the desired derivation in $\ChorL$.

	The converse is a consequence of \Cref{lem:ChorLderiv} and \Cref{thm:deadlock}.
\end{proof}
%%%%%%%%%%%%%%%%%%%%%%%%%%%%%%%%%%%%%%%%%%%%%%%%%%%%%%%%%%%%%%%%%%%

%%%%%%%%%%%%%%%%%%%%%%%%%%%%%%%%%%%%%%%%%%%%%%%%%%%%%%%%%%%%%%%%%%%
%Fig Chor to Der
%%%%%%%%%%%%%%%%%%%%%%%%%%%%%%%%%%%%%%%%%%%%%%%%%%%%%%%%%%%%%%%%%%%
\begin{figure}[t]\centering
	\adjustbox{max width=\textwidth}{$\begin{array}{c}
		%%%%%AX%%%%%
		\chCof[\vlinf{\Caxr}{}{\vdash
			\named{\unit}{\pp_1}, \dots, \named{\unit}{\pp_n}
			%\units
		}{}
		]=\chnil
	\qquad
		%%%%%RESTR%%%%%
		\chCof[\vlderivation{
			\vlin{\Crestr}{}{
				\vdash \fofp{P}
			}{
				\vlpr{\dD'}{}{
					\vdash \ffof{P}
				}
			}
		}]=
		\begin{cases}
			\cnus x \chCof[\vlderivation{
				\vlpr{\dD'}{}{
					\vdash \ffof P
				}
			}]
			&\mbox{if }k>0
		\\
			\chCof[\vlderivation{
				\vlpr{\dD'}{}{
					\vdash \ffof P
				}
			}]
			&\mbox{if }k=0
		\end{cases}
	\\\\
		%%%%%COM%%%%%
		\chCof[\vlderivation{\vlin{\Ccomr}{}{
			\vdash \Gamma, \named[pzbrickred]{\lsend{k}{x}\lprec\fofp{T}}{\pp}
					, \named[pzgreen]{\lExp y{\lrecv{k}{y}\lprec\fofp{S}}}{\pq}
			}{
				\vlpr{\dD'}{}{
					\vdash \Gamma,
					\named[pzbrickred]{\fofp{T}}{\pp},
					\named[pzgreen]{\fofp{S}\fsubst{x}{y}}{\pq}
				}
			}
		}]=
		\gencom\chCof[\vlderivation{\vlpr{\dD'}{}{\vdash \Gamma \named[pzbrickred]{\fofp{T}}{\pp}, \named[pzgreen]{\fofp{S}\fsubst{x}{y}}{\pq} }}]
	\\\\
		%%%%%CHOI%%%%%
		\chCof[
			\vlderivation{
				\vlin{\Cchor}{}{
					\vdash
					\Gamma,
					\named[pzbrickred]{\biglbra[\ell\in L]{\left(\lsend{k}\ell\lprec\fofp{T_{\ell}} \right)}}{\pp},
					\named[pzgreen]{\biglsel[\ell \in L']{\left(\lrecv{k}\ell\lprec\fofp{T'_{\ell}} \right)}}{\pq}
				}{
					\multiprem{
						\vlpr{\dD_{\lab}}{}{\vdash
							\Gamma,
							\named[pzbrickred]{\fofp{T_\ell}}{\pp},
							\named[pzgreen]{\fofp{T'_\ell}}{\pq}
						}
					}{\lab\in L}
				}
			}
		]
		=
		\specchoics{p}{q}{k}{L}{\lab}{
			{\chCof[\vlderivation{
					\vlpr{\dD_{\lab}}{}{\vdash
						\Gamma,
						\named[pzbrickred]{\fofp{T_\ell}}{\pp},
						\named[pzgreen]{\fofp{T'_\ell}}{\pq}
						}
				}]}
		}{
			T'_\lab
		}
	\end{array}$}
	\caption{Interpretation of a derivation in $\ChorL$ as a \chor.}
	\label{fig:DerToChor}
\end{figure}
%%%%%%%%%%%%%%%%%%%%%%%%%%%%%%%%%%%%%%%%%%%%%%%%%%%%%%%%%%%%%%%%%%%
%%%%%%%%%%%%%%%%%%%%%%%%%%%%%%%%%%%%%%%%%%%%%%%%%%%%%%%%%%%%%%%%%%%

%%%%%%%%%%%%%%%%%%%%%%%%%%%%%%%%%%%%%%%%%%%%%%%%%%%%%%%%%%%%%%%%%%%
%%%%%%%%%%%%%%%%%%%%%%%%%%%%%%%%%%%%%%%%%%%%%%%%%%%%%%%%%%%%%%%%%%%
\subsection{Proofs as \Chors}\label{subsec:proofsASchors}
%%%%%%%%%%%%%%%%%%%%%%%%%%%%%%%%%%%%%%%%%%%%%%%%%%%%%%%%%%%%%%%%%%%
%%%%%%%%%%%%%%%%%%%%%%%%%%%%%%%%%%%%%%%%%%%%%%%%%%%%%%%%%%%%%%%%%%%

We can now prove a completeness result of \chors with respect to the set of \lfree \fprocs{}: each \lfree \fproc is the EPP of a (\cflat) \chor.
To prove this result, we rely on \Cref{thm:deadlockfreeChor} to establish a direct correspondence between derivations of a sequent encoding a \rfree \eproc in the sequent system $\ChorL$, and \comptrees of the same \eproc.
In \Cref{app:example} we provide an example the process of \chor extraction from a derivation in $\ChorL$ of a \lfree \eproc.

%%%%%%%%%%%%%%%%%%%%%%%%%%%%%%%%%%%%%%%%%%%%%%%%%%%%%%%%%%%%%%%%%%%
\begin{restatable}{theorem}{thmChor}\label{thm:proofsASchoreo}
	Let $P$ be a \rfree \eproc.
	Then
	$$
	\mbox{$P$ is  \lfree}
	\iff
	\mbox{there is a \chor $\chC$ such that $\EPP \chC = P$.}
	$$
\end{restatable}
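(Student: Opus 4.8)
The plan is to prove the two directions along the lines of \Cref{fig:thm_dependencies}, routing $(\Rightarrow)$ through derivability in $\ChorL$ and the derivation-to-choreography map $\chCof[\cdot]$ of \Cref{fig:DerToChor}, and routing $(\Leftarrow)$ through the operational correspondence of \Cref{thm:choreo}. Throughout I read the equality $\EPP\chC=P$ up to the structural equivalence $\steq$ on \eprocs, consistently with how it is stated ($\EPP\chC\steq P$) in \Cref{fig:thm_dependencies}; this is needed already to reconcile $\EPP{\chnil}=\proloc{\pp_0}{\pnil}$ with a fully terminated multi-component process via $\proloc{\pp}{\pnil}\ppar Q\steq Q$.

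For $(\Rightarrow)$, since $P$ is \rfree and \lfree, \Cref{thm:deadlockfreeChor} yields a derivation $\dD$ in $\ChorL$ of $\fofp P$, and I set $\chC:=\chCof[\dD]$. The shape of $\dD$ is essentially forced: as $\fofp P=\lNu{x_1}{\dots\lNu{x_n}{(\cdots)}}$ is a restriction prefix over a $\lpar$ of annotated formulas, the only $\ChorL$ rule that can conclude it is $\Crestr$; hence $\dD$ is (bottom-up) a single $\Crestr$ block over a derivation $\dD'$ of the sequent $\ffof P$ built exclusively from $\Caxr$, $\Ccomr$ and $\Cchor$, none of which reintroduce restrictions. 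Consequently $\chCof[\dD]$ is \cflat. The core is then an induction on $\dD'$ establishing the invariant: for every sub-derivation concluding a sequent $\ffof{P'}$ (encoding a restriction-free $P'=\prod_i\proloc{\pp_i}{S'_i}$), the choreography $\chCof[\dD']$ is \prochor and $\EPP{\chCof[\dD']}\steq P'$. The base case $\Caxr$ gives $\chnil$ with $\EPP{\chnil}\steq\pnil$, and the $\Ccomr$ and $\Cchor$ cases are matched against the corresponding clauses of $\EPP$ in \Cref{fig:EPP}.

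The step that makes this induction go through, and the main obstacle, is the $\Cchor$ case. The key observation is that the $\Cchor$ rule of \Cref{fig:ChorL} shares the very same context $\Gamma$ across all its premises, so every third party $\pr\notin\set{\pp,\pq}$ is encoded by the same formula in all branches. By the inductive hypothesis its projections $\EPP[\pr]{\chCof[\dD_\lab]}$ therefore coincide for all $\lab\in L$, so the merge $\Merge_{\lab\in L}\EPP[\pr]{\chCof[\dD_\lab]}$ is an idempotent merge: always defined and equal to that common projection. This is exactly what guarantees that $\chCof[\dD]$ is \prochor. At the same time the $\pq$-projection rebuilds the branching whose active labels $\lab\in L$ come from the premises and whose garbage labels $\lab\in L'\setminus L$ come from the recorded \sprocs $S_\lab$ of the choice constructor (see \Cref{rem:garbage}), matching the $\pq$-clause of $\EPP$, while the $\pp$-projection rebuilds the selection. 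This bookkeeping, together with the $\alpha$/unambiguity conventions that make the substitutions in $\Ccomr$ line up, is the delicate part of this direction.

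For $(\Leftarrow)$, $\chC$ is given with $\EPP\chC=P$, so $\chC$ is \prochor and \cflat (the domain of $\EPP$). Here I would argue semantically rather than via $\ChorL$, since an arbitrary \prochor choreography need not arise as some $\chCof[\dD]$ (its third parties may legitimately diverge across branches in a mergeable way). First I record a deadlock-freedom lemma for choreographies: every \chor is either $\chnil$ or admits a reduction, by a routine induction on the grammar of \Cref{fig:chor} using $\ccomr$, $\cchoir$, $\crestr$, and the well-formedness constraint on $\cnu x{}$. Next, iterating the soundness part of \Cref{thm:choreo} from the reflexive fact $\EPP\chC\chOrd\EPP\chC$, any $P'$ with $P\redsems P'$ satisfies $P'\chOrd\EPP{\chC'}$ for some $\chC'$ reached by $\chC\osto{\tm_1}\cdots\osto{\tm_m}\chC'$. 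If $\chC'=\chnil$ then $\EPP{\chC'}\steq\pnil$ and $P'\chOrd\pnil$ forces $P'\steq\pnil$, so $P'$ is not \stuck; otherwise $\chC'\osto{\tm}\chC''$ by the deadlock-freedom lemma, the completeness part of \Cref{thm:choreo} gives $\EPP{\chC'}\osto{\tm}$, and a monotonicity lemma — if $Q\chOrd R$ and $R\osto{\tm}R'$ then $Q\osto{\tm}$, which holds because $\merge$ only enlarges receive/branching sets while keeping sends and selected labels fixed — transfers the step to $P'$, so $P'$ is not \stuck either. Since no reachable process is \stuck, $P$ is \lfree. The monotonicity lemma and the $\steq$-bookkeeping around termination are the delicate points of this direction.
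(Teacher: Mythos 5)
Your proof is correct and takes essentially the same route as the paper's: the forward direction goes through \Cref{thm:deadlockfreeChor} and the derivation-to-choreography map of \Cref{fig:DerToChor}, with an induction on the $\ChorL$ derivation, and the converse goes through the operational correspondence of \Cref{thm:choreo}. You merely supply details the paper leaves implicit --- notably the merge-idempotence of third-party projections in the $\Cchor$ case (forced by the shared context $\Gamma$ across premises), and, for the converse, the choreography-progress and $\chOrd$-monotonicity lemmas hiding behind the paper's one-line appeal to \Cref{thm:choreo}.
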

\begin{proof}
	If $P$ is \lfree, then by \Cref{thm:deadlockfreeChor} there is a derivation in $\ChorL$ with conclusion $\fofp P$.
	We define the \chor $\chCof[\dD]$ by case analysis on the bottom-most rule $\rrule$ in $\dD$ as shown in \Cref{fig:DerToChor}.
	We conclude by showing that $\EPP{\chCof[\dD]}=P$ by induction
	on the structure of $\dD$ reasoning on the bottom-most rule $\rrule$ in $\dD$.
	The rigth-to-left implication follows by \Cref{thm:choreo}.
\end{proof}
\begin{remark}
	Note that the statement could be made stronger by requiring the \chor $\chC$ to be \flat. In this case, the proof of the right-to-left implication can be proven directly using the inverse of the translation in \Cref{fig:DerToChor}.
\end{remark}
%%%%%%%%%%%%%%%%%%%%%%%%%%%%%%%%%%%%%%%%%%%%%%%%%%%%%%%%%%%%%%%%%%%

%%%%%%%%%%%%%%%%%%%%%%%%%%%%%%%%%%%%%%%%%%%%%%%%%%%%%%%%%%%%%%%%%%%
Since every \fproc in the \picalc can be decorated with process names, we can easily extend the completeness result to the \picalc.
We need to pay attention to the difference that $P\strews\EPP{\chC}$ (instead of $P=\EPP{\chC}$) because of the definition of $\EPP\chnil$.
For example, the choreography that captures $\pnil\ppar\pnil$ is $\chnil$, and $\pnil\ppar\pnil\strews\EPP\chnil=\pnil$.
\begin{corollary}\label{cor:flatChor}
	Every \rfree and \lfree \fproc $P$ admits a \chor $\chC$ such that $P\strews\EPP{\chC}$.
\end{corollary}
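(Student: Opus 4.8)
The plan is to reduce the flat-process statement to the endpoint-process result of \Cref{thm:proofsASchoreo}, by decorating $P$ with process names and then accounting for the gap between the structural equivalence $\steq$ and the one-directional precongruence $\strews$.

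First I would take $P=\pnus x\left(P_1\ppar\cdots\ppar P_n\right)$ \flat, \rfree and \lfree, and --- as recalled just above the corollary --- decorate it with distinct process names $\pp_1,\ldots,\pp_n$ to obtain the \eproc $\hat P=\pnus x\left(\proloc{\pp_1}{P_1}\ppar\cdots\ppar\proloc{\pp_n}{P_n}\right)$. Because the endpoint calculus is obtained from the \picalc simply by propagating process names along structural equivalence and reductions (\Cref{fig:EPC}), there is a step-for-step correspondence between the reductions of $P$ and of $\hat P$; in particular the decoration neither introduces nor removes races or stuck states, so $\hat P$ is again \rfree and \lfree.

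Then I would apply \Cref{thm:proofsASchoreo} to $\hat P$, obtaining a \chor $\chC$ with $\EPP\chC\steq\hat P$. (The equality in that theorem holds only up to the endpoint structural equivalence, as is already visible in the base case where the translation of \Cref{fig:DerToChor} sends an $\Caxr$-leaf to $\chnil$, whose projection is the single component $\proloc{\pp_0}{\pnil}$.) Forgetting process names turns $\hat P$ back into $P$ and turns $\EPP\chC$ into a \flat \proc, so it remains only to strengthen the resulting relation $P\steq\EPP\chC$ into the directed statement $P\strews\EPP\chC$.

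The main obstacle is exactly this orientation step, since $\strews$ is only the one-way precongruence. I would argue that every discrepancy between $P$ and $\EPP\chC$ points in the permitted direction. There are two sources of discrepancy: first, several terminated components, which $\EPP\chnil$ collapses into one; and second, restrictions $\pnu x$ whose bound name is vacuous in its scope, which $\EPP$ does not reproduce because the constructor $\cnu x\chC^x$ requires $x$ to occur free in $\chC^x$. Both are absorbed by the simplification rules $P\ppar\pnil\strew P$ and $\pnu x S\strew S$ (for $x\notin\freeof S$), while the commutativity and associativity of $\ppar$ needed to align components are available in both directions in $\strews$. Since $\EPP$ never creates superfluous $\pnil$-components or vacuous restrictions, the orientation always runs from the possibly redundant $P$ towards the canonical $\EPP\chC$, yielding $P\strews\EPP\chC$, just as in the guiding example $\pnil\ppar\pnil\strews\EPP\chnil=\pnil$.
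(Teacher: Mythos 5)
Your proposal is correct and takes essentially the same route as the paper: decorate the \fproc with process names, apply \Cref{thm:proofsASchoreo} to the resulting \eproc, and absorb the only genuine mismatch---the collapsing of terminated components under $\EPP{\chnil}$---using the oriented rule $P\ppar\pnil\strew P$. Your second source of discrepancy is spurious, since vacuous restrictions are in fact preserved both by the extraction of \Cref{fig:DerToChor} (the $\Crestr$ case reinstates all the binders) and by the projection of \Cref{fig:EPP} on \cflat \chors, but because you orient that case in the permitted direction ($\pnu x S\strew S$) the extra case is harmless.
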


An important consequence of our results is that the processes that can be captured by choreographies can have cyclic dependencies, as we exemplified with \Cref{eq:intro,eq:choreo}.
This significantly extends the proven expressivity of choreographic languages with name mobility, which so far have been shown to capture only the acyclic processes typable with linear logic~\cite{CMS18,carbone:multiparty_choreo}.
%%%%%%%%%%%%%%%%%%%%%%%%%%%%%%%%%%%%%%%%%%%%%%%%%%%%%%%%%%%%%%%%%%%

%%%%%%%%%%%%%%%%%%%%%%%%%%%%%%%%%%%%%%%%%%%%%%%%%%%%%%%%%%%%%%%%%%%
%%%%%%%%%%%%%%%%%%%%%%%%%%%%%%%%%%%%%%%%%%%%%%%%%%%%%%%%%%%%%%%%%%%
%%%%%%%%%%%%%%%%%%%%%%%%%%%%%%%%%%%%%%%%%%%%%%%%%%%%%%%%%%%%%%%%%%%
\section{Related Work}\label{sec:rel}
%%%%%%%%%%%%%%%%%%%%%%%%%%%%%%%%%%%%%%%%%%%%%%%%%%%%%%%%%%%%%%%%%%%
%%%%%%%%%%%%%%%%%%%%%%%%%%%%%%%%%%%%%%%%%%%%%%%%%%%%%%%%%%%%%%%%%%%
%%%%%%%%%%%%%%%%%%%%%%%%%%%%%%%%%%%%%%%%%%%%%%%%%%%%%%%%%%%%%%%%%%%

%%%%%%%%%%%%%%%%%%%%%%%%%%%%%%%%%%%%%%%%%%%%%%%%%%%%%%%%%%%%%%%%%%%
% Fig SPQR
%%%%%%%%%%%%%%%%%%%%%%%%%%%%%%%%%%%%%%%%%%%%%%%%%%%%%%%%%%%%%%%%%%%
\def\ok{\text{\ding{51}}}
\def\ko{\text{\ding{55}}}
\def\okbutko{\ok \mbox{ but }\ko}
\def\kobutok{\ko  \mbox{ but }\ok}
\def\NA{\mbox{N/A}}
\def\mezzo{\sim}
\begin{figure}[t]
	\centering
	\adjustbox{max width=\textwidth}{$\begin{array}{l|c|c|c|c|c}&
			\begin{tabular}{c}
				Independent\\
				$\nu$ and $\ppar$
			\end{tabular}
			&
			\begin{tabular}{c}
				Cyclic\\
				dependencies
			\end{tabular}
			&
			\begin{tabular}{c}
				$\nu$-free\\
				interaction
			\end{tabular}
			&
			\begin{tabular}{c}
				Choreography\\
				expressivity
			\end{tabular}
			&
			\begin{tabular}{c}
				Proof\\
				System
			\end{tabular}
			\\\hline
			\mbox{Caires \& Pfenning \cite{cai:pfe:ST}}&
			\ko	&	\ko	&	\ko	& \NA & \iLL
			\\\hline
			\mbox{Wadler \cite{wadler:PaS}}&
			\ko	&	\ko	&	\ko	& \NA & \LL
			\\\hline
			\mbox{Dardha \& Gay \cite{dardha:gay}}&
			\ok	&	\ok	&	\ko	& \NA & \LL+\mbox{mix}+\mbox{ordering}
			\\\hline
			\mbox{Kokke \& Montesi \& Peressotti \cite{kok:mon:per:better}}&
			\ok	&	\ko	&	\ok	& \NA & \LL+\mbox{hyperenvironments}
			\\\hline
			\mbox{Carbone \& Montesi \& Sch\"urmann \cite{CMS18}}&
			\ko	&	\ko	&	\ko	& \ok & \LL
			\\\hline
			\mbox{This paper}&
			\ok&\ok&\ok&\ok& \PIL
		\end{array}$}
	\caption{
		Summary of key results in the literature.
		We describe each column in order: the term constructors for restriction and parallel are separate (independent) in the syntax of processes; cyclic dependencies are allowed; processes can interact (communicate) on a free name (the name does not need to be restricted in the context); choreographies are proven complete for a class of processes (\NA{} means that this was not considered).
	}
	\label{fig:SPQR}
\end{figure}
%%%%%%%%%%%%%%%%%%%%%%%%%%%%%%%%%%%%%%%%%%%%%%%%%%%%%%%%%%%%%%%%%%%
%%%%%%%%%%%%%%%%%%%%%%%%%%%%%%%%%%%%%%%%%%%%%%%%%%%%%%%%%%%%%%%%%%%

We now report on relevant related work. A summarising table of the differences between our work and others based on logic is given in~\Cref{fig:SPQR}.

%%%%%%%%%%%%%%%%%%%%%%%%%%%%%%%%%%%%%%%%%%%%%%%%%%%%%%%%%%%%%%%%%%%
\myparagraph{Proofs as Processes: Linear Logic and Session Types}
%%%%%%%%%%%%%%%%%%%%%%%%%%%%%%%%%%%%%%%%%%%%%%%%%%%%%%%%%%%%%%%%%%%
The proofs-as-processes agenda investigates how linear logic~\cite{gir:ll} can be used to reason about the behaviour of concurrent processes~\cite{abramsky1994proofs,bellin:scot:pi}.
It has inspired a number of works that aim at preventing safety issues, like processes performing incompatible actions in erroneous attempts to interact (e.g., sending a message with the wrong type).
Notable examples include \emph{session types}~\cite{honda:dyadic,hon:vas:kub} and linear types for the $\pi$-calculus~\cite{DBLP:journals/toplas/KobayashiPT99}. The former can actually be encoded into the latter -- a formal reminder of their joint source of inspiration~\cite{DGS17}.

%%%%%%%%%%%%%%%%%%%%%%%%%%%%%%%%%%%%%%%%%%%%%%%%%%%%%%%%%%%%%%%%%%%
A more recent line of research formally interprets propositions and proofs in linear logic as, respectively, session types and processes~\cite{cai:pfe:ST,wadler:PaS}.
This proofs-as-processes correspondence based on linear logic works for race-free processes, as we consider here.
However, it also presents some limitations compared to our framework.
Parallel composition and restriction are not offered as independent operators, because of a misalignment with the structures given by the standard rules of linear logic. For example, the $\cutr$ rule in linear logic handles both parallel composition and hiding, yielding a `fused' restriction-parallel operator $\pnu x ( - \ppar - )$.
Also, the $\otimes$ rule for typing output has two premises, yielding another fused output-parallel operator $x!(y).( - \ppar - )$ -- note that only bound names can be sent, as in the internal $\pi$-calculus~\cite{S95}.
In particular, interaction between processes does not arise simply from parallel composition as in the standard \picalc, but rather requires both parallel composition and restricting all names on which communication can take place (so communication is always an internal action).
This syntactic and semantic gap prevents linear logic from typing safe cyclic dependencies among processes, as in this simplification of \Cref{eq:intro}:
\begin{equation}\label{eq:S}
	%	\mbox{Criss Cross: }
	%	S=
	\pnu x\pnu y\left(\psend xa \precv yb \pnil \ppar \precv xa \psend yb \pnil\right)
\end{equation}
The same gap prevents having communication on unrestricted channels (as in $\psend xa \pnil \ppar \precv xa \pnil$) and having a private channel used by more than two \procs.
Using $\lpar$ and $\ltens$ to type input and output is also in tension with the associativity and commutativity isomorphisms that come with these connectives. These isomorphisms yield unexpected equivalences at the process level, like $\precv xa \precv yb \pnil \equiv \precv yb \precv xa \pnil$.

These shortcomings are not present in our approach, thanks to the use of:
\begin{enumerate}
	\item The connective $\lprec$ for prefixing. The latter then has the expected `rigid' non-commutative and non-associative semantics.
	\item The connective $\lpar$ for parallelism. The latter then has the expected equivalences supported by the isomorphisms for $\lpar$.
	\item Nominal quantifiers, which allow for restricting names without imposing artificial constraints on the structure of processes.
\end{enumerate}

While it is not the first time that these limitations are pointed out, our method is the first logical approach that overcomes them without ad-hoc machinery.
Previous works have introduced additional structures to linear logic, like hyperenvironments or indexed families of connectives, in order to address some of these issues \cite{dardha:gay,torres:vasconcelos,padovani:deadlock,kok:mon:per:better,montesi:wadler:al:coherence,DBLP:journals/acta/CarboneMSY17}.
These additional structures are not necessary to our approach.

%%%%%%%%%%%%%%%%%%%%%%%%%%%%%%%%%%%%%%%%%%%%%%%%%%%%%%%%%%%%%%%%%%%
\myparagraph{Choreographic Programming}
%%%%%%%%%%%%%%%%%%%%%%%%%%%%%%%%%%%%%%%%%%%%%%%%%%%%%%%%%%%%%%%%%%%
%
Choreographic programming was introduced in~\cite{M13:phd} as a paradigm for simplifying concurrent and distributed programming.
Crucial to the success of this paradigm is building choreographic programming languages that are expressive enough to capture as many safe concurrent behaviours as possible~\cite{montesi:book}.
However, most of the work conducted so far on the study of such expressivity is driven by applications, and a systematic understanding of the classes of processes that can be captured in choreographies is still relatively green.

In \cite{montesi:cruz:al:choreo_extraction,montesi:cruz:extraction_spwaning} the authors present methods for \emph{choreography extraction} -- inferring from a network of processes an equivalent choreography -- for an asynchronous process calculus, respectively without and with \proc spawning.
Another purely algorithmic extraction procedure is provided in \cite{yoshida:al:extraction_mpst}, for simple choreographies without data -- global types, which are roughly the choreographic equivalent of session types.
In linear logic, extracting global types from session types can be achieved via derivable rules \cite{montesi:wadler:al:coherence}.

The only previous completeness result for the expressivity of choreographic programming is given in~\cite{CMS18,carbone:multiparty_choreo}, where it is shown that choreographies can capture the behaviours of all well-typed processes in linear logic.
Our work extends the completeness of choreographies to processes that, notably, can (i) have cyclic dependencies (like ~\Cref{eq:S}), (ii) perform communication over free channels, (iii) respect the sequentiality of prefixing.
Moreover, and similarly to our previous discussion for proofs-as-processes, extraction in~\cite{CMS18,carbone:multiparty_choreo} requires additional structures (hypersequents and modalities to represent connections) that are not necessary in our method.

%%%%%%%%%%%%%%%%%%%%%%%%%%%%%%%%%%%%%%%%%%%%%%%%%%%%%%%%%%%%%%%%%%%
\myparagraph{Non-Commutative Logic and Nominal Quantifiers}
%%%%%%%%%%%%%%%%%%%%%%%%%%%%%%%%%%%%%%%%%%%%%%%%%%%%%%%%%%%%%%%%%%%
%
Guglielmi proposed in \cite{guglielmi:concurrecy,guglielmi:95:sequentiality} an extension of multiplicative linear logic with a non-commutative operator modelling the interaction of parallel and sequential operators.
This led to the design of the \emph{calculus of structures} \cite{gug:SIS}, a formalism for proofs where inference rules can be applied at any depth inside a formula rather than at the top-level connective, and the logic $\BV$ including the (associative) non-commutative self-dual connective $\lseq$ to model sequentiality.
In \cite{bru:02} Bruscoli has established a computation-as-deduction correspondence between specific derivations in $\BV$ and executions in a simple fragment of $\CCS$.
This correspondence has been extended in the works of Horne, Tiu et al. to include the choice operator ($+$) of Milner's $\CCS$ (modelled via the additive connective $\lplus$), as well as the restriction to model private channels in the \picalc \cite{hor:tiu:tow,hor:nom}.
In these works, restriction has been modelled via \emph{nominal quantifiers} in the spirit of the ones introduced by Pitts and Gabbay for \emph{nominal logic} \cite{pitts:nominal,gabbay:pitts:nominal}, by  considering a pair of dual quantifiers%
\footnote{
	{In \cite{hor:tiu:19} the authors report the use of a non-self-dual quantifier to model restriction was suggested them by Alessio Guglielmi in a private communication.}
	As explained in detail in \cite{acc:man:NL}, the pair of dual nominal quantifiers in \cite{hor:tiu:19,hor:nom,hor:tiu:tow} is not the same pair we consider in this paper.
	This can be observed by looking at the implication $\left(\lNup x A\ltens \lNup x B\right) \limp \lNup x{A\lpar B}$, which is valid in these works, but it is not valid in $\PIL$.
	For this reason we adopted a different symbol for the dual quantifier of $\lnewsymb$ -- i.e., our $\lyasymb$ instead of their $\lwensymb$.
},
instead of a single self-dual quantifier as in \cite{menni:nominal,rov:bind,mil:tiu:nabla}.

The logic $\PIL$ we use as logical framework to establish our correspondences in this paper takes inspiration from Bruscoli's work and its extension, but it uses a non-associative non-commutative self-dual connective $\lprec$ instead of the $\lseq$ from $\BV$.
This seemingly irrelevant difference (the non-associativity of $\lprec$) guarantees the existence of a $\cutr$-free sequent calculus to be used as a framework for our correspondence, while for the logic $\BV$ and its extension $\cutr$-free sequent calculi cannot exist, as proven in \cite{tiu:SIS-II}.
Note that requiring non-associativity for the connective modelling sequentiality is not a syntactical stretch, because the same restriction naturally occurs in process calculi such as $\CCS$ and the \picalc, where sequentiality is defined by an asymmetric prefix operation only allowing to sequentially compose (on the left) atomic instructions, such as send and receive.
The other main difference is that in these works derivations represent a single execution, while our derivations represent \comptrees.
This allows us to state our \Cref{thm:deadlock} without quantifying on the set of derivations of $\fof P$.

%%%%%%%%%%%%%%%%%%%%%%%%%%%%%%%%%%%%%%%%%%%%%%%%%%%%%%%%%%%%%%%%%%%
%%%%%%%%%%%%%%%%%%%%%%%%%%%%%%%%%%%%%%%%%%%%%%%%%%%%%%%%%%%%%%%%%%%
%%%%%%%%%%%%%%%%%%%%%%%%%%%%%%%%%%%%%%%%%%%%%%%%%%%%%%%%%%%%%%%%%%%
\section{Conclusion and Future Works}\label{sec:conc}
%%%%%%%%%%%%%%%%%%%%%%%%%%%%%%%%%%%%%%%%%%%%%%%%%%%%%%%%%%%%%%%%%%%
%%%%%%%%%%%%%%%%%%%%%%%%%%%%%%%%%%%%%%%%%%%%%%%%%%%%%%%%%%%%%%%%%%%
%%%%%%%%%%%%%%%%%%%%%%%%%%%%%%%%%%%%%%%%%%%%%%%%%%%%%%%%%%%%%%%%%%%

We presented a new approach to the study of processes based on logic, which leverages an interpretation of processes in the \picalc as formulas in the proof system $\PIL$.
By seeing derivations as computation trees, we obtained an elegant method to reason about \lfreedom that goes beyond the syntactic and semantic limitations of previous work based on logic.
This led us to establishing the first completeness result for the expressivity of choreographic programming with respect to mobile processes with cyclic dependencies.

We discuss next some interesting future directions.

%%%%%%%%%%%%%%%%%%%%%%%%%%%%%%%%%%%%%%%%%%%%%%%%%%%%%%%%%%%%%%%%%%%
%%%%%%%%%%%%%%%%%%%%%%%%%%%%%%%%%%%%%%%%%%%%%%%%%%%%%%%%%%%%%%%%%%%
\myparagraph{Recursion}\label{sec:rec}%
%%%%%%%%%%%%%%%%%%%%%%%%%%%%%%%%%%%%%%%%%%%%%%%%%%%%%%%%%%%%%%%%%%%
%%%%%%%%%%%%%%%%%%%%%%%%%%%%%%%%%%%%%%%%%%%%%%%%%%%%%%%%%%%%%%%%%%%
Recursion could be modelled by extending $\PIL$ with fixpoint operators and rules like the ones in \cite{BaeldeM07,bae:dou:sau:16,acc:cur:gue:CSL24,acc:cur:gue:CSL24ext,acc:mon:per:OPDL}.
We foresee no major challenges in extending the proof of cut-elimination for $\muMALL$ to $\PIL$,
since the behaviour of the connective $\lprec$ is purely multiplicative (in the sense of \cite{dan:reg:MLL,gir:mean,acc:mai:CSL20}) and the rules for nominal quantifiers do not require the employment of new techniques.
In $\PIL$ with recursion, properties such as \emph{justness} or \emph{fairness} could be characterised by specific constraints on derivations, corresponding to constraints on threads and paths of the (possibly cyclic) \comptrees.

%%%%%%%%%%%%%%%%%%%%%%%%%%%%%%%%%%%%%%%%%%%%%%%%%%%%%%%%%%%%%%%%%%%
%%%%%%%%%%%%%%%%%%%%%%%%%%%%%%%%%%%%%%%%%%%%%%%%%%%%%%%%%%%%%%%%%%%
\myparagraph{Asynchronous \picalc}%
%%%%%%%%%%%%%%%%%%%%%%%%%%%%%%%%%%%%%%%%%%%%%%%%%%%%%%%%%%%%%%%%%%%
%%%%%%%%%%%%%%%%%%%%%%%%%%%%%%%%%%%%%%%%%%%%%%%%%%%%%%%%%%%%%%%%%%%
We foresee the possibility of modelling asynchronous communication by including shared buffers, inspired by previous work on concurrent constraint programming~\cite{vij:mar:CCP,ola:rue:val:models} and its strong ties to logic programming~\cite{montanari:74,shapiro:concLP,jaf:mah:CLP,linear:CCP,ola:pim:concurrentFocussing,subexp,pimentel:procAsFor}.
However, buffers with capacity greater than $2$ have non-sequential-parallel structures and therefore cannot be described efficiently using binary connectives~\cite{Valdes1979,cographs}.
We may thus need to consider \emph{graphical connectives}~\cite{acc:IJCAR24,acc:FSCD22}.

%%%%%%%%%%%%%%%%%%%%%%%%%%%%%%%%%%%%%%%%%%%%%%%%%%%%%%%%%%%%%%%%%%%
%%%%%%%%%%%%%%%%%%%%%%%%%%%%%%%%%%%%%%%%%%%%%%%%%%%%%%%%%%%%%%%%%%%
\myparagraph{Proof Nets}
%%%%%%%%%%%%%%%%%%%%%%%%%%%%%%%%%%%%%%%%%%%%%%%%%%%%%%%%%%%%%%%%%%%
%%%%%%%%%%%%%%%%%%%%%%%%%%%%%%%%%%%%%%%%%%%%%%%%%%%%%%%%%%%%%%%%%%%
%
In \cite{acc:man:NL} we define proof nets for $\PIL$, capturing local rule permutations, and providing canonical representative for \comptrees up-to interleaving concurrency.
This syntax could be used to 
% establish a one-to-one correspondence between proof-nets and \chors, 
refine the correspondence between proofs and \chors (\Cref{thm:proofsASchoreo}).
We plan to study the extension of the computation-as-deduction paradigm in the case of proof net expansion, following the ideas in \cite{andreoli:focPN,and:maz:PN,acc:mai:DCM}, as well as to use a notion of orthogonality for modules of proof nets (in the sense of \cite{dan:reg:MLL,and:maz:PN,acc:mai:DCM}) to study testing preorders \cite{den:hen:testing,den:hen:CCStau,hennessy1988algebraic,ber:hen:testing}.

%%%%%%%%%%%%%%%%%%%%%%%%%%%%%%%%%%%%%%%%%%%%%%%%%%%%%%%%%%%%%%%%%%%
%%%%%%%%%%%%%%%%%%%%%%%%%%%%%%%%%%%%%%%%%%%%%%%%%%%%%%%%%%%%%%%%%%%
\myparagraph{Completeness of Choreographies}
%%%%%%%%%%%%%%%%%%%%%%%%%%%%%%%%%%%%%%%%%%%%%%%%%%%%%%%%%%%%%%%%%%%
%%%%%%%%%%%%%%%%%%%%%%%%%%%%%%%%%%%%%%%%%%%%%%%%%%%%%%%%%%%%%%%%%%%
%
The literature of choreographic programming languages includes features of practical interest that extend the expressivity of choreographies -- like process spawning~\cite{CM17} and nondeterminism~\cite{montesi:book}.
Exploring extensions of $\PIL$ to capture these features is interesting future work.
For process spawning, a simple solution could be achieved by defining a way to dynamically assign process names to properly define the map $\chCof$.

\subsubsection*{Acknowledgements.}
Partially supported by Villum Fonden (grants no. 29518 and 50079). 
Co-funded by the European Union’s Horizon 2020 research and innovation program under the Marie Sklodowska-Curie grant agreement No 945332.
Co-funded by the European Union (ERC, CHORDS, 101124225). Views and opinions expressed are however those of the authors only and do not necessarily reflect those of the European Union or the European Research Council. Neither the European Union nor the granting authority can be held responsible for them.

%%%%%%%%%%%%%%%%%%%%%%%%%%%%%%%%%%%%%%%%%%%%%%%%%%%%%%%%%%%%%%%%%%%
%%%%%%%%%%%%%%%%%%%%%%%%%%%%%%%%%%%%%%%%%%%%%%%%%%%%%%%%%%%%%%%%%%%
%%%%%%%%%%%%%%%%%%%%%%%%%%%%%%%%%%%%%%%%%%%%%%%%%%%%%%%%%%%%%%%%%%%
%%%%%%%%%%%%%%%%%%%%%%%%%%%%%%%%%%%%%%%%%%%%%%%%%%%%%%%%%%%%%%%%%%%
% ---- Bibliography ----
%
% BibTeX users should specify bibliography style 'splncs04'.
% References will then be sorted and formatted in the correct style.
%
\bibliographystyle{splncs04}
\bibliography{biblio}
%
%%%%%%%%%%%%%%%%%%%%%%%%%%%%%%%%%%%%%%%%%%%%%%%%%%%%%%%%%%%%%%%%%%%
%%%%%%%%%%%%%%%%%%%%%%%%%%%%%%%%%%%%%%%%%%%%%%%%%%%%%%%%%%%%%%%%%%%
%%%%%%%%%%%%%%%%%%%%%%%%%%%%%%%%%%%%%%%%%%%%%%%%%%%%%%%%%%%%%%%%%%%
%%%%%%%%%%%%%%%%%%%%%%%%%%%%%%%%%%%%%%%%%%%%%%%%%%%%%%%%%%%%%%%%%%%

\clearpage

%%%%%%%%%%%%%%%%%%%%%%%%%%%%%%%%%%%%%%%%%%%%%%%%%%%%%%%%%%%%%%%%%%%
%%%%%%%%%%%%%%%%%%%%%%%%%%%%%%%%%%%%%%%%%%%%%%%%%%%%%%%%%%%%%%%%%%%
%%%%%%%%%%%%%%%%%%%%%%%%%%%%%%%%%%%%%%%%%%%%%%%%%%%%%%%%%%%%%%%%%%%
%%%%%%%%%%%%%%%%%%%%%%%%%%%%%%%%%%%%%%%%%%%%%%%%%%%%%%%%%%%%%%%%%%%
\appendix
%%%%%%%%%%%%%%%%%%%%%%%%%%%%%%%%%%%%%%%%%%%%%%%%%%%%%%%%%%%%%%%%%%%
%%%%%%%%%%%%%%%%%%%%%%%%%%%%%%%%%%%%%%%%%%%%%%%%%%%%%%%%%%%%%%%%%%%
%%%%%%%%%%%%%%%%%%%%%%%%%%%%%%%%%%%%%%%%%%%%%%%%%%%%%%%%%%%%%%%%%%%
%%%%%%%%%%%%%%%%%%%%%%%%%%%%%%%%%%%%%%%%%%%%%%%%%%%%%%%%%%%%%%%%%%%

%%%%%%%%%%%%%%%%%%%%%%%%%%%%%%%%%%%%%%%%%%%%%%%%%%%%%%%%%%%%%%%%%%%
%%%%%%%%%%%%%%%%%%%%%%%%%%%%%%%%%%%%%%%%%%%%%%%%%%%%%%%%%%%%%%%%%%%
%%%%%%%%%%%%%%%%%%%%%%%%%%%%%%%%%%%%%%%%%%%%%%%%%%%%%%%%%%%%%%%%%%%
\section{Additional Proofs}\label{app:proofs}
%%%%%%%%%%%%%%%%%%%%%%%%%%%%%%%%%%%%%%%%%%%%%%%%%%%%%%%%%%%%%%%%%%%
%%%%%%%%%%%%%%%%%%%%%%%%%%%%%%%%%%%%%%%%%%%%%%%%%%%%%%%%%%%%%%%%%%%
%%%%%%%%%%%%%%%%%%%%%%%%%%%%%%%%%%%%%%%%%%%%%%%%%%%%%%%%%%%%%%%%%%%

%%%%%%%%%%%%%%%%%%%%%%%%%%%%%%%%%%%%%%%%%%%%%%%%%%%%%%%%%%%%%%%%
% FIG subst
%%%%%%%%%%%%%%%%%%%%%%%%%%%%%%%%%%%%%%%%%%%%%%%%%%%%%%%%%%%%%%%%
\begin{figure}[t]
	\centering
	\adjustbox{max width=.94\textwidth}{$\begin{array}{r@{\;=\;}l}
		z\fsubst xy
		&
		\begin{cases}z &\mbox{ if }z\neq y\\x &\mbox{ if }z=y\end{cases}
	\\
		(P \ppar Q)\fsubst xy
		&
		P \fsubst xy \ppar Q \fsubst xy
	\\
		(\pnu z P)\fsubst xy
		&
		\pnu z P\fsubst xy
	\\
		(\psend zw P)\fsubst xy
		&
		\begin{cases}
			\psend zw P\fsubst xy &\mbox{ if } z\notin\set{x,y}	\\
			\psend xw P\fsubst xy &\mbox{ if } y = z		\\
			\psend zx P\fsubst xy &\mbox{ if } y = w
		\end{cases}
	\\
		\left(\plsend z{\lab : P_\lab}{\lab\in L}\right)\fsubst xy
		&
		\begin{cases}
			\plsend z{\lab : P_\lab\fsubst xy}{\lab\in L} & \mbox{if } z \neq y\\
			\plsend x{\lab : P_\lab\fsubst xy}{\lab\in L} & \mbox{otherwise}
		\end{cases}
	\\
		(\precv zw P)\fsubst xy
		&
		\begin{cases}
			\precv zw P\fsubst xy &\mbox{ if } z\notin\set{x,y}	\\
			\precv xw P\fsubst xy &\mbox{ if } y= z				\\
			\precv zw P &\mbox{ if } y 			= w
		\end{cases}
	\\
		\left(\plrecv z{\lab : P_\lab}{\lab\in L}\right)\fsubst xy
		&
		\begin{cases}
			\plrecv z{\lab : P_\lab\fsubst xy}{\lab\in L} & \mbox{if } z \neq y\\
			\plrecv x{\lab : P_\lab\fsubst xy}{\lab\in L} & \mbox{otherwise}
		\end{cases}
	\end{array}$}
	\caption{Notation for substitution}
	\label{fig:subst}
\end{figure}
%%%%%%%%%%%%%%%%%%%%%%%%%%%%%%%%%%%%%%%%%%%%%%%%%%%%%%%%%%%%%%%%
%%%%%%%%%%%%%%%%%%%%%%%%%%%%%%%%%%%%%%%%%%%%%%%%%%%%%%%%%%%%%%%%

%%%%%%%%%%%%%%%%%%%%%%%%%%%%%%%%%%%%%%%%%%%%%%%%%%%%%%%%%%%%%%%%%%%
\maxcomptree*
\begin{proof}
	If $P\steq \pnil$ then $\comptreeof{P}$ is maximal.
	Otherwise, we assume there is a leaf $P'\not\steq$ of $\comptreeof{P}$.
	Since $P$ is \lfree, then there is a \proc $Q$ such that $P' \redsem Q$.
	If the core reduction of  $P' \redsem Q$ is a $\rscomr$ or a $\rslabr$, we extend $\comptreeof{P}$ grafting $Q$ in $P'$.
	If the core reduction of  $P' \redsem Q$ is a $\rschoir$, then there is a set $\set{Q_\lab \mid \lab \in L} \ni Q$ such that
	$P' \redsem Q_\lab$ for each $\lab \in L$. In this case $\comptreeof{P}$ can be extended grafting the processes in $\set{Q_\lab \mid \lab \in L}$ in $P'$.
	Since this reasoning can be iterated on any leaf $P\not\steq \pnil$, we conclude because no branch of the \comptree of a  \proc (without recursion) can be infinite.
\end{proof}
%%%%%%%%%%%%%%%%%%%%%%%%%%%%%%%%%%%%%%%%%%%%%%%%%%%%%%%%%%%%%%%%%%%

%%%%%%%%%%%%%%%%%%%%%%%%%%%%%%%%%%%%%%%%%%%%%%%%%%%%%%%%%%%%%%%%%%%
\contextDers*
\begin{proof}
	By induction on the structure of the contexts.
	\begin{enumerate}
		\item
		If $\proves[\PIsys] A\limp B$ and $\cC\ctx = \ctx$ the claim is trivial.
		Otherwise, we have a derivation starting with a $\lpar$-rule and continuing as follows:
		\begin{itemize}
			\item
			If $\cC\ctx\in \set{ \cC_1\ctx \lpar D,\cC_1\ctx \ltens D }$,
			we apply a $\lpar$-rule followed by a $\ltens$ rule splitting the context so that one of premise is $\vdash \cneg{\cC_1\ctx[A]}, \cC_1\ctx[B]$ and the other is $\vdash D, \cneg D$.
			The former is provable by inductive hypothesis, the last by \Cref{thm:cutelim}.\ref{cutelim:1}.

			\item
			If $\cC\ctx = \cC_1\ctx \lprec D $ we conclude similarly, applying a $\lprec$-rule instead of the $\ltens$-rule.
			\item
			If $\cC\ctx\in \set{\cC_1\ctx \lwith D, \cC_1\ctx \lplus D}$
			we apply a $\lpar$-rule followed by a $\lwith$-rule. Then on each branch we apply the $\oplus$-rule so that
			we obtain $\vdash\cneg{\cC_1\ctx[A]}, \cC_1\ctx[B] $ and  $\vdash D, \cneg D$.
			The former is provable by inductive hypothesis, the last by  \Cref{thm:cutelim}.\ref{cutelim:1}.
			\item
			If $\cC\ctx\in\set{\lNu{x}{\cC_1 \ctx},\lYa{x}{\cC_1 \ctx}}$, then we apply a $\nqsrule$-rule and we conclude by inductive hypothesis.
			\item
			If $\cC\ctx\in\set{\lFa{x}{\cC_1 \ctx}, \lEx{x}{\cC_1 \ctx}}$, then we first apply a $\forall$-rule, and then an $\exists$-rule where the variable $c$ is the same fresh variable used by the $\forall$-rule.
			We conclude by inductive hypothesis.
		\end{itemize}
		\item
		If $\cK\ctx=\chole$, then the result is trivial.
		Otherwise, $\cK\ctx ={(\chole \lpar C)}$ or $\cK\ctx=\lNus x{\left(\chole \lpar C\right)}$. We conclude since there are the following derivations
		$$
		\vlderivation{
			\vliq{\lpar}{}{
				\vdash
				\left(\bigoplus_{i=1}^n (\cneg A_i \ltens \cneg C)\right)
				\lpar
				(B \lpar C)
			}{
				\vlin{\oplus}{}{
					\vdash
					\bigoplus_{i=1}^n (\cneg{A_i} \ltens \cneg C)
					,
					B, C
				}{
					\vliin{\ltens}{}{
						\vdash
						\cneg A_i \ltens \cneg C
						,
						B, C
					}{
						\vlpr{}{\IH}{\cneg A_i,B}
					}{
						\vlpr{}{\text{\Cref{thm:cutelim}.\ref{cutelim:1}}}{\cneg C,C}
					}
				}
			}
		}
		\hskip2em
		\vlderivation{
			\vlin{\nqsrule}{}{
				\vdash
				\lYas{x}{\left(\bigoplus_{i=1}^n (\cneg A_i \ltens \cneg C)\right)}
				\lpar
				\lNus{x}{(B \lpar C)}
			}{
				\vliq{\lpar}{}{
					\vdash
					\left(\bigoplus_{i=1}^n (\cneg A_i \ltens \cneg C)\right)
					\lpar
					(B \lpar C)
				}{
					\vlin{\oplus}{}{
						\vdash
						\bigoplus_{i=1}^n (\cneg{A_i} \ltens \cneg C)
						,
						B, C
					}{
						\vliin{\ltens}{}{
							\vdash
							\cneg A_i \ltens \cneg C
							,
							B, C
						}{
							\vlpr{}{\IH}{\cneg A_i,B}
						}{
							\vlpr{}{\text{\Cref{thm:cutelim}.\ref{cutelim:1}}}{\cneg C,C}
						}
					}
				}
			}
		}
		$$
	\end{enumerate}
\end{proof}
%%%%%%%%%%%%%%%%%%%%%%%%%%%%%%%%%%%%%%%%%%%%%%%%%%%%%%%%%%%%%%%%%%%

%%%%%%%%%%%%%%%%%%%%%%%%%%%%%%%%%%%%%%%%%%%%%%%%%%%%%%%%%%%%%%%%%%%
\lemSimulation*
\begin{proof}
	\Cref{simulation:1} is proven using \Cref{prop:feq}
	and the transitivity of $\limp$ (see \Cref{thm:cutelim}.\ref{cutelim:3}).
	To prove \Cref{simulation:4} we reason by induction on the \entropy:
	\begin{itemize}
		\item if $\entof{P}{P'}=0$ then
		$P\redsem P'$ via $\rscomr$, $\rslabr$ or $\rschoir$.
		We conclude using the derivations in \Cref{fig:derivationspisys}.
		Note that, if $P\redsem P'$ via $\rschoir$, then $P=\plsend x{\lab: P_\lab}{\lab\in L}$. Therefore, there is a set of \procs $\set{P_\lab\mid\lab\in L}\ni P'$ such that $P\redsem P_\lab$ via $\rschoir$ for each $\lab\in L$;

		\item
		if $P\redsem P'$ via $\rsparr$ \resp{$\rsresr$},
		then there is a context $\cP\ctx=\left(\chole \ppar R\right)$ \resp{$\cP\ctx=\pnu x\chole$} such that
		$P=\cP\ctx[S]$  and  $P'=\cP\ctx[S']$.
		Then $\crof S{S'}=\crof P{P'}$ and:%=(Q,Q')$ and:
		\begin{itemize}
			\item
			if the core-reduction of $S\redsem S'$ is a $\rscomr$ \resp{$\rslabr$},
			then we have
			$\proves[\PIL]\fof {S'}\limp \fof{ S}$
			by inductive hypothesis because $\entof{S}{S'}<\entof{P}{P'}$.
			We conclude by \Cref{thm:der}.\ref{der:2};

			\item
			otherwise,
			the core-reduction of $S\redsem S'$ is a $\rschoir$
			and,
			by inductive hypothesis,
			there is a set of \procs $\set{S_\lab\mid\lab\in L}\ni S'$
			such that
			$\entof S{S_\lab}<\entof P{P_\lab}$ for each $\lab\in L$
			and
			$\proves[\PIL] \left(\bigwith_{\lab\in L}\fof{S_\lab}\right) \limp \fof S$.
			\begin{itemize}
				\item
				If $P \redsem P'$ via $\rsparr$, then we conclude by
				\Cref{thm:der}.\ref{der:2} and \Cref{thm:cutelim}.\ref{cutelim:3}
				using the fact that
				$\proves[\PIL] \bigwith_{\lab\in L}(\fof{S_\lab} \lpar \fof{R}) \limp \bigwith_{\lab\in L}\fof{S_\lab} \lpar \fof{R}$
				(see \Cref{eq:logEqs}).
				\item
				If $P \redsem P'$ via $\rsresr$,
				then conclude by \Cref{thm:der}.\ref{der:2}
				and \Cref{thm:cutelim}.\ref{cutelim:3} since $\bigwith_{\lab\in L}\left(\lNu{x}{\fof{S_\lab}}\right) = \lNup{x}{\bigwith_{\lab\in L}\fof{S_\lab}}$ by \Cref{prop:feq}.
			\end{itemize}
		\end{itemize}

		\item
		if $P\redsem P'$ via $\rsstreqr$, then there is $S$ such that $P\strews S$ and $S\redsem P'$ (via a rule different from $\rsstreqr$).
		By \Cref{simulation:1} we have
		$
		\proves[\PIL]\fof{S} \limp \fof P$ .
		Let $(Q,Q')=\crof S{P'}=\crof P{P'}$.
		\begin{itemize}
			\item If $Q\redsem Q'$ via $\rscomr$ or $\rslabr$,
			then we conclude by \Cref{thm:cutelim}.\ref{cutelim:3}
			since
			$\proves[\PIL]\fof S\limp \fof P$ by \Cref{simulation:1}
			and
			$\proves[\PIL]\fof {P'}\limp \fof S$ by inductive hypothesis;

			\item
			otherwise,
			$Q\redsem Q'$ via $\rschoir$.
			In this case, using \Cref{eq:prestreq}, we can assume w.l.o.g. that there is a \netcont $\cN\ctx$ such that $P\strews \cN\ctx[Q]=S$
			and a set of \procs $\set{Q_\lab \mid \lab \in L}\ni Q'$ such that $Q\redsem Q_\lab$ via $\rschoir$.
			Thus,
			by inductive hypothesis, that $ \proves[\PIL] \left(  \bigwith_{\lab\in L}\fof{Q_\lab}\right)\limp \fof{ Q} $.
			Therefore, we have a set of \procs $\set{ P_\lab=\cN\ctx[Q_\lab] \mid \lab \in L} \ni P'$ such that $S \redsem P_\lab$ for each $\lab \in L$.
			Moreover, by \Cref{thm:der}\ref{der:3} we have
			$ \proves[\PIL] \left(  \bigwith_{\lab\in L}\fof{P_\lab}\right)\limp \fof{ S} $
			because
			$P_\lab=\cN\ctx[Q_\lab]$ and $S=\cN\ctx[Q]$.
			We conclude by \Cref{thm:cutelim}.\ref{cutelim:3} since
			$\proves[\PIL]\fof S\limp \fof P$ by \Cref{simulation:1}.

		\end{itemize}
	\end{itemize}
\end{proof}
%%%%%%%%%%%%%%%%%%%%%%%%%%%%%%%%%%%%%%%%%%%%%%%%%%%%%%%%%%%%%%%%%%%

%%%%%%%%%%%%%%%%%%%%%%%%%%%%%%%%%%%%%%%%%%%%%%%%%%%%%%%%%%%%%%%%%%%
\thmTrees*
\begin{proof}
	To prove the theorem we show a correspondence between maximal \comptrees and derivations in $\PIL$.

	\textbf{($\Rightarrow$)}
	If $P$ is \lfree, then, by \Cref{lem:completeTree}, any maximal \comptree $\comptreeof{P}$ with root $P$ has leaves which are processes structurally equivalent to $\pnil$.
	We reason on the structure of any such tree $\tree$.
	\begin{itemize}
		\item
		If the root $P$ is a leaf, then $P\strews \pnil$ because $P$ is \lfree.
		Then, by \Cref{lem:simulation}.\ref{simulation:1} there is a derivation $\dD$ of $\vdash \unit\limp \fof P$.
		We conclude by applying cut-elimination (\Cref{thm:cutelim}.\ref{cutelim:2}) to the derivation made by composing (via $\cutr$-rule) the derivation $\dD$ with the single-rule derivation made of a $\urule$.

		\item
		If $P$ has a unique child $P'$ and the core-reduction of $P\redsem P'$ is a $\rscomr$ or a $\rslabr$, then there is a derivation $\dD$ of $\vdash \fof{P'}\limp\fof P$ by \Cref{lem:simulation}.
		We conclude by applying cut-elimination (\Cref{thm:cutelim}.\ref{cutelim:2}) to the derivation made by composing (via $\cutr$-rule) the derivation $\dD$ with the derivation with conclusion $\fof{P'}$, which exists by inductive hypothesis.

		\item
		Otherwise, $P$ has set of children $\set{P_\lab \mid \lab\in L}$ and the core-reduction of $P\redsem P_\lab$ is a $\rschoir$ for each $\lab\in L$.
		Then there is a derivation $\dD$ of $\vdash \bigwith_{\lab\in L}\fof{P_\lab} \limp \fof P$ by \Cref{lem:simulation}.
		We conclude by applying cut-elimination (\Cref{thm:cutelim}.\ref{cutelim:2}) to the derivation made by composing (via $\cutr$-rule) the derivation $\dD$ with the derivation starting with a $\lwith$-rule with premises the conclusion of a derivation $\dD_\lab$ with conclusion $\fof{P_\lab}$, which exists by inductive hypothesis.
	\end{itemize}

	\textbf{($\Leftarrow$)}
	To prove the converse, we show that each derivation $\dD_P$ of $\fof P$ can be transformed using the \emph{rule permutations}\footnote{{
		These rule permuations are some of the rule permutations used for the proof of cut-elimination (see \cite{acc:man:NL})
	}} in \Cref{fig:permutations} into a derivation $\widetilde{\dD_P}$ of a specific shape, and then applying inductive reasoning on the size of $P$.
	Note that we here assume, w.l.o.g., that processes are written in unambiguous forms, therefore their corresponding formulas are {clean}.
	\begin{enumerate}
		\item
		If $P=\cN\Ctx[\psend xyQ]$ \resp{$P=\cN\Ctx[\precv xzQ]$} for a \netcont,
		then in $\dD_P$ there must be a $\exists$-rule renaming (bottom-up) each occurrence of $z$ in $\lrecv xz\lprec\fof Q$ into occurrences of $y$, and an $\axrule$-rule with conclusion $\vdash \lsend xy,\lrecv xy$ immediately above
		a $\lprec$-rule.
		That is, there is a derivation $\dD_P$ of $\fof P$ of the following shape below on the left.
		\begin{equation}\label{eq:deadlock:com}
			\hskip-2em\adjustbox{max width=.92\textwidth}{$\begin{array}{c}
					\begin{array}{ccccc}
						\vlderivation{
							\vlde{\dD_1}{}{\vdash \fof P}{
								\vlin{}{}{
									\vdash \lsend xy\lprec A,\lEx z\left(\lrecv xz \lprec B \right), \Gamma'
								}{
									\vlde{\dD_2}{}{
										\vdash \lsend xy\lprec A ,\lrecv xy \lprec B\fsubst yz, \Gamma'
									}{
										\vliin{\lprec}{}{
											\vdash \lsend xy\lprec A ,\lrecv xy \lprec B\fsubst yz, \Gamma'
										}{
											\vlin{\axrule}{}{\vdash \lsend xy,\lrecv xy}{\vlhy{}}
										}{
											\vlpr{\dD_3}{}{\vdash A , B\fsubst yz, \Gamma'}
										}
									}
								}
							}
						}
						&
						\rightsquigarrow
						&
						\vlderivation{
							\vlde{\dD^-_{(\alpha, \beta,\Gamma)}}{}{\vdash \fof P}{
								\vlin{\exists}{}{
									\vpz1{}\vdash \lsend xy\lprec A,\lEx z\left(\lrecv xz \lprec B \right), \Gamma
								}{
									\vliin{\lprec}{}{
										\vdash \lsend xy\lprec A ,\lrecv xy \lprec B\fsubst yz, \Gamma
									}{
										\vlin{\axrule}{}{\vdash \lsend xy,\lrecv xy}{\vlhy{}}
									}{
										\vlpr{\dD'}{}{
											\vdash A , B\fsubst yz, \Gamma
										}
									}
								}
							}
							\tikz[overlay,remember picture]{
								\draw[draw=yellow,line width = 4pt,opacity=.5] %
								(pz1) ++(-12pt,-8pt) -- %
								++(144pt,0pt) 	--
								++(0pt,50pt)	--
								++(-144pt,0pt)	--
								++(0pt,-50pt);
							}
						}
						&\mapsto&
						\vlderivation{
							\vlde{\dD^-_{(\alpha', \beta',\Gamma)}}{}{\vdash \fof {P'}}{
								\vlpr{\dD'}{}{
									\vdash A,B\fsubst yz, \Gamma
								}
							}
						}
					\end{array}
					\\
					\mbox{
						with $\alpha=\lsend xy\lprec A$ and $\beta=\lEx z\left(\lrecv xz \lprec B \right)$,
						and
						with $\alpha'=A$ and $\beta'=B\fsubst yz$
					}
				\end{array}$}
		\end{equation}
		Therefore, we can transform $\dD_P$ into a derivation of the shape above in the center, where the derivations $\dD^-$ are made of $\nurule$- and $\lpar$-rules only.
		We can conclude by inductive hypothesis on the derivation in the right of \Cref{eq:deadlock:com}, since it provides a derivation in $\PIL$ for the ${P'}=\cN'\Ctx[Q \ppar R\fsubst yz]$, where $P'$ is the process such that $P\redsem P'$ via a reduction step with core-reduction a $\rscomr$.

		\item
		Otherwise, we can deduce that $P=\cN\Ctx[\plrecv{x}{\ell:Q_\ell}{\ell\in L},\plsend{x}{\ell:R_\ell}{\ell\in L}]$,
		then we can apply a similar reasoning to deduce that $\dD_P$ is of the following shape below on the left.
		\begin{equation}\label{eq:deadlock:choice}
			\hskip-2.3em\adjustbox{max width=.96\textwidth}{$\begin{array}{c}
					\begin{array}{c@{\!}cccc}
						\vlderivation{
							\vlde{\dD_1}{}{\vdash \fof P}{
								\vlin{\lwith}{}{
									\vdash
									\biglsel[\ell\in L_1]{\left(\lrecv{x}{\ell} \lprec \fof{Q_{\ell}}\right)},
									\biglbra[\ell\in L_2]{\left(\lsend{x}\ell\lprec\fof{R_{\ell}} \right)},
									\Gamma_1
								}{\multiprem{
										\vlde{\dD_2^\ell}{}{
											\vdash
											\biglsel[\ell\in L_1]{\left(\lrecv{x}{\ell} \lprec \fof{Q_{\ell}}\right)},
											\lsend{x}\ell\lprec\fof{R_{\ell}},
											\Gamma_2
										}{\vlin{\lplus}{}{
												\vdash
												\biglsel[\ell\in L_1]{\left(\lrecv{x}{\ell} \lprec \fof{Q_{\ell}}\right)},
												\lsend{x}\ell\lprec\fof{R_{\ell}},
												\Gamma_3
											}{
												\vlde{\dD_3^\ell}{}{
													\vdash
													\lrecv{x}{\ell} \lprec \fof{Q_{\ell}},
													\lsend{x}\ell\lprec\fof{R_{\ell}},
													\Gamma_4
												}{
													\vliin{\lprec}{}{
														\vdash
														\lrecv{x}{\ell} \lprec \fof{Q_{\ell}},
														\lsend{x}\ell\lprec\fof{R_{\ell}},
														\Gamma_5
													}{
														\vlin{\axrule}{}{\vdash \lsend x\lab,\lrecv x\lab}{\vlhy{}}
													}{
														\vlpr{\dD_4^\ell}{}{\vdash \fof{Q_{\ell}}, \fof{R_{\ell}}, \Gamma_5}
													}
												}
											}
										}
									}{\ell\in L_1}
								}
							}
						}
						&
						\rightsquigarrow
						&
						\vlderivation{
							\vlde{\dD^-_{(A,B,\Gamma)}}{}{\vdash \fof P}{
								\vlin{\lwith}{}{
									\vpz1{}\vdash
									\biglsel[\ell\in L_1]{\left(\lrecv{x}{\ell} \lprec \fof{Q_{\ell}}\right)},
									\biglbra[\ell\in L_2]{\left(\lsend{x}\ell\lprec\fof{R_{\ell}} \right)},
									\Gamma
								}{\multiprem{
										\vlin{\oplus}{}{
											\vdash
											\biglsel[\ell\in L_1]{\left(\lrecv{x}{\ell} \lprec \fof{Q_{\ell}}\right)},
											\lsend{x}\ell\lprec\fof{R_{\ell}},
											\Gamma\;
										}{
											\vliin{\lprec}{}{
												\vdash
												\lrecv{x}{\ell} \lprec \fof{Q_{\ell}},
												\lsend{x}\ell\lprec\fof{R_{\ell}},
												\Gamma
											}{
												\vlin{\axrule}{}{\vdash \lsend x\lab,\lrecv x\lab}{\vlhy{}}
											}{
												\vlpr{\dD_\ell}{}{
													\vdash
													\fof{Q_{\ell}},
													\fof{R_{\ell}},
													\Gamma
												}
											}
										}
									}{\lab\in L_1}
								}
							}
							\tikz[overlay,remember picture]{
								\draw[draw=yellow,line width = 4pt,opacity=.5] %
								(pz1) ++(-15pt,-14pt) -- %
								++(215pt,0pt) 	--
								++(0pt,80pt)	--
								++(-215pt,0pt)	--
								++(0pt,-80pt);
							}
						}
						&\mapsto&
						\left\{
						\vlderivation{
							\vlde{\dD^-_{(\alpha', \beta',\Gamma)}}{}{\vdash \fof {P_\ell}}{
								\vlpr{\dD_\lab}{}{
									\vdash \fof{Q_{\ell}},\fof{R_\ell}, \Gamma
								}
							}
						}
						\right\}_{\lab\in L}
						\!\!
					\end{array}
					\\
					\mbox{
						with
						$\alpha=\biglsel[\ell\in L_1]{\left(\lrecv{x}{\ell} \lprec \fof{Q_{\ell}}\right)}$
						and
						$\beta=\biglbra[\ell\in L_2]{\left(\lsend{x}\ell\lprec\fof{R_{\ell}} \right)}$
						,
						and
						with $\alpha'=\fof{Q_{\ell}}$
						and  $\beta'=\fof{R_\ell}$
					}
				\end{array}$}
			\hskip-1em
		\end{equation}
		Then we can apply rule permutation to obtain a derivation $\dD'_P$ of the shape above in the center,
		where the derivations $\dD^-$ are made of $\nurule$- and $\lpar$-rules only.
		We again conclude by inductive hypothesis since
		there is a derivation of the formula
		$\fof{P_\ell}=\fof{\cN\Ctx[Q_\ell\ppar R_\ell]}$
		as shown above in the right of \Cref{eq:deadlock:choice}
		and
		$P\redsem P_\ell$.
	\end{enumerate}

	Note that since $P$ is \rfree, then it suffices to reason on a single \comptree and not to take into account all possible \comptrees of $P$.
\end{proof}
%%%%%%%%%%%%%%%%%%%%%%%%%%%%%%%%%%%%%%%%%%%%%%%%%%%%%%%%%%%%%%%%%%%
%%%%%%%%%%%%%%%%%%%%%%%%%%%%%%%%%%%%%%%%%%%%%%%%%%%%%%%%%%%%%%%%%%%

%%%%%%%%%%%%%%%%%%%%%%%%%%%%%%%%%%%%%%%%%%%%%%%%%%%%%%%%%%%%%%%%%%%
%%%%%%%%%%%%%%%%%%%%%%%%%%%%%%%%%%%%%%%%%%%%%%%%%%%%%%%%%%%%%%%%%%%
\begin{lemma}\label{lem:mer}
	Let $P$, $Q$, $R$ and $S$ be \eprocs.
	\begin{enumerate}
		\item\label{lem:mer1}
		For all $P'$ and  $Q'$,
		$$(P \ppar Q) \chOrd (P' \ppar Q') \iff P \chOrd P' \mbox{ and } Q \chOrd Q'\mydot$$.

		\item\label{lem:mer2}
		If $P \chOrd Q$,
		then $\precv xy P \chOrd \precv xy Q$ and  $\psend xy P \chOrd \psend xy Q$ .

		\item\label{lem:mer3}
		If $P \chOrd Q$ and $P \chOrd R$,
		then $P \chOrd( Q \merge R)$ .

		\item\label{lem:mer4}
		If $P \chOrd Q$ and $R \chOrd S$,
		then $(P \merge R) \chOrd (Q \merge S)$.
	\end{enumerate}
\end{lemma}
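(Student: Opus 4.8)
The plan is to reduce all four items to the fact that the partial operation $\merge$ is commutative and associative wherever it is defined, so that $\chOrd$ (which by \Cref{def:chOrd} abbreviates $P\merge Q=P$) is precisely the partial order induced by $\merge$ seen as a join. I would therefore first prove, by induction on the structure of \eprocs, that whenever one side is defined so is the other and they are equal:
\[
	P\merge Q=Q\merge P \qquand (P\merge Q)\merge R=P\merge(Q\merge R)\mydot
\]
For $\pnil$, the prefixes $\psend xy{}$ and $\precv xy{}$, and the label-send constructor on equal label sets, $\merge$ commutes with the constructor and the claims follow at once from the induction hypothesis applied to the immediate subterms.

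With commutativity and associativity in hand, \Cref{lem:mer1} follows from the componentwise definition of $\merge$ on networks, which gives the distributivity $(P\ppar Q)\merge(P'\ppar Q')=(P\merge P')\ppar(Q\merge Q')$ (the name partition being forced for the merge to be defined). Hence $(P\ppar Q)\merge(P'\ppar Q')=P\ppar Q$ iff $(P\merge P')\ppar(Q\merge Q')=P\ppar Q$, and since a parallel composition of located processes is determined by its components this is equivalent to $P\merge P'=P$ and $Q\merge Q'=Q$, i.e. to $P\chOrd P'$ and $Q\chOrd Q'$. For \Cref{lem:mer2}, the defining clauses give $(\precv xy P)\merge(\precv xy Q)=\precv xy{(P\merge Q)}$ and $(\psend xy P)\merge(\psend xy Q)=\psend xy{(P\merge Q)}$, and the hypothesis $P\merge Q=P$ makes both right-hand sides equal to $\precv xy P$ and $\psend xy P$ respectively.

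The last two items are then pure semilattice reasoning. For \Cref{lem:mer3}, associativity together with the hypotheses $P\merge Q=P$ and $P\merge R=P$ give
\[
	P\merge(Q\merge R)=(P\merge Q)\merge R=P\merge R=P\mydot
\]
For \Cref{lem:mer4}, commutativity and associativity let one regroup the four operands, and the hypotheses $P\merge Q=P$ and $R\merge S=R$ give
\[
	(P\merge R)\merge(Q\merge S)=(P\merge Q)\merge(R\merge S)=P\merge R\mydot
\]

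The main obstacle is establishing associativity (and, to a lesser extent, commutativity) of $\merge$ in the label-receive case, where two branch maps are combined by taking the union of their label sets and merging the branches they share. There, for three label sets $L_1,L_2,L_3$ one must check that left- and right-association yield the same branch map; this reduces to associativity of $\cup$ together with the inductive associativity of $\merge$ on the labels common to all three (those in $L_1\cap L_2\cap L_3$), via a routine case split on which region of the Venn diagram of $L_1,L_2,L_3$ each label occupies. Commutativity in this case rests similarly on the commutativity of $\cup$ and $\cap$; all remaining cases are direct.
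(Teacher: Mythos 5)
Your proof is correct, but it takes a genuinely different route from the paper, which gives no argument at all: the paper's entire proof is a citation to Chapter~8 of \cite{montesi:book}, where these ``standard properties of merge'' are established. You instead make the lemma self-contained by reducing all four items to the fact that $\merge$ is a partial semilattice operation --- commutative and associative in the strong (Kleene) sense that if either side of an equation is defined then so is the other and they coincide --- proved by structural induction, with the only real work in the label-receive case (the Venn-diagram split over $L_1,L_2,L_3$). That strong form is exactly what is needed: in \Cref{lem:mer3}, for instance, definedness of $(P\merge Q)\merge R = P$ yields, via Kleene associativity, definedness of $Q\merge R$, which is part of what the conclusion $P\chOrd (Q\merge R)$ asserts; a weaker ``equal whenever both sides are defined'' associativity would leave a gap, and you correctly state the stronger version. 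Two small caveats, both inherited from the statement rather than introduced by your argument: in \Cref{lem:mer1} the equivalence requires the process-name sets of $P$ and $P'$ (resp.\ $Q$ and $Q'$) to coincide, since merge on networks matches components by process name and could otherwise pair a component of $P$ with one of $Q'$ (making the left-hand side hold while $P\merge P'$ is undefined); and \Cref{lem:mer4} implicitly presupposes that $P\merge R$ is defined. Both assumptions are implicit in how the lemma is used in the proof of \Cref{thm:choreo}. In short: the paper's citation buys brevity and an appeal to standard results, while your route buys a checkable, self-contained argument that makes the definedness bookkeeping --- the genuinely delicate point --- explicit.
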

\begin{proof}
These standard properties of merge are proven in Chapter $8$ of \cite{montesi:book}.
\end{proof}
%%%%%%%%%%%%%%%%%%%%%%%%%%%%%%%%%%%%%%%%%%%%%%%%%%%%%%%%%%%%%%%%%%%
%%%%%%%%%%%%%%%%%%%%%%%%%%%%%%%%%%%%%%%%%%%%%%%%%%%%%%%%%%%%%%%%%%%

%%%%%%%%%%%%%%%%%%%%%%%%%%%%%%%%%%%%%%%%%%%%%%%%%%%%%%%%%%%%%%%%%%%
\begin{lemma}\label{lemma:choreo}
	Let $\chC$ be a \cflat \prochor \chor.
	If $\chC\osto{\;\tm\;}\chC'$, then $\chC'$ is \prochor and $\EPP[\pr]\chC \chOrd \EPP[\pr]{\chC'}$
	for all $\pr \notin \pnamesin{\tm}$.
\end{lemma}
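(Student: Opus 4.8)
The plan is to argue by induction on the derivation of the reduction $\chC\osto{\tm}\chC'$. Since $\chC$ is \cflat, write $\chC=\cnus x\chCrf$ with $\chCrf$ restriction-free; any derivation of $\chC\osto{\tm}\chC'$ must first use $\crestr$ to strip the top-level restrictions and then derive $\chCrf\osto{\tm}\chCrf'$ using only the rules $\ccomr$, $\cchoir$, $\clabr$, $\cdelir$, $\cdelcr$. No rule introduces a restriction, so $\chC'=\cnus x\chCrf'$ is again \cflat; and since $\EPP[\pr]{\cdot}$ is defined on restriction-free choreographies (the top-level $\nu$'s being reintroduced only by the outermost clause of $\EPP{\cdot}$, see \Cref{fig:EPP}), we have $\EPP[\pr]\chC=\EPP[\pr]{\chCrf}$. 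It therefore suffices to prove both claims for the restriction-free reduction $\chCrf\osto{\tm}\chCrf'$.

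For the base cases I would reason directly from the definition of $\EPP[\pr]{\cdot}$. For $\ccomr$, $\chC=\gencom\chD\osto{\tcom\pp\pq k}\chD\fsubst xy=\chC'$ and $\pnamesin{\tm}=\set{\pp,\pq,k}$; a bystander $\pr\notin\set{\pp,\pq,k}$ skips the prefix, so $\EPP[\pr]\chC=\EPP[\pr]{\chD}$, and since $y$ may occur only under $\pq$ the renaming $\fsubst xy$ leaves $\EPP[\pr]{\chD}$ unchanged for $\pr\neq\pq$; hence $\EPP[\pr]\chC=\EPP[\pr]{\chC'}$ and $\chOrd$ holds trivially. The case $\clabr$ is analogous: the bystander projection of the committed choice $\genchoi$ is already $\EPP[\pr]{\chC_{\lab_i}}=\EPP[\pr]{\chC'}$. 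The interesting base case is $\cchoir$, where $\chC=\genchois$ commits to some $\lab_i\in L$, $\tm=\tbra\pp k$, and $\pnamesin{\tm}=\set{\pp,k}$, so that $\pq$ is a bystander: its projection is a receive over the unchanged label set $L'$, giving $\EPP[\pq]\chC=\EPP[\pq]{\chC'}$; while for $\pr\notin\set{\pp,\pq,k}$ we get $\EPP[\pr]\chC=\Merge_{\lab\in L}\EPP[\pr]{\chC_\lab}$ against $\EPP[\pr]{\chC'}=\EPP[\pr]{\chC_{\lab_i}}$, and the goal reduces to $\Merge_{\lab\in L}\EPP[\pr]{\chC_\lab}\chOrd\EPP[\pr]{\chC_{\lab_i}}$.

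For the inductive cases $\cdelir$ and $\cdelcr$, the inner step $\tm'$ reduces a continuation whose participants are disjoint from $\set{\pp,\pq,k}$. For any $\pr\notin\pnamesin{\tm'}$, its projection is either a (label-)prefix applied to the recursively reduced continuation(s) when $\pr\in\set{\pp,\pq}$, or a branch-wise merge of recursively reduced projections when $\pr\notin\set{\pp,\pq,k}$. I would close these by the induction hypothesis together with \Cref{lem:mer}.\ref{lem:mer2} (and its evident branch-wise analogue for label communications) for the prefix cases, and \Cref{lem:mer}.\ref{lem:mer4}, applied iteratively over the branches, for the merge case. Projectability of $\chC'$ is maintained throughout, because every merge appearing in $\EPP{\chC'}$ is either a sub-merge or a branch-wise image of a merge already defined for $\EPP{\chC}$, and definedness of a merge entails definedness of its summands.

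The step I expect to be the main obstacle is the merge-absorption inequality $\Merge_{\lab\in L}\EPP[\pr]{\chC_\lab}\chOrd\EPP[\pr]{\chC_{\lab_i}}$ in the $\cchoir$ case, as $\chOrd$ is oriented so that a larger merge dominates a single summand. Unfolding \Cref{def:chOrd}, this is the identity $\left(\Merge_{\lab\in L}\EPP[\pr]{\chC_\lab}\right)\merge\EPP[\pr]{\chC_{\lab_i}}=\Merge_{\lab\in L}\EPP[\pr]{\chC_\lab}$, which follows from idempotency, commutativity, and associativity of $\merge$ (standard properties proven in Chapter $8$ of \cite{montesi:book}), using $\lab_i\in L$ and the fact that the full merge is defined since $\chC$ is \prochor. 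Beyond this absorption step, the only care needed is to ensure that the partiality of $\merge$ never obstructs the monotonicity appeals to \Cref{lem:mer}; the rest is routine bookkeeping.
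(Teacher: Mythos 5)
Your overall strategy --- induction on the reduction with a case analysis over the rules, handling the congruence cases $\cdelir$/$\cdelcr$ via the induction hypothesis and \Cref{lem:mer}, and justifying the choice-related cases by a merge-absorption property --- is the same as the paper's, and all the ingredients the proof needs are present in your write-up. The one concrete misstep is \emph{where} you attach the absorption inequality $\Merge_{\lab\in L}\EPP[\pr]{\chC_\lab}\chOrd\EPP[\pr]{\chC_{\lab_i}}$. In the paper's syntax, the committed choice $\genchoi$ produced by $\cchoir$ still carries \emph{all} the branches $\chC_\lab$ for $\lab\in L$; only the sender's menu shrinks to $\set{\lab_i}$. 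Consequently, at $\cchoir$ every projection other than $\pp$'s is literally unchanged --- equality holds for $\pq$ and for third parties alike, which is exactly how the paper closes that case --- and the absorption step is needed instead at $\clabr$, where $\genchoi$ reduces to $\chC_{\lab_i}$ and a third party's projection drops from $\Merge_{\lab\in L}\EPP[\pr]{\chC_\lab}$ to $\EPP[\pr]{\chC_{\lab_i}}$. Your claim that at $\clabr$ the bystander projection ``is already $\EPP[\pr]{\chC_{\lab_i}}$'' is therefore false as stated.

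Note, moreover, that your two claims in the $\cchoir$ case are not consistent with any single reading of $\genchoi$: if the committed choice really kept only the branch $\lab_i$ (the reading you use for third parties), then $\pq$'s projection would \emph{not} be unchanged --- its branches for $\lab\in L\setminus\set{\lab_i}$ would flip from $\EPP[\pq]{\chC_\lab}$ to the garbage processes $S_\lab$, and the required $\chOrd$ would not follow, since nothing relates $\EPP[\pq]{\chC_\lab}$ to $S_\lab$. Under the paper's reading both of your claims become sound (the absorption you prove at $\cchoir$ is simply unnecessary there, as equality already holds). So the repair is purely local: keep your absorption argument --- idempotency, commutativity, and associativity of $\merge$, the fact that $\lab_i\in L$, and definedness of the full merge from projectability of $\chC$ --- exactly as you wrote it, but attach it to the $\clabr$ case rather than to $\cchoir$.
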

\begin{proof}
	By induction on the structure of $\chC$.
	\begin{itemize}
		\item
		If $\chC'$ is obtained by $\ccomr$, then
		$\chC = \gencom \chC^1 \osto{\tcom \pp\pq k} \chC'=\chC^1 \chorfsubst xy\pq$.
		Since substitution does not affect $\pr\notin\set{\pp,\pq}$, we conclude by definition of EPP observing that
		$\EPP[\pp]{\chC^1} = \EPP[\pp]{\chC^1 \chorfsubst xy\pq}$.

		\item
		If $\chC'$ is obtained by $\cchoir$, then there is a $\lab_i\in L$ such that
		$$\adjustbox{max width= .9\textwidth}{$\chC =\genchois \osto{\;\;\tbra\pp k\;\;} \genchoi = \chC'$}\mydot$$
		We conclude since
		$\EPP[\pr]{\chC } = \merge_{\lab \in L} \EPP[\pr]{\chC_\lab} = \EPP[\pr]{\chC'}$.

		\item If $\chC'$ is obtained by $\clabr$, then $\lab_i \in L'$
		and
		$$\chC = \genchoi \osto{\tcom \pp\pq k} \chC_{\lab_i} = \chC'\mydot$$
		We conclude since $\EPP[\pr]{\chC } = \merge_{\lab \in L} \EPP[\pr]{\chC_{\lab}} \chOrd \EPP[\pr]{\chC'}$.

		\item If  $\chC'$ is obtained by $\cdelir$, then
		$$\chC = \gencom \chC^1 \osto{\;\;\tm\;\;}
		\gencom\chC^2 = \chC'$$
		for some \chors $\chC^1$ and $\chC^2$  such that
		$\chC^1 \osto{\;\;\tm\;\;}\chC^2$ with
		$\pnamesin{\tm}\cap\set{\pp,\pq,k}=\emptyset$.
		We have the following cases:
		\begin{itemize}
			\item
			If $\pq \neq \pr \neq \pp$, then
			$\EPP[\pr]{\chC} = \EPP[\pr]{\chC^1}$ and $\EPP[\pr]{\chC'} =  \EPP[\pr]{\chC^2}$
			. We conclude by inductive hypothesis.

			\item
			If $\pr = \pp$, then
			$\EPP[\pr]{\chC} = \psend k x \EPP[\pp]{\chC^1}$
			and $\EPP[\pr]{\chC'} = \psend k x \EPP[\pp]{\chC^2}$.
			We conclude by inductive hypothesis and \Cref{lem:mer2}.

			\item
			Otherwise, $\pr = \pq$ and
			$\EPP[\pr]{\chC} = \precv k x \EPP[\pp]{\chC^1}$
			and $\EPP[\pr]{\chC'} = \precv k x \EPP[\pp]{\chC^2}$.
			We conclude by inductive hypothesis and \Cref{lem:mer2}.

		\end{itemize}

		\item
		If  $\chC'$ is obtained by $\cdelcr$, then
		$$\adjustbox{max width=.9\textwidth}{
			$\chC = \genchois\osto{\;\;\tm\;\;}\Choics[\chC^1]{p}{q}{k}{\lab}{L} = \chC'
		$}$$
		for some \chors $\chC^1_\lab$ such that
		$\chC_\lab\osto{\;\tm\;}\chC_\lab^1$ with $\pnamesin{\tm}\cap\set{\pp,\pq,k}=\emptyset$ for all $\lab\in L$.
		We have the following cases:
			\begin{itemize}
				\item
				If $\pq \neq \pr \neq \pp$, then $\EPP[\pr]{\chC} = \merge_{\lab \in L}\EPP[\pr]{\chC_\lab}$ and
				$\EPP[\pr]{\chC'} = \merge_{\lab \in L} \EPP[\pr]{\chC^1_\lab}$.
				We conclude by inductive hypothesis and \Cref{lem:mer4}.

				\item If $\pr = \pp$, then
				$\EPP[\pr]{\chC} = \pbras {k}{\lab : \EPP[\pr]{\chC_\lab}}{\lab\in L}$
				and
				$\EPP[\pr]{\chC'} = \pbras {k}{\lab : \EPP[\pr]{\chC^1_\lab}}{\lab\in L}$.
				We conclude by inductive hypothesis.

				\item
				If $\pr = \pq$ then
				$\EPP[\pr]{\chC} = \psels {k}{
					\begin{array}{l|@{\;}l}
						\lab : \EPP[\pp_i]{\chC_\lab} 	& \lab\in L
						\\
						\lab : S_\lab					& \lab\in L'\setminus L
					\end{array}
				}{}$
				and
				$\EPP[\pr]{\chC'} = \psels {k}{
					\begin{array}{l|@{\;}l}
						\lab : \EPP[\pp_i]{\chC^1_\lab} 	& \lab\in L
						\\
						\lab : S_\lab					& \lab\in L'\setminus L
					\end{array}
				}{} \mydot$
				We conclude by inductive hypothesis.
			\end{itemize}
		\item
		If $\chC'$ is obtained by $\crestr$ the claim follows by inductive hypothesis.
	\end{itemize}
\end{proof}
%%%%%%%%%%%%%%%%%%%%%%%%%%%%%%%%%%%%%%%%%%%%%%%%%%%%%%%%%%%%%%%%%%%
%%%%%%%%%%%%%%%%%%%%%%%%%%%%%%%%%%%%%%%%%%%%%%%%%%%%%%%%%%%%%%%%%%%

%%%%%%%%%%%%%%%%%%%%%%%%%%%%%%%%%%%%%%%%%%%%%%%%%%%%%%%%%%%%%%%%%%%
\EPPthm*
\begin{proof}~

	\textbf{Completeness}.
	By case analysis on the reduction steps.
	\begin{itemize}
		\item
		If $\chC'$ is obtained by a $\ccomr$ step, then
		$\chC = \gencom \chC^1$ and $\chC' = \chC^1 \chorfsubst xy\pq$.
		By definition
		$$
		\EPP \chC
		=
		\prod_{i=1}^{n}\proloc{\pp_i}{\EPP[\pp_i]{\chC}}
		\ppar
		\proloc\pp{\psend k x \EPP[\pp]{\chC^1}}
		\ppar
		\proloc\pq{\precv k y \EPP[\pq]{\chC^1}}
		$$
		and by \Cref{lemma:choreo} we also have that
		$$
		\EPP{\chC'}
		=
		\prod_{i=1}^{n}\proloc{\pp_i}{\EPP[\pp_i]{\chC}}
		\ppar
		\proloc\pp{\EPP[\pp]{\chC^1 \chorfsubst xy\pq}}
		\ppar
		\proloc\pq{\EPP[\pq]{\chC^1\chorfsubst xy\pq} }
		\mydot
		$$

		Since $\EPP[\pp]{\chC^1 \chorfsubst xy\pq} =\EPP[\pp]{\chC^1}$
		and
		%\gcor{}
		{$\EPP[\pq]{\chC^1} \chorfsubst xy\pq= \EPP[\pq]{\chC^1 \chorfsubst xy\pq}$}
		we conclude observing that $\EPP\chC\osto{\;\tm\;}\EPP{\chC'}$ by a $\Ecomr$ step.

		\item
		If $\chC'$ is obtained by a $\cchoir$ step, then
		$\chC =\genchois$.
		By definition and by \Cref{lemma:choreo}, we have that
		$$\adjustbox{max width=.9\textwidth}{$\begin{array}{l}
			\EPP \chC=
			Q
			\ppar
			\proloc\pp{\pbras {k}{\lab : \EPP[\pp]{\chC_\lab}}{\lab\in L}}
			\ppar
			\proloc\pq{\psels {k}{
				\begin{array}{l|@{\;}l}
					\lab : \EPP[\pq]{\chC_\lab} 	& \lab\in L
					\\
					\lab : S_\lab					& \lab\in L'\setminus L
				\end{array}
			}{}}
		\\\mbox{and}\\
			\EPP{\chC'}=
			Q
			\ppar
			\proloc\pp{\pbras {k}{\lab_i : \EPP[\pp]{\chC_{\lab_i}}}{}}
			\ppar
			\proloc\pq{\psels {k}{
					\begin{array}{l|@{\;}l}
						\lab : \EPP[\pq]{\chC_\lab} 	& \lab\in L
						\\
						\lab : S_\lab					& \lab\in L'\setminus L
					\end{array}
				}{}}
		\end{array}$}$$
		with $Q=\Merge_{\lab \in L}\prod_{i=1}^{n}\proloc{\pp_i}{\EPP[\pp_i]{\chC_\lab}}$ .
		Hence, $\EPP\chC\osto{\;\tm\;}\EPP{\chC'}$ by a $\Cchor$ step.

		\item
		If $\chC'$ is obtained by a $\clabr$ step, then $\chC = \genchoi$.
		By definition
		$$\adjustbox{max width=.9\textwidth}{$\begin{array}{l}
			\EPP{\chC}
			=
			Q
			\ppar
			\proloc\pp{\pbras {k}{\lab_i : \EPP[\pp]{\chC_{\lab_i}}}{}}
			\ppar
			\proloc\pq{\psels {k}{
					\begin{array}{l|@{\;}l}
						\lab : \EPP[\pq]{\chC_\lab} 	& \lab\in L
						\\
						\lab : S_\lab					& \lab\in L'\setminus L
					\end{array}
			}{}}
		\\\mbox{and}\\
			\EPP{\chC'}
			=
			Q
			\ppar
			\proloc\pp{\EPP[\pp]{\chC_{\lab_i}}}
			\ppar
			\proloc\pq{\EPP[\pq]{\chC_{\lab_i}}}
		\end{array}			$}
		$$
		with $Q=\Merge_{\lab \in L}\prod_{i=1}^{n}\proloc{\pp_i}{\EPP[\pp_i]{\chC_\lab}}$.
		Since $\merge_{\lab \in L} \EPP[\pp_i]{\chC_\lab} \chOrd \EPP[\pp_i]{\chC_{\lab_i}}$ by
		\Cref{lem:mer}.\ref{lem:mer1}, then we conclude
		$$\adjustbox{max width=.9\textwidth}{$
			\EPP{\chC}
			\redsem
			\left(\Merge_{\lab \in L} \prod_{i=1}^{n}\proloc{\pp_i}{\EPP[\pp_i]{\chC_\lab}}
			\ppar
			\proloc\pp{\EPP[\pp]{\chC_{\lab_i}} }
			\ppar
			\proloc\pq{\EPP[\pq]{\chC_{\lab_i}}}\right)
			\chOrd
			\EPP{\chC'}
		\mydot
		$}$$

		\item
		If $\chC'$ is obtained by a $\crestr$ step,
		then $\chC =\cnu x  \chC^1$ and
		$\chC' = \cnu x  \chC^2$.

		By inductive hypothesis $\EPP{\chC^1} \osto{\;\tm\;} P \chOrd \EPP{\chC^2}$,
		and by definition
		$$\begin{array}{l}
			\EPP{\chC} =
			\pnu x \EPP{\chC^1} \osto{\;\tm\;} \pnu x P
			\chOrd \chC' \mydot
		\end{array}$$

		\item
		If $\chC'$ is obtained by a $\cdelir$ step,
		then $\chC = \gencom \chC^1$ and $\chC' = \gencom \chC^2$.

		By induction hypothesis
		$\EPP{\chC^1} \osto{\;\tm\;} P \chOrd \EPP{\chC^2}$.
		In particular since $\tm \notin \set{\pp, \pq}$ there is an \eproc $Q$
		such that
		\begin{equation}\label{eq:choreo_comp1}
		\adjustbox{max width=.9\textwidth}{$\begin{array}{rcccl}
			\EPP{\chC^1}
			& = &
			\left(Q \ppar \proloc\pp{\EPP[\pp]{\chC_1}} \ppar \proloc\pq{\EPP[\pq]{\chC_1}}\right)
			&\osto{\;\tm\;}\\&\osto{\;\tm\;}&
			\left(Q' \ppar \proloc\pp{\EPP[\pp]{\chC_1}} \ppar \proloc\pq{\EPP[\pq]{\chC_1}}\right)
			&\chOrd& \EPP{\chC^2}
		\end{array}$}
		\end{equation}
		with
		$Q \osto{\;\tm\;} Q'$ and $P = Q' \ppar \proloc\pp{\EPP[\pp]{\chC_1}} \ppar \proloc\pq{\EPP[\pq]{\chC_1}}$.

		We deduce:
		\begin{equation}\label{eq:choreo_comp}
			\EPP{\chC}
			\osto{\;\tm\;}
			Q' \ppar \proloc\pp{\psend k x \EPP[\pp]{\chC^1}}\ppar
			\proloc\pq{\precv k y \EPP[\pq]{\chC^1}}
			\mydot
		\end{equation}

		By \Cref{lemma:choreo} applied to \Cref{eq:choreo_comp1}
		we deduce
		$ Q' \chOrd \prod_{i=1}^{n}\proloc{\pp_i}{\EPP[\pp_i]{\chC^2} }$.

		By \Cref{lemma:choreo} the following equations hold:
		\begin{equation}\label{eq:choreo_comp3}
			\psend k x \EPP[\pp]{\chC^1} \chOrd \psend k x \EPP[\pp]{\chC^2}
		\qquad \psend k x \EPP[\pq]{\chC^1} \chOrd \psend k x \EPP[\pq]{\chC^2}
		\end{equation}

		From \Cref{eq:choreo_comp} and \Cref{eq:choreo_comp3}, using \Cref{lem:mer}.\ref{lem:mer1}
		we deduce
		$$
		Q'
		\ppar
		\proloc\pp{\psend k x \EPP[\pp]{\chC^1}}
		\ppar
		\proloc\pq{\precv k y \EPP[\pq]{\chC^1}}
		\chOrd
		\EPP{\chC'}
		\mydot
		$$

		\item
		if $\chC'$ is obtained by a $\cdelcr$ step, then $\chC' = \Choics[\chC^1]{\pp}{\pq}k\lab{L}$ with $\chC_\lab \osto{\; \tm \;} \chC^1_\lab$
		for $\lab \in L$.
		By inductive hypothesis $\EPP[]{\chC_\lab} \osto{\; \tm \;} P_\lab \chOrd \EPP[]{\chC^1_\lab}$ for $\lab \in L$.

		Since $\pnamesin{\tm'}\cap\set{\pp,\pq,k}=\emptyset$ there must be a $Q_\lab$ such that
		\begin{equation}\label{eq:delchoi_compl}
			\prod_{i=1}^{n}\proloc{\pp_i}{\EPP[\pp_i]{\chC_\lab}} \osto{\; \tm \;} Q_\lab \chOrd \prod_{i=1}^{n}\proloc{\pp_i}{\EPP[\pp_i]{\chC^1_\lab}}
		\end{equation}
		 and
		$P_\lab = Q_\lab \ppar \proloc\pp{\EPP[\pp]{\chC_\lab}} \ppar \proloc\pq{\EPP[\pq]{\chC_\lab}}$.

		Let us fix the following notation
		$$\adjustbox{max width=.9\textwidth}{$\begin{array}{l}
			\EPP{\chC}
			=
			Q
			\ppar
			\proloc\pp{\pbras {k}{\lab : \EPP[\pp]{\chC_{\lab}}}{\lab \in L}}
			\ppar
			\proloc\pq{\psels {k}{
					\begin{array}{l|@{\;}l}
						\lab : \EPP[\pq]{\chC_\lab} 	& \lab\in L
						\\
						\lab : S_\lab					& \lab\in L'\setminus L
					\end{array}
			}{}}
		\\\mbox{and}\\
			\EPP{\chC'}
			=
			Q'
			\ppar
			\proloc\pp{\pbras {k}{\lab : \EPP[\pp]{\chC^1_{\lab}}}{\lab \in L}}
			\ppar
			\proloc\pq{\psels {k}{
					\begin{array}{l|@{\;}l}
						\lab : \EPP[\pq]{\chC^1_\lab} 	& \lab\in L
						\\
						\lab : S_\lab					& \lab\in L'\setminus L
					\end{array}
			}{}}
		\end{array}			$}
		$$
		with $Q=\Merge_{\lab \in L}\prod_{i=1}^{n}\proloc{\pp_i}{\EPP[\pp_i]{\chC_\lab}}$ and
		$Q'=\Merge_{\lab \in L}\prod_{i=1}^{n}\proloc{\pp_i}{\EPP[\pp_i]{\chC^1_\lab}}$.

		From \Cref{eq:delchoi_compl} by \Cref{lem:mer} we deduce
		\begin{equation}\label{punto1}
			Q \osto{\;\tm\;} \Merge_{\lab \in L} Q_\lab \chOrd Q' \mydot
		\end{equation}

		Since
		by \Cref{lemma:choreo} $\EPP[\pp]{\chC} \chOrd \EPP[\pp]{\chC^1}$ and $\EPP[\pq]{\chC} \chOrd \EPP[\pq]{\chC^1}$
		we deduce
		\begin{equation}\label{punto2}
			\proloc\pp{\pbras {k}{\lab : \EPP[\pp]{\chC_{\lab}}}{\lab \in L}} \chOrd
			\proloc\pp{\pbras {k}{\lab : \EPP[\pp]{\chC^1_{\lab}}}{\lab \in L}}
		\end{equation}
		and
		\begin{equation}\label{punto3}
			\adjustbox{max width=.9\textwidth}{$\proloc\pq{\psels {k}{
		\begin{array}{l|@{\;}l}
			\lab : \EPP[\pq]{\chC_\lab} 	& \lab\in L
			\\
			\lab : S_\lab					& \lab\in L'\setminus L
		\end{array}
		}{}} \chOrd \proloc\pq{\psels {k}{
			\begin{array}{l|@{\;}l}
				\lab : \EPP[\pq]{\chC^1_\lab} 	& \lab\in L
				\\
				\lab : S_\lab					& \lab\in L'\setminus L
			\end{array}
		}{}} \mydot$}
		\end{equation}

		We conclude by \Cref{lem:mer}.\ref{lem:mer1} and \Cref{punto1}, \Cref{punto2}, \Cref{punto3}.
	\end{itemize}

	\textbf{Soundness}. By induction on the structure of $\chC$:

	\begin{itemize}
		\item if $\chC = \chnil$ then $P = \pnil$ and there is no transition to consider;

		\item if $\chC = \gencom \chC^1 $, then by \Cref{lem:mer}.\ref{lem:mer2}, by definition of merge and by definition of endpoint projection
		$P = R \ppar \proloc\pp{\psend kx P_\pp} \ppar \proloc\pq{\precv ky P_\pq}$
		with
		$R \chOrd \left(\prod_{i=1}^{n}\proloc{\pp_i}{\EPP[\pp_i]{\chC^1}}\right)$,
		and
		$P_{\pp} \chOrd \EPP[\pp]{\chC^1} $
		and
		$ P_{\pq} \chOrd \EPP[\pq]{\chC^1}$.

		There are two possible cases:

		\begin{itemize}
			\item $\tm = \tcom \pp\pq k$, then $P' = R \ppar P_{\pp} \ppar P_{\pq} \fsubst xy $ . We set $\chC' = \chC^1 \chorfsubst xy\pq $.
			We conclude observing that $\EPP[\pp_i]{\chC^1 \chorfsubst xy\pq} = \EPP[\pp_i]{\chC^1}$
			for all $\pp_i \neq \pq$
			and $\EPP[\pq]{\chC^1 \chorfsubst xy\pq} = \EPP[\pq]{\chC^1} \fsubst xy$.

			\item
			$\tm \neq \tcom \pp\pq k$,
			then $P' = R' \ppar \psend kx P_{\pp} \ppar \precv ky P_{\pq}$ with $R \osto{\;\tm\;} R'$. We set $\chC' = \gencom \chC^2$
			with $\chC^1 \osto{\;\tm\;} \chC^2$.
			Note that $R \ppar P_{\pp} \ppar P_{\pq} \osto{\;\tm\;} R' \ppar P_{\pp} \ppar P_{\pq}$.
			By inductive hypothesis $\chC^1 \osto{\;\tm\;} \chC^2$ and $R' \ppar P_{\pp} \ppar P_{\pq} \chOrd \EPP{\chC^2}$.
			We conclude by \Cref{lem:mer}.
		\end{itemize}

		\item
		If $\chC = \genchois$, then by \Cref{lem:mer}.\ref{lem:mer2}, by definition of merge and by definition of endpoint projection

		$$
		\adjustbox{max width=.9\textwidth}{$P = R \ppar \pbras {k}{\lab : P_{\lab}}{\lab \in L} \ppar \psels {k}{
			\begin{array}{l|@{\;}l}
				\lab : {Q_\lab} 	& \lab\in L
				\\
				\lab : R_\lab					& \lab\in L'\setminus L
				\\
				\lab : T_\lab 		& \lab \in L'' \setminus L'
			\end{array}
		}{}$}
		$$

		with $L'' \subseteq L'$
		set of labels,
		$R \chOrd \Merge_{\lab \in L} \prod_{i=1}^n\EPP[\pp_i]{\chC_{\lab}}$,
		and
		$P_{\lab} \chOrd \EPP[\pp]{\chC_\lab} $ ,
		and
		$ P_{\lab} \chOrd \EPP[\pq]{\chC_\lab}$ for all $\lab \in L$,
		and
		$R_\lab \chOrd S_\lab$ for $\lab \in L'$.

		\begin{itemize}
			\item
			If $\pnamesin{\tm }\cap \set{\pp, \pq } \neq \varnothing$,
			then
			either
			$P \osto{\;\tm\;} P'$ via a step $\cchoir$, in this case we set $\chC' = \genchoi$,
			or $L = \set{\lab_i}$ and
			$P \osto{\;\tm\;} P'$ via a step $\clabr$, we set $\chC' = \chC_{\lab_i}$.
			The claim follows from \Cref{lem:mer} and definition of endpoint projection.

			\item
			If $\pnamesin{\tm }\cap \set{\pp, \pq } = \varnothing$,
			then
			$$\adjustbox{max width=.9\textwidth}{$
				P =
				R' \ppar
				\proloc\pp{\pbras {k}{\lab : P_{\lab}}{\lab \in L}}
				\ppar
				\proloc\pq{\psels {k}{
				\begin{array}{l|@{\;}l}
					\lab : {Q_\lab} 	& \lab\in L
					\\
					\lab : R_\lab					& \lab\in L'\setminus L
					\\
					\lab : T_\lab 		& \lab \in L'' \setminus L'
				\end{array}
				}{}}
			$}$$

			with $R \osto{\; \tm \;} R'$.
			In this case we set $\chC' = \Choics[\chC^1]{\pp}{\pq}k\lab{L}$ with $\chC_\lab \osto{\; \tm \;} \chC^1_\lab$ for $\lab \in L$.
			By inductive hypothesis
			$ R' \chOrd \prod_{i=1}^n \EPP[\pp_i]{\chC^1_\lab}$
			for all $\lab \in L$.
			We conclude by \Cref{lem:mer}.\ref{lem:mer3}.

		\end{itemize}

	\end{itemize}
\end{proof}

%%%%%%%%%%%%%%%%%%%%%%%%%%%%%%%%%%%%%%%%%%%%%%%%%%%%%%%%%%%%%%%%%%%

%%%%%%%%%%%%%%%%%%%%%%%%%%%%%%%%%%%%%%%%%%%%%%%%%%%%%%%%%%%%%%%%%%%
\thmChor*
\begin{proof}
	If $P =\pnus x \left(\prod_{i=1}^{n}\proloc{\pp_i}{S_i}\right)$ is \lfree, then, by \Cref{thm:deadlockfreeChor}, there is a derivation in $\ChorL$ with conclusion $\fofp P$.
	In particular, $\dD$ contains at most a unique an occurrence of the rule $\Crestr$ (as the bottom-most rule in $\dD$) only if $k>0$.
	That is,
	$$
	\vlderivation{
		\dD=
		\vlin{\Crestr}{}{
			\vdash \fofp{\pnus x \left( \proloc{\pp_1}{S_1}\ppars \proloc{\pp_n}{S_n}\right)}
		}{
			\vlpr{\dD'}{}{
				\vdash \named{\fof{S_1}}{\pp_1},\ldots , \named{\fof{S_n}}{\pp_n}
			}
		}
	}
	\mbox{ if $k>0$, otherwise } \dD = \dD' \mydot
	$$
	Therefore it suffices to prove that
	$\EPP{\chCof[\dD']}=\prod_{i=1}^{n}\proloc{\pp_i}{S_i}$
	to conclude, since  $\EPP{\chCof[\dD]} = \pnus[k]{x}(\EPP{\chCof[\dD']}) = P$.

	We proceed by induction on the structure of $\dD'$ reasoning on the bottom-most rule $\rrule$ in $\dD'$:

	\begin{itemize}
		\item if
		$\rrule=\Caxr$,
		then $P = \pnil$ and
 		$\chCof[\dD']=\chnil$
		thus $P = {\EPP\chnil}$;

		\item
		if
		$\rrule=\Ccomr$,
		then
		$\chCof[\dD']=\gencom \chCof[\dD'']$ for a $\dD''$ such that
		$$
		\dD'=
		\vlderivation{
			\vlin{\Ccomr}{}{
				\vdash \Gamma,
				\named[pzbrickred]{\fof{\psend xy S}}{\pp},
				\named[pzgreen]{\fof{\precv xz S'}}{\pq}
			}{
				\vlpr{\dD''}{}{
					\vdash \Gamma,
					\named[pzbrickred]{\fof{S}}{\pp},
					\named[pzgreen]{\fof{S'\fsubst yz}}{\pq}
				}
			}
		}
		$$

		By inductive hypothesis
		$\ffof{\EPP{\chCof[\dD'']}}= \Gamma , \named{\fof S}{\pp}  , \named{\fof{S'\fsubst yz}}\pq$.

		Thus, we conclude since
		$\dD'$ has conclusion the following sequent.
		$$\begin{array}{lcl}
			\Gamma, \named{\fof{\psend xy S}}\pp , \named{\fof{\precv xz S'}}\pq
			&=&
			\ffof{\EPP{\gencom\chCof[\dD'']}}=
			\\
			&=&
			\ffof{\EPP{\chCof[\dD']}}
		\end{array}
		\quad;
		$$

		\item
		if
		$\rrule=\Cchor$,
		then
		$$
		\dD'=
		% \vlupsmash{
			\vlderivation{
				\vlin{\Cchor}{L \subseteq L'}{
					\vdash
					\Gamma,
					\named[pzbrickred]{\fof{\plsend{k}{\lab:S_\lab}{\lab\in L}}}{\pp} , 	\named[pzgreen]{\fof{\plrecv{k}{\lab:S'_\lab}{\lab\in L'}}}{\pq}
				}{\multiprem{
					\vlpr{\dD_\lab}{}{
								\vdash
								\Gamma,
								\named[pzbrickred]{\fof{S_\lab}}{\pp} , \named[pzgreen]{\fof{S'_\lab}}{\pq}
					}
				}{\lab\in L}}
			}
		% }
		$$
	\end{itemize}
	By inductive hypothesis $\Gamma,
	\named[pzbrickred]{\fof{S_\lab}}{\pp} , \named[pzgreen]{\fof{S'_\lab}}{\pq}
	=
	\EPP{\chCof[\dD_\lab]}$ for each $\lab \in L$.
	We conclude observing that
	$$
		\begin{array}{l}
			\Gamma, \named[pzbrickred]{\fof{\plsend{k}{\lab:S_\lab}{\lab\in L}}}{\pp} ,
			\named[pzgreen]{\fof{\plrecv{k}{\lab:S'_\lab}{\lab\in L'}}}{\pq}
			=
			\\\\
			= \ffof{\EPP{
				\specchoics{p}{q}{k}{L}{\lab}{
					\chCof[\vlderivation{
					\vlpr{\dD_{\lab}}{}{\vdash \Gamma, \named[pzbrickred]{\fof{S_\ell}}{\pp}, \named[pzgreen]{\fof{S'_\ell}}{\pq}}
					}]
				}{S'_\lab}
			}}=
			\\\\
			= \ffof{\EPP{\chCof[\dD']}}
		\end{array}
	$$
	We conclude by remarking that
	if $k = 0$ then $\chCof[\dD] = \chCof[\dD']$
	and if $k  > 0$ then $\EPP{\chCof[\dD]} = \pnus[k]{x}(\EPP{\chCof[\dD']}) = P$.

	The left-to-right implication follows by \Cref{thm:choreo}.
\end{proof}
%%%%%%%%%%%%%%%%%%%%%%%%%%%%%%%%%%%%%%%%%%%%%%%%%%%%%%%%%%%%%%%%%%%

\clearpage
%%%%%%%%%%%%%%%%%%%%%%%%%%%%%%%%%%%%%%%%%%%%%%%%%%%%%%%%%%%%%%%%%%%
%%%%%%%%%%%%%%%%%%%%%%%%%%%%%%%%%%%%%%%%%%%%%%%%%%%%%%%%%%%%%%%%%%%
%%%%%%%%%%%%%%%%%%%%%%%%%%%%%%%%%%%%%%%%%%%%%%%%%%%%%%%%%%%%%%%%%%%
\section{Running example of \chor extraction}\label{app:example}
%%%%%%%%%%%%%%%%%%%%%%%%%%%%%%%%%%%%%%%%%%%%%%%%%%%%%%%%%%%%%%%%%%%
%%%%%%%%%%%%%%%%%%%%%%%%%%%%%%%%%%%%%%%%%%%%%%%%%%%%%%%%%%%%%%%%%%%
%%%%%%%%%%%%%%%%%%%%%%%%%%%%%%%%%%%%%%%%%%%%%%%%%%%%%%%%%%%%%%%%%%%

Consider the following \chors.
\begin{equation}%\label{ex:KoC}
	\adjustbox{max width = .96\textwidth}{$\begin{array}{r@{=}l}
		\chC&
		\choic{p}L{q}L{k_{\pp\pq}}{
			\lab_1\colon
			\comc{q}{x_1}{r}{a}{k_{\pq\pr}}		;
			\choic{p}{\set{\lab_1'}}{q}{\set{\lab_1'}}{k_{\pq\pr}}{
				\lab_1'\colon
				\comc{q}{y_1}{p}{z}{k_{\pp\pr}}
			}
			\\
			\lab_2\colon
			\comc{q}{x_2}{r}{a}{k_{\pq\pr}}		;
			\choic{p}{\set{\lab_2'}}{q}{\set{\lab_2'}}{k_{\pq\pr}}{
				\lab_2'\colon
				\comc{q}{y_1}{p}{z}{k_{\pp\pr}}
			}
		}
		\\
		\chC'
		&
		\choic{p}L{q}L{k_{\pp\pq}}{
			\lab_1 \colon
			\comc{q}{x_1}{r}{a}{k_{\pq\pr}}		;
			\comc{r}{y_1}{p}z{k_{\pp\pr}}	;
		   \chnil
		   \\
			\lab_2 \colon
		   \comc{q}{x_2}{r}{a}{k_{\pq\pr}}		;
			\comc{r}{y_2}pz{k_{\pp\pr}}		;
		   \chnil
	   }
	\end{array}$}
\end{equation}
We observe that $\chC'$ is not projectable, and that $\chC$ can be obtained from $\chC'$ by amendment~\cite{BB16,DLMZ13,CM23}.
The endpoint projection of $\chC$ is defined as follows.
$$
\EPP{\chC}=\proloc\pp \EPP[\pp]{\chC} \ppar \proloc\pq \EPP[\pq]{\chC} \ppar \proloc\pr \EPP[\pr]{\chC}
$$
with
$$
\begin{array}{l@{=}l}
	\EPP[\pp]{\chC}&
	\plsend{k_{\pp\pq}}{
		\begin{array}{l@{\colon}l}
			\lab_1&\precv{k_{\pp\pr}}{z} \pnil
			,\\
			\lab_2& \precv{k_{\pp\pr}}{z} \pnil
		\end{array}
	}{}
	\\
	\EPP[\pq]{\chC}&
	\plrecv{k_{\pp\pq}}{
		\begin{array}{l@{\colon}l}
			\lab_1& \psend{k_{\pq\pr}}{x_1} \plsend{k_{\pq\pr}}{\lab_1'\colon \pnil }{}
			,\\
			\lab_2& \psend{k_{\pq\pr}}{x_2}\plsend{k_{\pq\pr}}{\lab_2'\colon\pnil }{}
		\end{array}
	}{}
	\\
	\EPP[\pr]{\chC}&
		\precv{k_{\pq\pr}}{a}
		\plrecv{k_{\pp\pq}}{
			\begin{array}{l@{\colon}l}
				\lab'_1& \psend{k_{\pp\pr}}{y_1}\pnil
				,\\
				\lab'_2& \psend{k_{\pp\pr}}{y_1}\pnil
			\end{array}
		}{}
\end{array}
$$

The \proc $\EPP{\chC}$ is derivable in $\ChorL$ via the following derivation $\dD_{\EPP{\chC}}$
\begin{equation}\adjustbox{max width=\textwidth}{$
	\vlderivation{
		\vlin{\Crestr}{}{
			\named[pzbrickred]{\fof P}{\pp}
			\lpar
			\named[pzgreen]{\fof Q}{\pq}
			\lpar
			\named[orange]{\fof R}{\pr}
		}{
			\vlin{\Cchor}{}{
				\named[pzbrickred]{\fof P}{\pp}
				,
				\named[pzgreen]{\fof Q}{\pq}
				,
				\named[orange]{\fof R}{\pr}
			}{
				\multiprem{
					\vlpr{\dD_i}{}{
						\named[pzbrickred]{\lrecv{k_{\pp\pr}}{z}\lprec \lunit}{\pp}
						,
						\named[pzgreen]{
						\lsend{k_{\pq\pr}}{x_i}
						\lprec
						((\lsend{k_{\pq\pr}}{\lab'_1} \lprec \lunit)
						\lwith
						(\lsend{k_{\pq\pr}}{\lab'_2}\lprec \lunit))
					}{\pq}
					,
					\named[orange]{Q}{\pr}}
				}{i\in\set{1,2}}
			}
		}
	}
$}\end{equation}
where each derivation $\dD_i$ is defined as follows.
$$\adjustbox{max width=\textwidth}{$
\dD_i=
\vlderivation{
	\vlin{\Ccomr}{}{
		\named[pzbrickred]{\lrecv{k_{\pp\pr}}{z}\lprec \lunit}{\pp}
		,
		\named[pzgreen]{
			\lsend{k_{\pq\pr}}{x_i}
			\lprec
			((\lsend{k_{\pq\pr}}{\lab'_1} \lprec \lunit)
			\lwith
			(\lsend{k_{\pq\pr}}{\lab'_2}\lprec \lunit))
		}{\pq}
		,
		\named[orange]{Q}{\pr}
	}{
		\vlin{\Cchor}{}{
			\named[pzbrickred]{\lrecv{k_{\pp\pr}}{z}\lprec \lunit}{\pp}
			,
			\named[pzgreen]{\lwith(\lsend{k_{\pq\pr}}{\lab'_i} \lprec \lunit)}{\pq}
			,
			\named[orange]{
				(\lrecv{k_{\pp\pr}}{\lab_1'} \lprec \lsend{k_{\pp\pr}}{y_1}\lprec \lunit)
				\lplus
				(\lrecv{k_{\pp\pr}}{\lab_2'} \lprec \lsend{k_{\pp\pr}}{y_2}\lprec \lunit)
			}{\pr}
		}{
			\vlin{\Ccomr}{}{
				\named[pzbrickred]{\lrecv{k_{\pp\pr}}{z}\lprec \lunit}{\pp}
				,
				\named[pzgreen]{\lunit}{\pq}
				,
				\named[orange]{\lsend{k_{\pp\pr}}{y_i}\lprec \lunit}{\pr}
			}{
				\vlin{\Caxr}{}{
					\named[pzbrickred]{\lunit}{\pp}
				,
					\named[pzgreen]{\lunit}{\pq}
					,
					\named[orange]{\lunit}{\pr}
				}{\vlhy{}}
			}
		}
	}
}
$}$$

The \chor associated to $\dD_{\EPP{\chC}}$ is the following
$$
\begin{array}{l}
	\chCof[\dD_{\EPP{\chC}}]
	=
	\choic{p}L{q}L{k_{\pp\pq}}{
 		\lab_1 \colon
 		\chCof[\dD_1]
		\\
 		\lab_2 \colon
		\chCof[\dD_2]
	}
	=\\=
	\choic{p}L{q}L{k_{\pp\pq}}{
		\lab_1\colon
		\comc{q}{x_1}{r}{a}{k_{\pq\pr}}		;
		\choic{p}{\set{\lab_1'}}{q}{\set{\lab_1'}}{k_{\pq\pr}}{
			\lab_1'\colon
			\comc{q}{y_1}{p}{z}{k_{\pp\pr}}
		}
		\\
		\lab_2\colon
		\comc{q}{x_2}{r}{a}{k_{\pq\pr}}		;
		\choic{p}{\set{\lab_2'}}{q}{\set{\lab_2'}}{k_{\pq\pr}}{
			\lab_2'\colon
			\comc{q}{y_1}{p}{z}{k_{\pp\pr}}
		}
	}
\end{array}
$$
since \chor associated to each $\dD_i$ is the following.
$$\adjustbox{max width=\textwidth}{$\begin{array}{l}
	\chCof[\dD_i]
	=\\=
	\comc{q}{x_i}{r}{a}{k_{\pq\pr}}		;
	\chCof[
		\vlderivation{
			\vlin{\Cchor}{}{
				\named{\lrecv{k_{\pp\pr}}{z}\lprec \lunit}{\pp}
				,
				\named{\lwith(\lsend{k_{\pq\pr}}{\lab'_i} \lprec \lunit)}{\pq}
				,
				\named{
					(\lrecv{k_{\pp\pr}}{\lab_1'} \lprec \lsend{k_{\pp\pr}}{y_1}\lprec \lunit)
					\lplus
					(\lrecv{k_{\pp\pr}}{\lab_2'} \lprec \lsend{k_{\pp\pr}}{y_2}\lprec \lunit)
				}{\pr}
			}{
				\vlin{\Ccomr}{}{
					\named{\lrecv{k_{\pp\pr}}{z}\lprec \lunit}{\pp}
					,
					\named{\lunit}{\pq}
					,
					\named{\lsend{k_{\pp\pr}}{y_i}\lprec \lunit}{\pr}
				}{
					\vlin{\Caxr}{}{
						\named{\lunit}{\pp}
					,
						\named{\lunit}{\pq}
						,
						\named{\lunit}{\pr}
					}{\vlhy{}}
				}
			}
		}
	]
	=\\=
	\comc{q}{x_i}{r}{a}{k_{\pq\pr}}		;
	\choic{p}{\set{\lab_i'}}{q}{\set{\lab_i'}}{k_{\pq\pr}}{
		\lab_i'\colon
		\chCof[\vlderivation{
			\vlin{\Ccomr}{}{
				\named{\lrecv{k_{\pp\pr}}{z}\lprec \lunit}{\pp}
				,
				\named{\lunit}{\pq}
				,
				\named{\lsend{k_{\pp\pr}}{y_i}\lprec \lunit}{\pr}
			}{
				\vlin{\Caxr}{}{
					\named{\lunit}{\pp}
				,
					\named{\lunit}{\pq}
					,
					\named{\lunit}{\pr}
				}{\vlhy{}}
			}
		}]
	}
	=\\=
	\comc{q}{x_i}{r}{a}{k_{\pq\pr}}		;
	\choic{p}{\set{\lab_i'}}{q}{\set{\lab_i'}}{k_{\pq\pr}}{
		\lab_i'\colon
		\comc{q}{y_i}{p}{z}{k_{\pp\pr}}
	}
\end{array}$}$$

%%%%%%%%%%%%%%%%%%%%%%%%%%%%%%%%%%%%%%%%%%%%%%%%%%%%%%%%%%%%%%%%%%%
%%%%%%%%%%%%%%%%%%%%%%%%%%%%%%%%%%%%%%%%%%%%%%%%%%%%%%%%%%%%%%%%%%%
%%%%%%%%%%%%%%%%%%%%%%%%%%%%%%%%%%%%%%%%%%%%%%%%%%%%%%%%%%%%%%%%%%%
%%%%%%%%%%%%%%%%%%%%%%%%%%%%%%%%%%%%%%%%%%%%%%%%%%%%%%%%%%%%%%%%%%%
\end{document}